\newcommand{\qr}{\mathrm{QR}}
\newcommand{\qw}{\mathrm{QW}}
\newcommand{\su}{\mathrm{sum}}
\newcommand{\id}{\mathrm{Id}}
\newcommand{\N}{\mathbb{N}}
\newcommand{\R}{\mathbb{R}}
\newcommand{\Z}{\mathbb{Z}}
\newcommand{\floor}[1]{\lfloor #1 \rfloor}
\newcommand{\ceil}[1]{\left\lceil #1 \right\rceil}
\def\01{\{0,1\}}
\DeclareMathOperator*{\argmax}{arg\,max}
\DeclareMathOperator*{\argmin}{arg\,min}
\newcommand{\concat}{\ensuremath{+\!\!\!\!+\,}}
\newcommand{\Iset}{\mathcal{I}}
\newcommand{\indices}{\mathcal{I}}
\newcommand{\Jset}{\mathcal{J}}
\newcommand{\Kset}{\mathcal{K}}
\newcommand{\Lset}{\mathcal{L}}
\newcommand{\bldsf}{\textsc{Bipartite Longest Distinct Substring}}
\newcommand{\blds}{\textsc{BLDS}}
\newcommand{\prev}{\mathrm{prev}}
\newcommand{\suc}{\mathrm{succ}}
\newcommand{\rstart}{\texttt{rstart}}
\newcommand{\rend}{\texttt{rend}}
\newcommand{\rmiddle}{\texttt{rmiddle}}
\newcommand{\lstart}{\texttt{lstart}}
\newcommand{\lend}{\texttt{lend}}
\newcommand{\currentmax}{\texttt{currentMax}}
\newcommand{\kth}{k^{\scriptsize \mbox{{\rm th}}}}
\newcommand{\suppress}[1]{}
\newcommand{\ket}[1]{|#1\rangle}
\newcommand{\Int}[1]{\mathrm{Int}(#1)}
\newcommand{\cont}[1]{\mathbf{c-}#1}
\newcommand{\upto}{\mathbin{:}}
\DeclareMathOperator{\polylog}{polylog}
\DeclareMathOperator{\Maj}{\textsc{Maj}}
\newcommand{\MP}{\textsc{Matrix Product}}
\newcommand{\Met}{\textsc{Metricity}}
\newcommand{\MaxT}{\textsc{Maximum Triangle}}
\newcommand{\MaxSubMf}{\textsc{Maximum Submatrix}}
\newcommand{\MaxSubM}{\textsc{MSM}}
\newcommand{\MaxFourCf}{\textsc{Maximum 4-Combination}}
\newcommand{\MaxFourC}{\textsc{M4C}}
\newcommand{\SSSTf}{\textsc{Single Stock Single Transaction}}
\newcommand{\SSST}{\textsc{SSST}}
\newcommand{\APSPf}{\textsc{All-pairs Shortest Paths}}
\newcommand{\APSPc}{\bf{APSP}}
\newcommand{\APSP}{\textsc{APSP}}
\newcommand{\LSICf}{\textsc{Longest Substring of Identical Characters}}
\newcommand{\LSIC}{\textsc{LSIC}}
\newcommand{\LISstf}{\textsc{Longest Increasing Substring}}
\newcommand{\LISst}{\textsc{LISst}}
\newcommand{\ISf}{\textsc{Increasing Subsequence}}
\newcommand{\LISf}{\textsc{Longest Increasing Subsequence}}
\newcommand{\IS}{\textsc{IS}}
\newcommand{\ldsf}{\textsc{Longest Distinct Substring}}
\newcommand{\lds}{\textsc{LDS}}
\newcommand{\LMSf}{\textsc{Length Minimal Substring}}
\newcommand{\MSRf}{\textsc{Minimal String Rotation}}
\newcommand{\MSf}{\textsc{Minimal Suffix}}
\newcommand{\SSMTf}{\textsc{Single Stock Multiple Transactions}}
\newcommand{\Recognizing}{\textsc{Recognizing}}
\newcommand{\LZSf}{\textsc{Longest} $20^*2$ \textsc{Substring}}
\newcommand{\LZS}{\textsc{L}$20^*2$\textsc{S}}
\newcommand{\SSMT}{\textsc{SSMT}}
\newcommand{\SiSf}{\textsc{Signed Sum}}
\newcommand{\SiS}{\textsc{SS}}
\newcommand{\MSSTf}{\textsc{Multiple Stocks Single Transaction}}
\newcommand{\MSST}{\textsc{MSST}}
\newcommand{\KCf}{\textsc{Klee's Coverage}}
\newcommand{\KMf}{\textsc{Klee's Measure}}
\newcommand{\CSf}{\textsc{Common Subsequence}}
\newcommand{\RMP}{\textsc{Recursive Max Pooling}}
\newtheorem{theorem}{Theorem}
\newtheorem{lemma}[theorem]{Lemma}
\newtheorem{corollary}[theorem]{Corollary}
\newtheorem{proposition}[theorem]{Proposition}
\newtheorem{fact}[theorem]{Fact}
\newtheorem{openq}{Open Question}
\newtheorem*{rtheorem}{Theorem}
\theoremstyle{definition}
\newtheorem{problem}[theorem]{Problem}
\newtheorem{definition}[theorem]{Definition}
\begin{document}
\title{On the quantum time complexity of divide and conquer}
\author{
  ~~~~~~~~
  Jonathan Allcock\thanks{Tencent Quantum Laboratory, Hong Kong. \url{jonallcock@tencent.com}} ~~~ \and
  Jinge Bao\thanks{Centre for Quantum Technologies, National University of Singapore. \url{jbao@u.nus.edu}} ~~~ \and
  Aleksandrs Belovs\thanks{Faculty of Computing, University of Latvia. \url{stiboh@gmail.com}} ~~~~~~~~ \and
  Troy Lee\thanks{Centre for Quantum Software and Information, University of Technology Sydney. \url{troyjlee@gmail.com}} \and
  Miklos Santha\thanks{Centre for Quantum Technologies and MajuLab, National University of Singapore. \url{miklos.santha@gmail.com}}
}

\date{}
\maketitle

\begin{abstract}
We initiate a systematic study of the time complexity of quantum divide and conquer algorithms for classical problems.
We establish generic conditions under which search and minimization problems with classical divide and conquer algorithms are amenable to quantum speedup and apply these theorems to an array of problems involving strings, integers, and geometric objects. They include $\ldsf$, $\KCf$, several optimization problems on stock transactions, and $k$-\ISf.
For most of these results, our quantum time upper bound matches the quantum query lower bound for the problem, up to polylogarithmic factors.
\end{abstract}

\section{Introduction}
\label{sec:intro}
Divide and conquer is a basic algorithmic technique that gained prominence in the 1960s via a series of beautiful and powerful algorithms for fundamental problems including integer multiplication~\cite{KO62}, sorting~\cite{Hoa62}, the Fourier transform~\cite{CT65} and matrix multiplication~\cite{Str69}. In fact, the technique of divide and conquer was already used much earlier: Gauss designed the first fast Fourier transform algorithm in 1805, published only after his death~\cite{G1876, HJB84}, and von Neumann invented mergesort in 1945~\cite{GN47}.

Divide and conquer algorithms do not have a generic mathematical description unlike, for example, greedy algorithms.
Similarly, there are no known combinatorial structures on which they achieve optimality, unlike greedoids where the greedy solution is optimal for a large class of objective functions~\cite{KL84}. 
In order to capture the variety of divide and conquer algorithms, it is helpful to keep three examples in mind: quicksort, mergesort, and a divide and conquer algorithm for the \SSSTf\ (\SSST) problem, where the goal is to compute $\argmax_{i < j} A_j - A_i$ for a given array $A \in \mathbb{Z}^n$.  
We assume that the reader is familiar with quicksort and mergesort.  
The divide and conquer algorithm for \SSST\ is based on the principle that the indices achieving the maximum are either both 
contained in the left half of the array, both contained in the right half of the array, cases that can be solved recursively, or one is contained in the left half and one is the right half. The latter case can be directly solved because in this case $A_j - A_i$ is maximized when $A_j$ is the maximum value in the second half of $A$ and $A_i$ is the minimum value in the first half of $A$. 

With these examples in mind, we break down the work in a divide and conquer algorithm into four C's:
create, conquer, complete, and combine. 
In the {\em create} (or divide) step the algorithm constructs the subproblems to be solved. 
In mergesort, the construction of subproblems is trivial as this simply involves dividing a string into left and right halves.  In quicksort, 
on the other hand, the create step is where the main work of partitioning the input into elements at most the pivot and at least the pivot takes place.
In the {\em conquer} step the algorithm (typically recursively) solves the subproblems created in the previous step. In the {\em complete} step the 
algorithm computes anything that is needed to solve the original problem but not contained in the solution to the subproblems.  \SSST\ is a prime example of this step, as here in the complete step one solves the special case of finding the maximum profit when buying in the first half of the array and selling in the second half. 
Finally in the {\em combine} step the algorithm integrates the solutions to the subproblems and from the complete step to solve the original problem.
A canonical example of this is the merge step of mergesort. \cref{table:dandc} summarizes this discussion.
\begin{table}
\begin{tabularx}{\linewidth}{|l|X|X|}
  \hline
  & Description & Example  \\
  \hline
Create & Create subproblems. & Partition step of quicksort. \\
\hline
Conquer & Solve subproblems. & Recursive call to quicksort. \\
\hline
Complete & Compute additional information not covered by subproblems. & 
In \SSST, the maximum profit from buying in the first half, and selling in the second half.\\
\hline
Combine & Combine subproblem solutions to solve the original problem. & 
The merge step of mergesort. \\
\hline
\end{tabularx}
\caption{Breakdown of the steps in a divide and conquer algorithm.}
\label{table:dandc}
\end{table}

Classical divide and conquer algorithms are usually analyzed in the RAM model where accessing a memory register containing an
element of the input string takes constant time. Our quantum algorithms will be given in 
two quantum memory models. In both cases, we assume coherent access to the memory register, that is, the index register can store a superposition of addresses. The first model we consider is $\mathsf{QRAM}$, where the access to data is read-only, and we further assume that the data stored is classical, i.e., the memory register is not in superposition. The second model we consider is $\mathsf{QRAG}$, where both reading and writing to memory are permitted, and we assume the memory register itself can be in superposition (indeed we require this in some of our algorithms). 
By a {\em query} we mean any of these memory operations. 
While we also will derive query complexity results about our algorithms, our main emphasis will be on time complexity.
For a memory of size $N$, we denote the time required to perform a $\mathsf{QRAM}$ or $\mathsf{QRAG}$ query by parameters $\qr_N$ (quantum reading) and $\qw_N$ (quantum writing), respectively.
Standard one and two-qubit quantum operations are counted as one time step. 
The exact definition of these models 
is given in \cref{subsec:memory}.

\subsection{Our contributions}
\begin{table}
\centering
\resizebox{\textwidth}{!}{
\begin{tabular}{| l | c | c |}
\hline
& Classical Time & Quantum Time \\
\hline
&&\\[-1.0em]
\ldsf  & $\widetilde O(n)$   & $\widetilde O(n^{2/3})$ \\ 
\hline
&&\\[-1.0em]
\KCf\  in $\mathbb{R}^d$ & $O(n^{d/2}), d\ge 3$~\cite{Chan13} & $O(n^{d/4+\varepsilon}), d \ge 8$ \\
\hline
&&\\[-1.0em]
\SSSTf\ & $O(n)$  & $O(\sqrt{n} \log^{5/2}(n))$ \\
\hline
&&\\[-1.0em]
\makecell{$k$-\ISf \\ $k$-\SiSf} & \makecell{$ O(n \log \log k)$ \cite{CP08} \\ $\widetilde{O}(n)$} & $O(\sqrt{n} \log^{k+1}(n) (\log \log n)^{k-1})$ \\
\hline
&&\\[-1.0em]
\MaxFourCf & $O(n^3)$ \cite{BDT16} & $O(n^{3/2} \log^{5/2} (n))$ \\
\hline
\end{tabular}
}
\caption{Applications of our divide and conquer technique in the third column, assuming quantum memory access times $\qr_n, \qw_n=O(\log^2 n)$ (see \cref{sec:instances}).
For each sublinear quantum upper bound, there is a simple quantum query lower bound
that matches up to a polylogarithmic factor. }
\label{table:applications}
\end{table}

In this work, we focus on divide and conquer algorithms where the combine step is either the OR function or minimization/maximization, meaning that it is amenable to a quantum speedup by Grover search or quantum minimum/maximum finding.  
When the complete step is also relatively simple, we can show a generic theorem that transforms a classical divide and conquer algorithm into a quantum one and bounds its time complexity.  We show two versions of this theorem, \cref{thm:recgen} and \cref{thm:rec} depending on the nature of the create step.

Our generic quantum divide and conquer theorems have several nice features.  
The first is that, to apply them, it suffices to come up with a \emph{classical} divide and conquer algorithm satisfying the conditions of the theorem.  The quantization of this algorithm is then completely handled by the theorem. This can make it easier to find applications of 
these theorems.  The second is that these theorems give bounds on \emph{time complexity}, not just the query complexity. 
To the best of our knowledge, the quantum time complexity of the divide and conquer method has not been systematically studied before.  Working with time complexity also lets us apply these theorems to a greater range of problems, e.g.\ those where the best-known quantum algorithm requires super-linear time.

As divide and conquer algorithms are typically recursive, our theorems must handle the error probability corresponding to the composition of a super-constant number of bounded-error algorithms. 
While the composition of bounded-error quantum algorithms is now well understood in the quantum query complexity setting, much less work has focused on this in the time complexity setting. 
We show several basic results about the time complexity of the composition of quantum search or minimum/maximum finding over bounded-error subroutines. 
In most cases, these results are relatively easy adaptations of analogous algorithms in the query setting.  
However, we feel these are fairly fundamental algorithmic primitives that will find further application in the future.

The statement of our quantum divide and conquer theorems is rather technical and we delay further details to \cref{sec:bottomup} and \cref{sec:disj}. 
In the rest of this section, we describe applications of our generic theorems to different problems.
A summary of the major applications can be found in \cref{table:applications}, where we assume that the cost of the two memory access gates is $O(\log^2 n)$. 

Some problems have a simple create step, by which we mean that the locations of the subproblems are identical for all inputs of the same size. 
Problems with a simple create step are dealt with in \cref{sec:bottomup}, \cref{sec:is} and \cref{sec:apsp}.
The quantum algorithm we give for \SSSTf \ turns out to be quite paradigmatic, and we are able to use a very similar approach for 
the following three problems. 
In the \LISstf \ problem (\LISst), we have to find the longest increasing substring in an array of integers.
In the \LSICf \ problem (\LSIC), we are looking for the longest substring of identical characters in 
a string over some finite alphabet. 
Finally, 
in the \LZSf \ problem (\LZS) we have to find the longest substring belonging to $20^*2$ in a string
from $\{0,1,2\}^*$.
All these problems 
can be solved by classical divide and conquer along the lines of the algorithm we sketched for \SSSTf, 
in time  $O(n \log n)$, and they require time $\Omega(n)$.
Our algorithms achieve an almost quadratic quantum speed up.  For the rest of the paper, for $k \geq 1$, we define the function 
\[
\lambda_k (n, m) = \min \{\log^k n + m , \log^{(k+1)/2} (n) (\log \log n)^{k-1} + \log^{(k-1)/2} (n) (\log \log n)^{k-1} \cdot m \}.
\]
Observe that for $k=1$ we have $\lambda_1 (n, \qr_n) =  \log n + \qr_n,$ and for 
$k=2$ we have $\lambda_2 (n, \qr_n) = \min \{\log^2 n+ \qr_n , \log^{3/2} (n) \log \log n + \sqrt{\log n} \log \log n \cdot \qr_n \}.$

\begin{rtheorem}[\cref{thm:three} {\normalfont restated}]
The quantum query and time complexities of the problems \SSST, \LISst, \LSIC \ and \LZS \ 
are respectively $O(\sqrt{n \log n})$ and 
$O({\sqrt{n \log n} \cdot \lambda_2 (n, \qr_n}))$.
\end{rtheorem}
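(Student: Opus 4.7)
My plan is to derive all four bounds uniformly as applications of the generic quantum divide-and-conquer theorem of \cref{sec:bottomup} (namely \cref{thm:rec}), which converts a classical divide-and-conquer algorithm with a simple (location-based) create step, an efficient complete step, and a maximization combine step into a quantum algorithm of predictable cost. For each of the four problems I will exhibit a classical divide-and-conquer algorithm of exactly that shape---halve the array, solve each half recursively, and in the complete step compute the best solution that crosses the midpoint---and bound the quantum cost of the complete step; the theorem then produces the quantum algorithm and its complexity automatically.

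The complete steps are all variants of the same idea. For \SSST\ on a subarray of length $m$, the best crossing profit equals $\max_{j\in \text{right}} A_j - \min_{i \in \text{left}} A_i$, obtained by two quantum minimum/maximum finding calls on disjoint halves at a cost of $O(\sqrt{m})$ queries. For \LISst, one locates via quantum binary search the longest run of increases ending at the midpoint and the longest one starting just after it, concatenating them if monotonicity at the boundary permits. For \LSIC, one extends the run of characters equal to $A_m$ to the left and to the right of the midpoint by searching for the first differing character on each side. For \LZS, one finds the leftmost $2$ at or before the midpoint and the rightmost $2$ at or after it, and verifies via a quantum search that the interior consists only of $0$s. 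In every case the complete step can be implemented in $O(\sqrt{m}\,\polylog m)$ queries, which fits the complete-step budget required by the generic theorem.

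Plugging this complete-step cost into \cref{thm:rec} yields a query recurrence of the shape $T(n)^2 = 2\,T(n/2)^2 + O(n\,\polylog n)$, which by the master theorem solves to $T(n) = O(\sqrt{n \log n})$; this is the saving afforded by folding the two recursive calls and the complete step into a single variable-time quantum maximum search. The accompanying time bound is obtained by tracking through $\Theta(\log n)$ recursive levels both the memory-access cost $\qr_n$ and the amplification needed to preserve a constant overall success probability; the two natural strategies---uniform boosting giving $\log^2 n + \qr_n$ overhead per top-level evaluation, versus a tighter variable-time amplitude estimation giving $\log^{3/2}(n)\log\log n + \sqrt{\log n}\,\log\log n \cdot \qr_n$---compose into exactly the function $\lambda_2(n,\qr_n)$ appearing in the stated time bound.

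The main obstacle is not designing any individual complete step but handling the propagation of bounded error through $\Theta(\log n)$ nested quantum subroutines at the time-complexity (rather than query) level. Verifying that each of the four problems' complete steps meets the hypotheses of \cref{thm:rec} with the uniform cost $O(\sqrt{m}\,\polylog m)$, and then invoking the theorem's error- and time-accounting machinery to combine everything correctly, is the technical heart of the argument; once that is done, the four claimed bounds follow simultaneously.
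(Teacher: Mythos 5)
Your proposal does not follow the paper's route, and as written it has a genuine gap. First, a structural point: the theorem you invoke, \cref{thm:rec}, is not in \cref{sec:bottomup} at all; it is the $t$-decomposable-instance theorem of \cref{sec:disj}, and it is \emph{not} the tool the paper uses for \SSST, \LISst, \LSIC\ and \LZS. Those four problems are handled by the \emph{bottom-up} theorems \cref{thm:bu_easy} and \cref{thm:bu_rootn}, which unfold the recursion entirely: the answer is written as a flat double maximization $\max_t \max_{1\le k\le 2^t} P(a,t,k)$, where for each fixed $t$ one runs a single quantum maximum finding over $2^t$ intervals (each interval's crossing problem $P(a,t,k)$ costing $O(\sqrt{n/2^t})$ queries, so $O(\sqrt n)$ total per $t$), and then a small outer search over the $\log n$ values of $t$, handled by \cref{lem:two}. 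There is no nested recursion in the quantum algorithm; the $\log n$ factor in the query bound and the $\lambda_2$ term in the time bound come from that outer search over $t$, not from ``$\Theta(\log n)$ recursive levels'' or from ``variable-time amplitude estimation.''

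More importantly, the recurrence you claim, $T(n)^2 = 2T(n/2)^2 + O(n\,\polylog n)$, is not what \cref{thm:rec} produces and cannot be obtained from it. \cref{thm:rec} gives $T(n) \le c\sqrt{h}\,\bigl(T(n/m) + g(n)\bigr) + O(d(n))$ with $c>1$ coming from the error-handling in erroneous Grover/minimum finding (\cref{cor:seo}, \cref{cor:meo}). For a binary split ($h=m=2$) this is $T(n) \le c\sqrt 2\,T(n/2) + O(\sqrt n)$, whose solution is $O(n^{1/2+\log_2 c})$ --- a polynomial loss, not a polylogarithmic one. The ``squared'' recurrence you write down is an adversary-bound composition statement characteristic of the Childs et al.\ framework~\cite{CKKSW22}, which the paper explicitly describes as query-only and not applicable to time complexity. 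The whole reason the paper introduces the bottom-up technique is precisely to avoid the exponential accumulation of the constant $c$ over $\log n$ recursive levels; your proposal hand-waves over this obstacle with the phrase ``folding the two recursive calls and the complete step into a single variable-time quantum maximum search,'' but that folding is exactly the bottom-up construction, and without it the claimed bounds do not follow. Your descriptions of the four crossing subroutines ($P(a,t,k)$ for \SSST, \LISst, \LSIC, \LZS) are essentially right and match the paper, but the framework in which you try to deploy them is the wrong one and would not yield $O(\sqrt{n\log n})$.
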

 
One generalization of the stock problem is $d$-\MSSTf, for $d \geq 1$,  where we want to 
find $\max_{i < j} A_j - A_i$ in a
$d$-dimensional array $A$. Here, by definition, $i<j$ when $i_k < j_k$, for $1 \leq k \leq d.$
Our quantum algorithm easily generalizes to that problem.

\begin{rtheorem}[\cref{thm:ssmt} {\normalfont restated}]
The quantum query complexity of $d$-\MSSTf \ is
$O( (n \log n)^{d/2})$ and
its time complexity is
\[
O \left ((n \log n)^{d/2} \cdot \min \{\log^{d+1} n + \qr_{n^d} , \log^{1 + d/2} (n) \log \log n + \log^{d/2} (n) \log \log n \cdot \qr_{n^d} \} \right).
\]
\end{rtheorem}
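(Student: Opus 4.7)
The plan is to generalize the $\SSST$ divide-and-conquer algorithm by nesting the one-dimensional recursion $d$ times, once for each coordinate axis. At the outermost level, the $d$-dimensional input is split along axis~1 into a lower and an upper half, and the optimal pair $(i,j)$ with $i < j$ componentwise falls into one of three categories: both indices in the lower half or both in the upper half (each handled by recursion on the same $d$-\MSST\ problem of size $(n/2) \times n^{d-1}$), or $i$ in the lower half and $j$ in the upper half. In the last case the inequality on axis~1 is automatic, so the complete step reduces to a bipartite problem on two $(n/2) \times n^{d-1}$ arrays, asking for $\max_{i \in A,\, j \in B}(B[j]-A[i])$ subject to componentwise inequality on the remaining $d-1$ axes. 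This bipartite sub-problem has exactly the same structure with one fewer active constraint, so I would solve it recursively by splitting along axis~2, then axis~3, and so on. After $d$ levels of such nesting, the innermost complete step has no remaining inequality constraints and is solved directly by a single quantum $\max B - \min A$ computation.

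To analyze the complexity, I would apply \cref{thm:rec} at each of the $d$ nesting levels, using that every split is at the midpoint of an axis and hence a simple create step. Unrolling the $d$-fold recursion, the overall algorithm behaves like a variable-time quantum maximum search over a set of leaves indexed by depth vectors $(\ell_1, \ldots, \ell_d) \in \{0, \ldots, \log n\}^d$: at each depth vector there are $2^{\ell_1+\cdots+\ell_d}$ leaves, each computing a $\max B - \min A$ on sub-arrays of total size $\prod_k(n/2^{\ell_k})$ at cost $O(\sqrt{\prod_k n/2^{\ell_k}})$ queries. The sum of squared leaf costs is
\[
\sum_{\ell_1, \ldots, \ell_d} 2^{\ell_1 + \cdots + \ell_d} \cdot \prod_{k=1}^d \frac{n}{2^{\ell_k}} \;=\; \sum_{\ell_1, \ldots, \ell_d} n^d \;=\; O(n^d \log^d n),
\]
so variable-time Grover maximum finding yields the overall query bound $O((n \log n)^{d/2})$, specializing to the $O(\sqrt{n \log n})$ bound of \cref{thm:three} when $d=1$.

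For the time complexity, each of the $d$ nesting levels introduces the polylogarithmic overhead of bounded-error quantum max finding on top of the per-level $\sqrt{n \log n}$ query factor. Each level contributes either an additional $\log n$ factor (from the straightforward error-amplification form of \cref{thm:rec}) or an additional $\sqrt{\log n} \cdot \log \log n$ factor (from the tighter bounded-error variant), yielding the two expressions $\log^{d+1} n + \qr_{n^d}$ and $\log^{1+d/2}(n) \log \log n + \log^{d/2}(n) \log \log n \cdot \qr_{n^d}$, respectively. The memory term $\qr_{n^d}$---rather than a smaller $\qr_{n^{d-k}}$ at level $k$---is correct because the innermost leaf computations always read the original input array through a superposition of full $d$-dimensional indices. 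The main obstacle is orchestrating the $d$-fold nested bounded-error recursion so that the error probabilities at each level compose correctly and the polylog overheads multiply geometrically rather than accumulating into an exponential-in-$d$ blow-up; this requires invoking the appropriate variant of \cref{thm:rec} uniformly at every nesting level and verifying that the bipartite sub-problems fit its framework despite involving two sub-arrays instead of one.
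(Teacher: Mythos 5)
Your decomposition into leaves indexed by depth vectors $(\ell_1,\ldots,\ell_d)\in\{0,\ldots,\log n-1\}^d$, with $2^{\sum_k\ell_k}$ boxes per vector each of volume $n^d/2^{\sum_k\ell_k}$, is exactly the combinatorial structure the paper obtains by considering the lengths $t_k$ of the common binary prefixes of $i_k$ and $j_k$. The crossing leaf computation is also the same: in a box $B$ the sub-box $B_0$ of lower coordinate halves is componentwise below the sub-box $B_1$ of upper halves, so the leaf value is just $\max_{B_1}A-\min_{B_0}A$, computable in $O(\sqrt{|B|})$ queries.

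Where your plan diverges is the combine step, and there it has a gap. The paper never invokes variable-time search: it exploits the fact that for a \emph{fixed} depth vector all $2^{\sum_k\ell_k}$ boxes have the same size, so a single bounded-error quantum maximum finding (\cref{cor:meo}) over those boxes already costs $O(n^{d/2})$ queries and $O(n^{d/2}(\log n+\qr_{n^d}))$ time, uniformly in $(\ell_1,\ldots,\ell_d)$. It then applies \cref{lem:two} \emph{once}, flatly, over the $J=\log^d n$ depth vectors. This flat structure is what produces the stated time bound: in option 1 of \cref{lem:two} the $\qr$ term is multiplied by $q_{\max}=O(n^{d/2})$ and by $\sqrt J=\log^{d/2}n$, leaving a bare $\qr_{n^d}$ inside the parenthesis after factoring out $(n\log n)^{d/2}$; in option 2 the error-amplification overhead is $\log J=d\log\log n=O(\log\log n)$, a single factor of $\log\log n$.

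Your plan of $d$ nested levels of bounded-error maximum finding, each over $\log n$ depth choices, would not reproduce this. Option 2 applied at each level incurs a $\log\log n$ amplification factor \emph{per level}, yielding $(\log\log n)^d$ rather than $\log\log n$; option 1 applied at each level lets the $\qr$ term pick up roughly a $\log n$ factor per level, yielding $\log^{d}n\cdot\qr_{n^d}$ rather than $\qr_{n^d}$. So the hand-wavy claim that ``each level contributes a $\log n$ factor'' does not assemble into the theorem's formula. You flag in your last paragraph that the composition of overheads must be verified; the resolution is not a delicate verification of the nested recursion, but to drop the nesting entirely after the unrolling step and do the paper's two-stage combine (uniform min-finding over same-size boxes, then a single \cref{lem:two} over the $\log^d n$ depth vectors), which also removes any need for variable-time search.
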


In the $k$-\ISf \ problem, we would like to decide if, in an array of integers, there is a subsequence of $k$ increasing numbers.
As discussed in~\cite{CKKSW22}, this is the natural parametrization of the classically well-studied \LISf \ problem, closely related to patience sorting. There are $O(n \log n)$ query classical dynamic programming algorithms solving \LISf, and it is easy to show an $\Omega(n)$ quantum query lower bound for it. This implies that no substantial quantum improvement can be obtained for $k$-\ISf \ when $k$ is unbounded, and makes the case of constant $k$ an interesting research question. In~\cite{CKKSW22} an $O(\sqrt{n \log^{3(k-1)} n})$ quantum query algorithm was obtained
for $k$-\ISf, and we improve this result by a factor of $O(\log^{k-1} n)$ and implement it time-efficiently.
It turns out that a very similar quantum algorithm can solve the $k$-\SiSf \ problem with the same complexity.
In this problem, given an array of integers and a sign pattern $\varepsilon \in \{-1,1\}^k$, 
we want to maximize the signed sum $\sum_{m =1}^k  \varepsilon_m A_{i_m}$, for  $k$ indices satisfying
$i_1  < i_2< \ldots < i_k$.

\begin{rtheorem}[\cref{thm:k-is} and \cref{thm:k-sis} {\normalfont restated}]
There are quantum algorithms that solve $k$-\ISf \ and $k$-\SiSf \ with $O( \sqrt{n \log^{k-1} n})$ queries. 
The  time complexity of the algorithms is $O(\sqrt{n \log^{k-1} n} \cdot \lambda_k(n,\qr_n))$.
\end{rtheorem}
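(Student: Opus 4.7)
For $k$-\IS\ I would reduce to a parametrized optimization problem that closes under the divide-and-conquer split. Given an array $A$, a real threshold $t$, and a parameter $j\le k$, let $\mu_j(A,t)$ denote the minimum possible value of the last element of a strictly increasing subsequence of $A$ of length $j$ whose entries all exceed $t$, with $\mu_j(A,t)=+\infty$ if no such subsequence exists; then deciding $k$-\IS\ on $A$ reduces to testing whether $\mu_k(A,-\infty)<+\infty$. Writing $A=L\cdot R$ for the usual index bipartition,
\[
\mu_j(A,t)=\min\Bigl\{\mu_j(L,t),\ \mu_j(R,t),\ \min_{1\le j_1\le j-1}\mu_{j-j_1}\bigl(R,\,\mu_{j_1}(L,t)\bigr)\Bigr\},
\]
because the $L$-part of any optimal split should contribute the smallest possible ``pivot value'', by monotonicity of $\mu_{j-j_1}(R,\cdot)$ in its threshold.

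\textbf{Applying the generic theorem.} Since the create step is trivial (index halving), I would invoke \cref{thm:rec} and induct on $j$. The base case $\mu_1(\cdot,t)$ is quantum minimum finding among entries above $t$, in time $\widetilde O(\sqrt n)$. For $j>1$, the two $\mu_j$-calls on the halves serve as the recursive subproblems of \cref{thm:rec}, while the complete step enumerates $j_1\in\{1,\dots,j-1\}$ and, for each such $j_1$, sequentially calls $\mu_{j_1}(L,t)$ and $\mu_{j-j_1}(R,\cdot)$; every branch of the complete step uses a strictly smaller inductive parameter, dominated by $j-1$. Feeding $T_{j-1}(n)=O(\sqrt{n\log^{j-2}n}\cdot\lambda_{j-1}(n,\qr_n))$ into \cref{thm:rec} yields a query bound $O(\sqrt{n\log^{j-1}n})$ and time bound $O(\sqrt{n\log^{j-1}n}\cdot\lambda_j(n,\qr_n))$, closing the induction at $j=k$.

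\textbf{Signed sums.} For $k$-\SiS\ I would use the analogous parametrized subproblem $\sigma_j(A,\varepsilon')=\max\{\sum_{m=1}^{j}\varepsilon'_m A_{i_m}\colon i_1<\dots<i_j\text{ in }A\}$, where $\varepsilon'$ is a prefix of the sign pattern of length $j$. On a bipartition the recurrence becomes $\sigma_j(A,\varepsilon')=\max_{0\le j_1\le j}\bigl[\sigma_{j_1}(L,\varepsilon'_{1:j_1})+\sigma_{j-j_1}(R,\varepsilon'_{j_1+1:j})\bigr]$, which is structurally identical to (and slightly simpler than) the $k$-\IS\ case, since the two halves now decouple completely rather than being chained through a threshold. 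The induction goes through with minor notational changes, giving the same bounds.

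\textbf{Main obstacle.} The delicate point is the composition of bounded-error subroutines: the algorithm nests quantum search and minimum finding to total depth $\Theta(k\log n)$, and controlling error probability across all nested calls requires care. Naive amplification at every level produces the first term $\log^k n+\qr_n$ inside the $\min$ defining $\lambda_k$; recovering the sharper second term requires the composition primitives for minimum finding over bounded-error oracles that accompany \cref{thm:rec}. A secondary concern is the $\mathsf{QRAG}$ interface: the threshold $t$ (and, for $k$-\SiS, the active suffix of $\varepsilon$) must be stored in and read from memory so that the deeper recursive call treats it as part of its input, which is why $\qr_n$ enters $\lambda_k$ as a multiplicative per-level cost rather than an additive constant.
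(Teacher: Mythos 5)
Your central combinatorial identity is close to the right shape, and your helper function $\mu_j(A,t)$ is essentially the paper's $F_j(A,\gamma)$. But the way you propose to drive the recursion does not yield the claimed bound, and the gap is structural, not cosmetic.

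You write the recurrence
\[
\mu_j(A,t)=\min\Bigl\{\mu_j(L,t),\ \mu_j(R,t),\ \min_{1\le j_1\le j-1}\mu_{j-j_1}\bigl(R,\,\mu_{j_1}(L,t)\bigr)\Bigr\}
\]
and propose to feed it into \cref{thm:rec} with $h=m=2$, with the two $\mu_j$-calls on $L$ and $R$ as the recursive subproblems. The problem is the constant $c>1$ in the recurrence of \cref{thm:rec}, which comes from the erroneous-oracle minimum finding of \cref{cor:meo}: you get $T_j(n)\le c\sqrt{2}\,T_j(n/2)+(\text{complete cost})$, hence $T_j(n)=\Omega\bigl((c\sqrt{2})^{\log_2 n}\bigr)=\Omega(n^{1/2+\log_2 c})$. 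For any constant $c>1$ this is a genuine polynomial overhead, far worse than the target $O(\sqrt{n\log^{k-1}n})$. This is exactly why the paper's other applications of \cref{thm:recgen} and \cref{thm:rec} choose $h$ to be a large constant (e.g.\ $h=2c^6$ in \cref{thm:qblds}) to dominate $c$; for $k$-\ISf\ there is no analogous large-$h$ decomposition available.

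The paper's proof of \cref{thm:k-is} avoids the $n$-recursion entirely: it recurses \emph{only on the parameter $k$}, and for each fixed $k$ it uses the \emph{bottom-up} structure of \cref{sec:bottomup}. Concretely, it argues that the optimal indices $i_1<\cdots<i_k$ live in a unique interval $C$ (determined by the longest common binary prefix $t$ of $i_1$ and $i_k$) with $i_1$ in the left half $C_\ell$ and $i_k$ in the right half $C_r$. On that \emph{crossing} interval the identity simplifies to
\[
F_k(C,\gamma)=\min_{j\in[k-1]} F_{k-j}\bigl(C_r,\,F_j(C_\ell,\gamma)\bigr),
\]
with no $F_k(C_\ell,\cdot)$ or $F_k(C_r,\cdot)$ terms, because the non-crossing cases are handled by other choices of $t$. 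Then a single layer of \cref{fact:qmeo}/\cref{cor:meo} over the $2^t$ intervals, followed by a single layer (via \cref{lem:two}) over the $\log n$ prefix lengths, incurs the constant $\alpha$ only $O(1)$ times per level of the $k$-induction, giving $(2\alpha)^{2k}$ overall instead of $c^{\log n}$. The same fix applies to your $k$-\SiSf\ recurrence: the paper's identity restricts $j$ to $[k-1]$ (not $\{0,\ldots,k\}$) precisely because it is stated on the crossing interval.

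Two smaller points. First, your claim that the threshold must be stored in $\mathsf{QRAG}$ memory is off: the paper's algorithm is stated in the $\mathsf{QRAM}$ model (beginning of \cref{sec:is}); $\gamma$ is an $O(\log n)$-bit value passed in a work register, not queried from memory. Second, the two branches of $\lambda_k(n,\qr_n)$ come from the two alternatives in \cref{lem:two} for handling the search over prefix lengths, not from a per-level $\mathsf{QRAG}$ read.
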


\cref{sec:disj},\cref{sec:lds}, and \cref{sec:klee} deal with problems with simple complete steps, and hence the main work lies in the create step. One interesting example of this is the \ldsf\ ($\lds$) problem.  Given a string $a \in \Sigma^n$ over an alphabet $\Sigma$, this 
problem is to find the longest contiguous substring $a_i a_{i+1}\cdots a_j$ where all letters are unique.  The famous Element Distinctness problem is a special case of $\ldsf$ where the task is to determine if the length of a longest distinct substring is equal to the length of $a$ itself.  
Ambainis \cite{ambainis2007quantum} famously gave a quantum walk algorithm showing the time complexity of element distinctness is $\tilde O(n^{2/3})$, which is tight \cite{aaronson2004quantum}. 
We apply our divide and conquer theorem, (\cref{thm:rec}), in conjunction with a novel classical divide and conquer algorithm for \lds, to show that a quantum algorithm can solve \lds \ in time $\widetilde{O}(n^{2/3}\cdot \qw_{O(n)})$. 
Thus, up to logarithmic factors, finding the longest distinct substring has the same quantum time complexity as element distinctness. 

\begin{rtheorem}[\cref{thm:lds} {\normalfont restated}]
The quantum time complexity of the \ldsf\ problem is $\widetilde{O}(n^{2/3}\cdot \qw_{O(n)})$.
\end{rtheorem}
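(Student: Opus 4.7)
The plan is to design a classical divide and conquer algorithm for $\lds$ that fits the template of \cref{thm:rec}, and to supply the missing ingredient---a quantum subroutine for the middle-crossing case---whose time complexity matches element distinctness. Concretely, I would arrange the recursion so that the combine step is a max over a constant number of candidates (handled by quantum max-finding) and the complete step is a bipartite variant $\blds$ for which there exists a quantum walk algorithm running in time $\widetilde O(n^{2/3}\cdot \qw_{O(n)})$.

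For the classical divide and conquer, split $a = a_1\cdots a_n$ at the midpoint $m = \lfloor n/2 \rfloor$ and recurse on the two halves of size $n/2$. The longest distinct substring either lies entirely in one half or straddles position $m$; the straddling case defines $\blds$ on the ordered pair of halves, namely, find $i \leq m < j$ maximizing $j - i + 1$ subject to $a_i, a_{i+1}, \ldots, a_j$ being pairwise distinct. Since the create step is input-independent (mere bisection) and the combine step outputs the largest of three candidate lengths, this fits the syntactic hypotheses of \cref{thm:rec} with maximization as the combine function.

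For the complete step, I would give a quantum walk for $\blds$ modeled on Ambainis's element distinctness algorithm. A decision version asks whether there is a distinct straddling substring of length at least $\ell$; the walk maintains a random subset of positions together with their characters in a Johnson-graph database, and marks a vertex when the stored sample witnesses a collision that excludes \emph{every} window of length $\ell$ through $m$ from being distinct. Standard analysis then gives query and time complexity $\widetilde O(n^{2/3}\cdot \qw_{O(n)})$ for the decision problem, and wrapping this in a binary search over $\ell$ (or in quantum minimum finding over the possible left endpoints) yields the optimization version with polylogarithmic overhead. Plugging this subroutine into \cref{thm:rec} produces a quantum algorithm satisfying $T(n) = 2T(n/2) + \widetilde O(n^{2/3}\cdot \qw_{O(n)}) = \widetilde O(n^{2/3}\cdot \qw_{O(n)})$, with the bounded-error composition over the $\Theta(\log n)$ recursion levels absorbed by the theorem.

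The main obstacle I anticipate is the construction and analysis of the quantum walk for $\blds$. While the decision version closely resembles element distinctness, I need to choose the marked-vertex predicate so that ``collision plus length $\geq \ell$ through $m$'' becomes a local condition checkable on the stored database in polylogarithmic time, and I then have to promote decision to optimization \emph{time-efficiently} rather than merely query-efficiently. Controlling the error blow-up across the binary search and the logarithmic recursion depth, while keeping the $\qw_{O(n)}$ factor tight, is the delicate part; everything else is routine bookkeeping once \cref{thm:rec} is in hand.
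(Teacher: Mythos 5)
Your proposal has two significant gaps. First, the outer recursion does not solve to the claimed bound. You write $T(n) = 2T(n/2) + \widetilde O(n^{2/3}\cdot\qw_{O(n)})$ and claim this equals $\widetilde O(n^{2/3}\cdot\qw_{O(n)})$, but by the master theorem this recurrence gives $T(n) = \Theta(n \cdot \qw_{O(n)})$, since $n^{\log_2 2} = n$ dominates $n^{2/3}$. Even if instead you apply \cref{thm:rec} with $h=2, m=2$, the recurrence becomes $T(n) \le c\sqrt{2}\,(T(n/2) + \cdots)$, where $c>1$ is the overhead constant from \cref{cor:seo}. The homogeneous part then solves to $n^{\log_2(c\sqrt{2})} = n^{1/2 + \log_2 c}$, which already exceeds $n^{2/3}$ unless $c < 2^{1/6}$ --- a condition you have no control over. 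The paper avoids this compounding across $\log n$ levels of recursion by using the bottom-up framework (\cref{thm:bu_easy}): it enumerates the $\log n$ scales $t$ non-recursively and at each scale applies quantum max with an erroneous oracle over the $2^t$ crossing problems $P(a,t,k)$; the per-scale cost decreases geometrically, so the sum stays $\widetilde O(n^{2/3})\cdot\qw_{O(n)}$.

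Second, you defer the core difficulty---the quantum walk for $\blds$---and the sketched walk does not obviously work. Ambainis's Johnson walk exploits the fact that element distinctness has a small certificate for the YES side (a single collision). For your decision version ``is there a distinct crossing window of length $\ge \ell$?'', the YES certificate requires exhibiting $\ell$ pairwise-distinct characters (not small), while the NO certificate (``every length-$\ell$ crossing window contains a collision'') can require $\Omega(\ell)$ distinct collision pairs, since a collision $(p,q)$ with $q - p = \ell - 1$ kills only one window. Neither side fits the small-certificate structure the walk needs, so ``standard analysis'' does not apply. The paper does not attempt such a walk; instead it makes $\blds$ itself a $2$-decomposable instance $(h,h)$-maximizing problem solved via \cref{thm:rec}. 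The structural key is \cref{lem:lds} (peeling off a common distinct middle block reduces $\lds$ to a smaller bipartite instance), which yields the decomposition of \cref{cor:h}; the create function only needs to locate the boundary indices $L(\alpha,\Iset,\cdot), R(\alpha,\Iset,\cdot)$, which \emph{can} be done via element distinctness plus noisy binary search in time $\widetilde O(n^{2/3})\cdot\qw_{O(n^*)}$. Choosing the constant $h = 2c^6$ large enough then tames the error-amplification constant $c$ in the $\blds$ recurrence.
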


Another example with a simple complete step is the $\KMf$ problem from computational geometry, which asks to compute the volume of the union of 
axis-parallel hyperrectangles in $d$-dimensional real space. In the special case of the \KCf \ problem, the question is to decide if the union of the hyperrectangles covers a given base hyperrectangle.
In 2-dimensions, the classical complexity of the $\KMf$ problem is $O(n \log n)$~\cite{Klee77}, and for any constant $d \geq 3$, Chan~\cite{Chan13} has designed an $O(n^{d/2})$ time classical algorithm for it.
We give a quantum algorithm for $\KCf$ that achieves almost quadratic speedup over the classical divide and conquer algorithm of Chan~\cite{Chan13}, when $d \geq 8$.

\begin{rtheorem}[\cref{thm:klee} {\normalfont restated}]
For every constant $\varepsilon > 0$, the quantum time complexity of the \KCf \ problem is $O(n^{d/4 + \varepsilon} \cdot \qw_{O(n^{d/2+\epsilon})})$ when $d \geq 8$, and is ${O}(n^2 \log n \cdot \qw_{O(n^{d/2+\epsilon})})$ for $5\le d\le 7$.
\end{rtheorem}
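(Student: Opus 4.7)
The plan is to quantize the classical divide and conquer algorithm of Chan~\cite{Chan13} for Klee's Measure, specialize it to the decision version ($\KCf$), and cast the result in the framework of \cref{thm:rec}, which is designed for problems whose divide step is non-trivial but whose complete step is essentially vacuous. Chan's algorithm picks a coordinate axis, chooses $\Theta(\sqrt{n})$ slab boundaries along that axis, clips each input rectangle to the resulting slabs (a rectangle straddling several slabs contributes a clipped copy to each), and recursively tests coverage within every slab. The base hyperrectangle is covered if and only if every slab is covered, so the base is \emph{uncovered} exactly when \emph{some} slab is uncovered; the combine step is therefore the OR that our generic theorem requires.

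Next I would verify the four C's. The \emph{create} step consists of sorting along the chosen axis, selecting the cutting, and routing clipped rectangles into slabs; each slab receives $O(\sqrt{n})$ rectangles, and the output of a given recursive call is written into $\mathsf{QRAG}$ memory so that the $i$-th subproblem can be prepared coherently from the index $i$. Summed across the recursion tree this uses $O(n^{d/2+\varepsilon})$ memory cells, which is the source of the $\qw_{O(n^{d/2+\varepsilon})}$ factor in the statement. The \emph{conquer} step is a bounded-error quantum search over the $\Theta(\sqrt{n})$ slabs, invoking the recursive coverage subroutine on each. The \emph{complete} step is trivial, since clipping removes the need to separately account for crossing rectangles, and the \emph{combine} step is the OR above. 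Applying \cref{thm:rec} yields a recurrence of the shape
\[
T_d(n) \;=\; \widetilde O(n^{1/4}) \cdot T_d(O(\sqrt{n})) \;+\; (\text{create cost}),
\]
which unrolls over its $O(\log \log n)$ levels to $\widetilde O(n^{d/4})$ once one plugs in the classical $d=2$ algorithm in $O(n\log n)$ at the base. Absorbing all $\polylog(n)$ overhead---from bounded-error composition, from memory access, and from cutting construction---into $n^{\varepsilon}$ produces the claimed $O(n^{d/4+\varepsilon}\cdot \qw_{O(n^{d/2+\varepsilon})})$ upper bound whenever $n^{d/4}$ dominates, i.e.\ for $d \ge 8$.

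The main obstacle is twofold. First, making the create step coherent and reversible: the cutting construction and clipping must be implementable as a unitary that maps $\ket{i}\ket{0}$ to a description of the $i$-th subproblem's input, and one must account for the $\mathsf{QRAG}$ memory consumed by \emph{every} level of the recursion simultaneously, which drives the $O(n^{d/2+\varepsilon})$ memory bound. Second, handling the low-dimensional regime $5 \le d \le 7$: here $n^{d/4} < n^2$, yet the divide and conquer structure still forces an $\widetilde O(n^2)$ cost somewhere---either in preprocessing rectangle boundaries or in the lowest-dimensional instances produced by repeated dimension peeling---which the present framework does not eliminate, so we settle for $O(n^2 \log n \cdot \qw_{O(n^{d/2+\varepsilon})})$ in that range. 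Error propagation across the $O(\log \log n)$ recursive levels is handled by the bounded-error quantum search composition lemmas developed in \cref{sec:bottomup}, so no new composition argument is required beyond invoking \cref{thm:rec}.
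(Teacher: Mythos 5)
There are several concrete gaps in your proposal, and the most important one is that you have the recursion shape wrong.

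First, you invoke \cref{thm:rec}, the theorem for $t$-decomposable instance problems. But the subproblems in Chan's algorithm are \emph{not} subsequences of the input: the boxes are clipped to slabs, slab-shaped boxes are eliminated (a ``simplification'' step that rewrites all the remaining coordinates), and the base box itself is rewritten. The constitutive strings must therefore be explicitly computed and written to $\mathsf{QRAG}$ memory, which is precisely the setting of \cref{thm:recgen} (constructible instance problems), not \cref{thm:rec}. The paper uses \cref{thm:recgen}, and \cref{thm:rec} simply cannot be applied here.

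Second, you describe Chan's divide step as choosing $\Theta(\sqrt{n})$ slab boundaries and recursing on $\Theta(\sqrt{n})$ subproblems, giving a recurrence of the form $T_d(n) = \widetilde O(n^{1/4})\, T_d(O(\sqrt{n})) + (\text{create cost})$ with $O(\log\log n)$ depth, closed out by a $d=2$ base case. That is not Chan's algorithm, and it is not the recurrence the paper derives. Chan cuts the base box into a \emph{constant} number $h$ of sub-boxes (the paper generalizes his cut-into-$2$ to cut-into-$h$ for a large constant $h$), and the instance size shrinks by the constant factor $m = h^{2/d}\kappa^{-1}$ per level because the cut is chosen so that a carefully designed \emph{weight function} on $(d-2)$-faces drops by a factor $h^{2/d}$. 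Crucially, $d$ is held fixed throughout the recursion; there is no dimension peeling and no $d=2$ base case. The paper's recurrence is
\[
\overline{T}(n^*, n) \leq c\,h^{1/2}\bigl(\log h + \overline{T}(n^*, n/m)\bigr) + O\bigl(n^2 \cdot \qw_{N(n^*)}\bigr),
\]
with $h$ and $m$ constants and depth $O(\log n)$. Your recurrence does not have the right branching factor or reduction rate, and the claimed $O(\log\log n)$ depth is an artifact of the incorrect $\Theta(\sqrt{n})$-branching picture.

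Third, you are missing the three concrete ingredients that make the create step well-defined and analyzable: the \emph{simplification} step that iteratively eliminates slab-shaped boxes in $O(n^2 \cdot \qw)$ time so that every surviving box has a $(d-2)$-face meeting $\Int\Gamma$; the \emph{weight function} $w(f) = h^{(i+j)/d}$ on $(d-2)$-faces that quantifies progress and justifies the reduction factor $m$; and the \emph{padding} step that makes all $h$ constitutive strings have exactly the same length $\kappa|a|/h^{2/d}$, which \cref{thm:recgen} requires. Your statement that ``clipping removes the need to separately account for crossing rectangles'' is not right; the crossing (slab) boxes are exactly what simplification eliminates, and without it the weight analysis and hence the recursion rate do not hold. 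These are not bookkeeping details; they are what make the recurrence close with the exponent $d/4$.

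Finally, a smaller point: the composition lemmas for bounded-error search are from \cref{cor:seo}/\cref{cor:meo} in \cref{subsec:qroutines} and the disjunctive framework of \cref{sec:disj}, not from \cref{sec:bottomup}, and the error is managed per level via \cref{thm:recgen} rather than by any $O(\log\log n)$-level argument.
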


Perhaps not surprisingly, our results have some consequences for fine-grained complexity, in particular for the class $\APSPc$ of problems that are solvable in time $\widetilde{O}(n^{3})$ on a classical computer and are sub-$n^3$ equivalent to the \APSPf \ problem in the sense that either all of them or none of them admit an $O(n^{3- \varepsilon})$ algorithm, for some constant $\varepsilon > 0$.
$\APSPc$ is one of the richest classes in fine-grained complexity theory~\cite{VW18, Vas19} and, in particular, contains various path, matrix, and triangle problems. 

Quantum fine-grained complexity is a relatively new research area~\cite{ACLWZ20, BPS21, BLPS22} where one possible direction is to study the quantum complexity of problems in the same classical fine-grained equivalence class.
Indeed, the work~\cite{ABLPS22} specifically considered $\APSPc$. 
Of course, there is no guarantee that classically equivalent problems remain equivalent in the quantum model of computing, and this is indeed the case for $\APSPc$. All problems in the class receive some quantum speedup, but the degree of speedup can differ from problem to problem. It turns out that many of the problems in $\APSPc$ can be solved either in time $\widetilde{O}(n^{5/2})$ or in time $\widetilde{O}(n^{3/2})$ by simple quantum algorithms and, concretely, $\APSPf$ falls in the former category. 
We consider the quantum complexity of two problems from the class $\APSPc$:
$\MaxFourCf$ and $\MaxSubMf$, both of which take an $n \times n$ matrix $B$ as input.
We want to compute, for the former, the maximum of $ B_{ik} + B_{j \ell} - B_{i \ell} - B_{jk} $ and, for the latter, the maximum of $\sum_{i \leq u \leq j, k \leq v \leq \ell} B_{uv}$, under the conditions $1 \leq i  \leq j \leq n$ and $1 \leq k  \leq \ell \leq n$. Our quantum algorithm for $\MaxFourCf$ uses the quantum divide and conquer method designed for \SSSTf.

\begin{rtheorem}[\cref{thm:msm} and \cref{thm:mfc} {\normalfont restated}]
The quantum time complexity of \MaxSubMf\ is $O(n^{2} \log n)$ and its query complexity is $\Omega(n^2).$
The quantum time complexity of \ \MaxFourCf ~is ${O}(n^{3/2} \log^{5/2} (n)).$
\end{rtheorem}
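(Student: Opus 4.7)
The plan is to handle the two parts separately, with the \MaxSubM\ bound following from two-dimensional prefix sums plus Grover-style maximum finding, and the \MaxFourC\ bound from an algebraic decoupling that reduces it to $O(n^2)$ instances of \SSST\ solved with the quantum subroutine already established earlier in the paper.

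For \MaxSubM, I would first build, in $O(n^2)$ $\mathsf{QRAG}$ time, the two-dimensional prefix-sum table $P_{u,v}=\sum_{a\le u,\,b\le v}B_{a,b}$ via the recurrence $P_{u,v}=P_{u-1,v}+P_{u,v-1}-P_{u-1,v-1}+B_{u,v}$ evaluated in row-major order. Given $P$, the sum over any rectangle $[i,j]\times[k,\ell]$ equals $P_{j,\ell}-P_{i-1,\ell}-P_{j,k-1}+P_{i-1,k-1}$, which is an \emph{exact} function of the index quadruple computable by four $\mathsf{QRAM}$ reads. Running quantum maximum finding over the $O(n^4)$ valid quadruples then uses $O(n^2)$ such evaluations, delivering the claimed $O(n^2\log n)$ time bound. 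For the matching query lower bound, I restrict to $B\in\{0,1\}^{n\times n}$: taking $i=k=1$, $j=\ell=n$ shows that the optimal submatrix sum equals the Hamming weight of $B$, so \MaxSubM\ computes exact counting of $n^2$ bits, which requires $\Omega(n^2)$ quantum queries by the standard polynomial-method lower bound.

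For \MaxFourC, the key algebraic observation is
\[
B_{ik}+B_{j\ell}-B_{i\ell}-B_{jk}=\bigl(B_{j\ell}-B_{jk}\bigr)-\bigl(B_{i\ell}-B_{ik}\bigr),
\]
so for each fixed outer pair $(k,\ell)$ with $k\le\ell$, defining the virtual length-$n$ array $A^{(k,\ell)}_m := B_{m\ell}-B_{mk}$ reduces the inner maximization to $\max_{i\le j}\bigl(A^{(k,\ell)}_j-A^{(k,\ell)}_i\bigr)$, which is precisely an \SSST\ instance. Each entry of $A^{(k,\ell)}$ is computed from two $\mathsf{QRAM}$ reads of $B$, so the quantum \SSST\ algorithm of the restated \cref{thm:three} solves each inner instance in time $O(\sqrt{n}\log^{5/2}n)$. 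I would then apply quantum maximum finding over the $O(n^2)$ outer pairs, with these bounded-error \SSST\ calls as the evaluation oracle, for an overall cost of roughly $O(n)\cdot O(\sqrt{n}\log^{5/2}n)=O(n^{3/2}\log^{5/2}n)$.

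The main technical obstacle is not the algebraic decoupling for \MaxFourC, which is immediate once spotted, but the careful bookkeeping of error and time in the outer maximum-finding layer: the outer routine must tolerate and amplify the bounded error of the inner \SSST\ subroutines without incurring more than polylogarithmic overhead, and the virtual array $A^{(k,\ell)}$ must be accessed coherently so the inner algorithm can be invoked without modification. The composition results for quantum search and maximum finding over bounded-error subroutines developed earlier in the paper provide the necessary machinery, so the remaining work is to invoke them with the right parameters and verify that the polylogarithmic factors accumulate into the claimed $\log^{5/2}n$ overhead rather than anything larger.
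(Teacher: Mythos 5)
Your upper-bound arguments for \MaxSubM\ and your whole treatment of \MaxFourC\ follow the paper's approach, with one cosmetic difference in \MaxFourC: you fix the column pair $(k,\ell)$ and define $A^{(k,\ell)}_m = B_{m\ell}-B_{mk}$ indexed by rows, whereas the paper fixes the row pair $(i,j)$ and defines $A^{ij}_m = B_{jm}-B_{im}$ indexed by columns. These are transposes of each other and the analysis is identical; either way one runs the time-efficient quantum \SSST\ routine as a bounded-error inner oracle inside a quantum maximum-finding outer loop over $O(n^2)$ pairs, giving $O(n)$ oracle calls of cost $O(\sqrt{n}\log^{5/2}n)$ each.

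The genuine gap is in your query lower bound for \MaxSubM. The paper defines \MaxSubM\ as producing $\arg\max_{i\le j,\,k\le\ell} S_B((i,k),(j,\ell))$, i.e.\ the \emph{indices} of an optimal rectangle, not its value. For $B\in\{0,1\}^{n\times n}$ every entry is nonnegative, so the maximizing rectangle is always the full matrix $(1,1),(n,n)$ regardless of $B$; the algorithm's output therefore carries no information about the Hamming weight, and your reduction from exact counting does not go through. Your argument would be fine if \MaxSubM\ were defined to output the optimum \emph{value}, but it is not, and one cannot extract the value from the indices with few extra queries (summing the designated submatrix costs $\Omega(n^2)$ queries on its own). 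The paper's reduction is from \textsc{Majority} and is crafted precisely to sidestep this: it embeds the $0/1$ input $B$ into one block of a larger matrix, places a fixed comparison block $D$ with exactly $\floor{n^2/2}$ ones in a disjoint region, and separates the two with a column of large negative entries plus a negative frame. The argmax then lands in the $B$-block exactly when $B$ has more than $\floor{n^2/2}$ ones, so reading off which block the returned rectangle lies in decides \textsc{Majority}, giving the $\Omega(n^2)$ query lower bound for the argmax formulation.
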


\subsection{Our techniques} 
As mentioned, our techniques all suppose that the combine step of the divide and conquer is search or minimization and, moreover, the complete or the create steps (or both) are relatively simple.
We now describe in more detail how we are able to exploit these properties.

{\bf Search or minimization combine step.} The key technical tool we will use repeatedly is based on  a relatively old result of H{\o}yer, Mosca and de Wolf~\cite{hoyer2003quantum}, see \cref{fact:qseo}. 
Their result is stated for query complexity and essentially says that if, for $J$ boolean functions $f_1, \ldots , f_J:\{0,1\}^n\rightarrow\{0,1\}$ there exists a quantum algorithm $F$ that on input $\ket{j} \ket{x} $ correctly computes $f_j(x)$ with probability at least $8/10$, then 
there exists a quantum algorithm which uses ${O}(\sqrt{J} )$ repetitions of $F$ and with probability at least $9/10$ finds a marked index, that is $1 \leq j \leq J$ such that $f_j(x)=1$, if there is one.
The main point here is that the number of repetitions is of the same order of magnitude as one would need when $F$ does the computation without error.
We analyze the time complexity of the above algorithm and derive in \cref{cor:seo} that it is ${O}({\sqrt{J} ( \log J + \tau}))$,  where $\tau$ is the time complexity of $F$.
A similar result was also known for the query complexity of finding the index of a minimum element (see \cref{fact:qmeo}), and we obtain the analogous result for the time complexity in \cref{cor:meo}. In the remaining part of the section, while we only speak about minimization algorithms, everything is valid for search problems as well.

What can we say about the time complexity of $F$ if, for every $1 \leq j \leq J$, we have at our disposal an algorithm $A_j$ of known complexity computing $f_j(x)$? Let us call these {\em base} algorithms, and for the simplicity of discussion let us suppose that the time complexity of every base algorithm is $S + q \cdot \qr_n$, where $S$ is the total number of one and two-qubit gates and $J, S, q$ depend on $n$. If there is no particular relation between the $J$ base algorithms (that is, they can be very different), we do not have a particularly clever implementation of $F$. One possibility is to compose the quantum circuits computing them, where the non-query gates of $A_j$ are controlled by $j$. The query gates can be executed without control, giving an overall complexity of $F$ of $O(JS + q \cdot \qr_n)$, and yielding a minimization algorithm of complexity $O(\sqrt{J}(JS + q \cdot \qr_n))$.
Another trivial way to solve the minimization problem is to reduce the error of each base algorithm via $\log J$ repetitions and then classically compute the minimum. The complexities of these approaches are stated in \cref{lem:two}.

However, there is one situation where we can do much better, and this is exactly the case of recursive algorithms such as divide and conquer.  In this case, the base algorithms correspond to the recursive calls and are therefore the same by definition. Thus, for implementing $F$ there is no need to use controlled operations or to repeat the base algorithms, provided that we are able to determine, for every $j$, the memory indices where the  $\mathsf{QRAM}$ gates for $A_j$ are executed. 
There are two broad cases we consider, corresponding to simple create and simple complete steps:

{\bf Simple create step.} In some cases, such as in the \SSSTf\ problem, the create step is independent of the input. 
Suppose that the input length is a power of $2$.
If we unfold the successive recursive steps for this problem, then it is easy to see that, for every input $A$ of length $n$, there exists $t \in [\log n]$, such that if we partition $[n]$ into $2^t$ consecutive intervals of length $n/2^t$ then one of these intervals $C$ contains both $i$ and $j$, with $i<j$, such that $A_j - A_i$ is an optimal solution.
Moreover, $i$ is in the left half $C_{\ell}$, and $j$ is in the right half $C_r$ of $C$, and therefore under this assumption they can be found easily in time $\sqrt{n/2^t} (\log n + \qr_n)$.

Now the main point is that, for a given $t$, we don't know which interval $C$ contains the solution, but to find this good interval we can use quantum maximum finding with an erroneous oracle.
These $2^t$ intervals are consecutive and therefore the elements in all intervals can be indexed uniformly, once we settle the first $t$ bits specific to each interval.
For every $t$, this results in a uniform cost $\sqrt{n} (\log n + \qr_n)$.
After this, we still have to search over the $\log n$ possible values of $t$, again using \cref{cor:meo}.
This time there are additional costs for implementing $F$, captured in \cref{lem:two}, because the above procedures are quite different for different values of $t$. However, as the search is only over a domain of logarithmic size the additional cost is not substantial.

We call the resulting divide and conquer method {\em bottom-up} since the method actually completely eliminates the recursive calls, and we give a general statement about this approach in \cref{thm:bu_easy}. Avoiding recursion is possible because we know the subproblems in advance, as they are independent of the input.
Moreover, under the condition that the solution is contained in some interval $C$, in our case the additional information that $i$ is in $C_{\ell}$ and $j$ is in $C_r$ makes the solution easier. 
This, of course, isn't always the case.
Consider the element distinctness problem on an interval of size $n$. The additional knowledge that the two colliding elements are in different halves of the interval does not make the task of finding them easier.  
But it does help for \SSSTf \ as well as for \LISstf, \LSICf, \LZSf \ and $d$-\MSSTf.
In fact, to some extent, we can even generalize the method to $k$-ary relations, for $k > 2$, which is illustrated by our algorithms for $k$-\ISf \ and $k$-\SiSf.

{\bf Simple complete step.}  
In this case, the main work of the algorithm is done in the create step. We give two relatively generic theorems describing situations where the index problem can be handled well.

The first situation concerns \emph{constructible instance} problems, where the recursive calls are executed on subproblems that are explicitly constructed by the create step. Such a create step may be expensive to perform and, in particular, all the inputs of the recursive calls (which we call {\em constitutive strings}) have to be written down, likely taking at least linear time. This implies that this approach can only yield nontrivial results for time, and not for query complexity.
We can suppose that the constitutive strings are written in a way that the indices of the memory used by the $J$ functions calls $A_j$ differ only in the first $\log J$ bits where, for every $j$, the algorithm $A_j$ has $j$ in binary. Copying $j$ into the appropriate place in the index of the $\mathsf{QRAG}$ gates is most likely a negligible cost. \cref{thm:recgen} gives the exact statement. An application of this result is the quantum algorithm for \KCf \, where the create step requires quadratic time.
However, in high dimensions, the time of the homogeneous recursion, which doesn't include the create step, is much larger and therefore we are able to achieve
an almost quadratic speed up over the classical algorithm.

The second situation concerns {\em $t$-decomposable instance} problems, where the recursive calls are made on subsequences of the input. In this case, the create step does not need to create the constitutive strings but must determine the indices that delimit the constitutive strings in the input. This can, in principle, take much less than linear time. 
\cref{thm:rec} gives the exact statement for this approach. As an application of this, by identifying a problem, that we call \bldsf, as a $t$-decomposable instance problem whose create and complete functions can be computed in time $\widetilde{O}(n^{2/3}) \cdot \qw_{O(n)}$, we are able to give a quantum algorithm for solving the \ldsf\ problem in time  $\widetilde{O}(n^{2/3}) \cdot \qw_{O(n)}$.  

\subsection{Previous work}
Recently, several quantum divide and conquer algorithms were presented for various string problems.
In the first of these papers, Akmal and Jin~\cite{AJ22} considered the $k$-\LMSf \ problem for $k \geq n/2$, where given a string $a$ of length $n$ over some finite
alphabet with a total order, the output is a substring $v$ of $a$ of length $k$ such that for every substring $w$ of $a$ of length $k$, we have $v \leq w$, for lexicographic ordering among strings. In the decision version of the problem, the input also includes a string $v$ of length $k$ and the question is whether $v$ is lexicographically smallest among the $k$-length substrings of $a$. 

The algorithm in~\cite{AJ22} works in time $\sqrt{n} 2^{O(\log^{2/3} n)}$, and the high-level structure of the proof goes along the lines of our \cref{thm:rec} for $t$-decomposable instance minimizing problems, but without involving our \cref{cor:meo} to deal efficiently with the errors of the recursive calls.
The combinatorial contents of the proof, however, are quite different.
In~\cite{AJ22} it is also shown that the problems \MSRf \ and \MSf \ are easily reducible to $k$-\LMSf \ and therefore have the same quantum time complexity upper bound. In the former problem, one is looking for $j \in [n]$ such that $a[j:n]a[1:j-1] \leq a[i:n]a[1:i-1]$, for all $i \in [n]$, while in the latter problem one looks for $j \in [n]$ such that $a[j:n] < a[i:n]$, for all $i \neq j$. The decision versions of these problems, when $j$ is also given in the input, are defined naturally.

In the query complexity model, two papers improved on these results. 
Childs et al. in ~\cite{CKKSW22} showed that the decision versions of the above problems can be solved in $O(\sqrt{n} \log^{5/2} n)$ queries.
For \MSRf \ Wang~\cite{Wan22} improved this to $\sqrt{n} 2^{O(\sqrt{\log n})}$, and for the decision version of the problem further improved this to $O(\sqrt{n} \log^{3/2} n \sqrt{\log \log n})$. Interestingly, Wang uses \cref{fact:qmeo} on bounded-error oracles in his argument, but the paper does not deal with time complexity.

Childs et al.~\cite{CKKSW22} recently proposed a powerful and general framework in the query complexity model for quantum divide and conquer algorithms computing boolean functions. Their framework includes, for example, the case where the complete step is computed by an AND-OR formula in the recursive calls and some auxiliary boolean functions. Besides the problems from the previous paragraph, it is also applied to give algorithms for $k$-\ISf \ and for $k$-\CSf \ where one has to decide whether two strings share a common subsequence of length $k$.
This framework makes use of specific properties of the query model (for example, the equality of the query complexity and the adversary bound), and it is unlikely that it can directly yield results for time complexity. With respect to this framework ours has the following advantages:
\begin{itemize}
\item
It deals with time complexity and not only with query complexity.
\item
As a consequence, it can also deal with problems of super-linear complexity.
\item
It can handle minimization problems and not just boolean functions.
\end{itemize}
Finally we mention that for the $k$-\ISf \ problem we improve their query complexity result of $O(\sqrt{n} \log^{3(k-1)/2} n)$ to $O(\sqrt{n} \log^{(k-1)/2} n)$.

After our paper was completed, we learned that Jeffery and Pass are preparing a manuscript~\cite{JP23} on time-efficient quantum divide and conquer using quantum subroutine composition techniques \cite{Jeff22}.

\section{Preliminaries}
\label{sec:prelim}
\subsection{Notation}
We denote by $\widetilde{O}({f(n))}$ the family of functions of the form $f(n) \polylog(n)$.
For a positive integer $n$, we denote by $[n]$ the set $\{1, \ldots , n \}$.
Let $\Sigma$ be a finite set and $a = a_1 \ldots a_{n} \in \Sigma^n$. 
We define the {\em length} of $a$ as $n$, and we denote it by $|a|$.
We call $a_{i_1} a_{i_2} \ldots a_{i_j}$ a {\em subsequence} of $a$, for any $1 \leq i_1 < i_2 < \ldots < i_j \leq n$. A {\em substring} is a subsequence of consecutive symbols, and for $1 \leq i \le j \leq n$, we use $a[i\upto j]$ to denote the substring $a_i a_{i+1} \ldots a_j$ of $a$.  If 
$i > j$ then $a[i \upto j]$ is the empty string. 
For two strings $a,b \in \Sigma^*$ we denote the concatenation of $a$ and $b$ by $a \concat b$.
For $n^* \in \N$ we define $\Sigma^{\leq n^*} = \bigcup_{n \leq n^*} \Sigma^n$.
When $\Sigma \subseteq \Z$, we use capital letters $A,B,C...$ for elements of $\Sigma^n$ or $\Sigma^{n \times n}$, that is, for arrays and matrices.
For ease of notation, we often neglect taking floors and ceilings in our computations, however, this never affects the correctness of the asymptotic results.

\subsection{Quantum computational models}
\label{subsec:memory}
As in a number of other papers on quantum algorithms~\cite{Amb07, BJLM13, ABIKPV19, AHJKS22}, we use the standard quantum circuit model of all single qubit gates and the two-qubit $\mathsf{CNOT}$ gate, augmented with random access to quantum memory. By quantum memory, we mean a specific register of the quantum circuit
where the input can be accessed in some specific way. 

\subsubsection{$\mathsf{QRAM}$ and $\mathsf{QRAG}$}

We will use two memory models. In the more restricted $\mathsf{QRAM}$ model the memory can only be accessed for reading.
Formally, for any positive integer $N$, we define the  $ \mathsf{QRAM}_N$ gate as
   \begin{align*}
        \mathsf{QRAM}_N |i, e, x \rangle = |i, e \oplus x_i,  x\rangle,
    \end{align*}
where $i\in[N], e, x_1, \ldots x_N \in\{0,1\}^r$.    
We refer to the three registers involved in a $ \mathsf{QRAM}_N$ gate as the memory {\em index}, the memory {\em output}, and the memory {\em content} registers.
The memory index and output registers can be in superposition, but the memory content register always only contains the input string.
The memory content register can only be accessed via the $\mathsf{QRAM}_N$ gates and these gates can only applied to the memory registers.
In the more general $\mathsf{QRAG}$  model the memory content register can also be accessed for writing.

To formalize this is, for any positive integer $N$, we define the $\mathsf{QRAG}_N$ gate by
    \begin{align*}
        \mathsf{QRAG}_N |i, e, x \rangle = |i, x_i, x_1,\dots,x_{i-1},e,x_{i+1}\dots, x_N \rangle, 
    \end{align*}
where
$i\in[N]$ and $e,x_1,\dots,x_{N} \in\{0,1\}^r.$
Similar to the other $\mathsf{QRAM}$ case, here the memory content register can only be accessed via $\mathsf{QRAG}_N$ gates and these gates can only applied to the memory registers.
Since writing to memory is permitted, the memory content register can also be in quantum superposition.

In our running time analyses, single and two-qubit gates count as one time step. We will use the parameter $\qr_N$ (``quantum read") to denote the cost of a $\mathsf{QRAM}_N$ gate, and the parameter $\qw_N$ (``quantum write") to denote the cost of a $\mathsf{QRAG}_{N}$ gate. These models are also adapted to query complexity analyses and we define a {\em query} as one application of the $\mathsf{QRAM}_N$ or the $\mathsf{QRAG}_{N}$ gate, in the respective model of computation. 

When necessary, we will explicitly state which model we are using. However, in most cases, this should be clear from the statement of the complexity result. Results
about time bounds including the parameter $\qr_N$ use the $ \mathsf{QRAM}$ model, while results including the parameter $\qw_N$ use the $\mathsf{QRAG}$
model. The $\mathsf{QRAG}_N$ gate is at least as powerful as the $ \mathsf{QRAM}_N$ gate. Indeed, it is not hard to see that one application of the latter can be simulated by two applications of the former and a constant number of one and two-qubit gates. Therefore, algorithmic results stated in the $ \mathsf{QRAM}$ model
are also valid in the $ \mathsf{QRAG}$ model with the same order of complexity.
We are not aware of an analogous reverse simulation.

\subsubsection{Inputs vs. Instances}\label{sec:instances}

The parameterization by $N$ of the memory access times $\qr_N$ and $\qw_N$ allows for quantum memories of different sizes to be accessed at different rates. We assume that the size $N$ of the memory is fixed during the running of any algorithm, and thus the cost of the memory accesses will remain constant during the algorithm, even when recursive calls are made on shorter and shorter strings.

We therefore make a distinction between an {\em instance}, which is any string on which the algorithm might be recursively called, and an {\em input}, which is the initial string the algorithm is given. We denote the size of the input by $n^*$. The size of an instance can be any integer $n \leq n^*$.

Consequently, the memory content register will be a sufficiently large function $N(n^*) \geq n^*$ of the input size, depending on the problem $P$ we are solving.
In the $\mathsf{QRAM}$ model we choose $N(n^*) = n^*$ and suppose that, at the beginning of the computation, the input $a \in \Sigma^{n^*}$ is in the memory content register. In the $\mathsf{QRAG}$ model, the size $N(n^*)$ of the memory content register can be strictly larger than $n^*$ and we suppose that, at the beginning of the computation, it contains $a{\bf 0}^{N(n^*) - n^*}$, where ${\bf 0} = 0^r$.

Thus, for any problem with input of size $n^*$, the cost of quantum read and write operations will be $\qr_{N(n^*)}$ and $\qw_{N(n^*)}$, respectively. According to some proposals for quantum random access memory implementations \cite{GLM08, AGJMS15},
$\qr_{N(n^*)}$ and $\qw_{N(n^*)}$ might scale as $O(r\log N(n^*))$, where $r$ is the number of bits each memory cell can store. 
In that case, when $N(n^*)$ is a polynomial function of $n^*$, the cost of the memory access operations would be $O(r\log(n^*))$. 
For both $\mathsf{QRAM}$ and $\mathsf{QRAG}$ models, we assume that basic arithmetic operations and comparisons can be performed in the same time as memory access.

We make two remarks. First, for notational simplicity, we will quote all final algorithmic running times in our theorems in terms of $n$ rather than $n^*$, i.e., we take the \emph{input} size to be $n$ in our final results. Second, we emphasize that our quantum running analyses include the time cost of accessing quantum memory, while when we quote classical algorithmic complexity results these assume unit cost for memory access.

\subsection{Quantum algorithms}
\label{subsec:qroutines}
We state here several basic quantum algorithms we will use in the paper.
All, except the last result on element distinctness, are in the  $\mathsf{QRAM}$ model.
\begin{fact}[{\sc Quantum search}, Grover \cite{grover1996fast} and Theorem 3 in \cite{BBHT98}]
\label{fact:search}
For $j \in [J]$, let $f_j : \Sigma^n \rightarrow \{0,1\}$ be a boolean function. Let $F$ be a quantum algorithm that for every $(j,x) \in [J] \times \Sigma^n$, 
when $\ket{x}$ is given in the quantum memory content register, correctly computes $f_j(x)$ with $q$ queries and in time~$\tau$.
The quantum search algorithm uses ${O}(\sqrt{J} )$ repetitions of $F$ and, with probability at least $9/10$ finds an index $j \in [J]$ such that $f_j(x)=1$, if there is one.
The query complexity of the algorithm is $O(\sqrt{J} q)$, and its time complexity is ${O}({\sqrt{J} ( \log J + \tau}))$.
\end{fact}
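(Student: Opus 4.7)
The plan is to reduce to the standard BBHT/Grover algorithm by using $F$ itself to implement the oracle for the boolean function $g_x : [J] \to \{0,1\}$ defined by $g_x(j) = f_j(x)$, and then carefully account for the time cost of each Grover iteration.

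First, I would recall the BBHT guarantee from Theorem~3 of \cite{BBHT98}: given a phase oracle $O_g$ for a boolean function $g : [J] \to \{0,1\}$, there is a quantum algorithm that, with probability at least $9/10$, outputs some $j \in [J]$ with $g(j) = 1$ whenever such a $j$ exists, using $O(\sqrt{J})$ calls to $O_g$ together with $O(\sqrt{J}\log J)$ additional elementary gates. The additive $\log J$ per iteration comes from the Grover diffusion operator on $\lceil \log_2 J \rceil$ qubits, which is built from a Hadamard layer and a reflection about $\ket{0^{\lceil \log_2 J\rceil}}$, each of gate cost $O(\log J)$.

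Second, I would implement $O_{g_x}$ using $F$ via standard phase kickback: prepare an ancilla in state $\ket{-}$, apply $F$ on input $\ket{j}\ket{-}$ (with $\ket{x}$ sitting in the memory content register) to write $f_j(x)$ into the ancilla and thereby induce a $(-1)^{f_j(x)}$ phase on $\ket{j}$, and then apply $F^{-1}$ to uncompute any intermediate workspace. Since $F$ is a unitary quantum algorithm, its inverse has the same query and time complexity, so each oracle call costs $O(q)$ queries and $O(\tau)$ time.

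Combining these, the $O(\sqrt{J})$ Grover iterations use $O(\sqrt{J}\cdot q)$ queries in total, while the running time decomposes as $O(\sqrt{J}\cdot \tau)$ from the oracle implementations plus $O(\sqrt{J}\cdot \log J)$ from the diffusion operators, giving the claimed bound $O(\sqrt{J}(\log J + \tau))$. The only point that deserves care is that the memory content register holding $\ket{x}$ must remain undisturbed between successive oracle calls, which is immediate in the $\mathsf{QRAM}$ model because that register is read-only; there is no real obstacle here, since the entire statement is a minor bookkeeping refinement of the standard result.
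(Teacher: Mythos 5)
Your proof is correct and is exactly the standard argument behind this fact, which the paper itself states without proof and attributes directly to Grover \cite{grover1996fast} and Theorem~3 of \cite{BBHT98}. The decomposition of the per-iteration time into $O(\tau)$ for the oracle implemented via $F$ and $F^{-1}$ with phase kickback, plus $O(\log J)$ for the diffusion operator on $\lceil\log J\rceil$ qubits, yields the claimed $O(\sqrt{J}(\log J + \tau))$ time bound and $O(\sqrt{J}q)$ query bound in the same way.
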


Grover's search was generalized in the query model of computation to the case where the oracle's answer is only correct with probability 8/10. The following result states that under such conditions the search is still possible with the same order of complexity as in the case of an error-free oracle.

\begin{fact}[{\sc Quantum query search with an erroneous oracle},
\cite{hoyer2003quantum}]
\label{fact:qseo}
For $j \in [J]$, let $f_j : \Sigma^n \rightarrow \{0,1\}$ be a boolean function. Let $F$ be a quantum algorithm that for every $(j,x) \in [J] \times \Sigma^n$, 
when $\ket{x}$ is given in the quantum memory content register, with $q$ queries computes $f_j(x)$ with probability at least $7/10.$
Then there exists a quantum algorithm that uses ${O}(\sqrt{J} )$ repetitions of $F$ and with probability at least $9/10$ finds an index $j \in [J]$ such that $f_j(x)=1$, if there is one.
The query complexity of the algorithm is $O(\sqrt{J} q)$.
\end{fact}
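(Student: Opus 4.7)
The plan is to bootstrap standard Grover search (\cref{fact:search}) in the presence of a two-sided-error oracle, with the key challenge being to avoid the obvious $\log J$ overhead. A warm-up baseline would amplify $F$'s success probability from $7/10$ to $1 - O(1/J)$ by majority vote over $O(\log J)$ independent invocations and then feed the resulting near-perfect oracle into \cref{fact:search}; this gives $O(\sqrt{J}\,q\,\log J)$ queries, and the whole content of the statement is that the $\log J$ can be removed.

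To do so, I would implement $F$ coherently as a unitary $U_F$ whose output is written to a dedicated qubit, so that $U_F|j\rangle|x\rangle|0\rangle|0\rangle = |j\rangle|x\rangle \bigl(\sqrt{p_j}\,|1\rangle|\phi_j^{(1)}\rangle + \sqrt{1-p_j}\,|0\rangle|\phi_j^{(0)}\rangle\bigr)$, with $p_j \geq 7/10$ for marked $j$ and $p_j \leq 3/10$ otherwise. From $U_F$ I build the noisy phase oracle $O_{\mathrm{noisy}} = U_F^{\dagger}\,Z_{\mathrm{out}}\,U_F$ (costing $2q$ queries per call) and form the noisy Grover iterate $G = D\cdot O_{\mathrm{noisy}}$, where $D = 2|\psi_0\rangle\langle\psi_0|-I$ is the usual diffusion over the $j$-register. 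Starting from $|\psi_0\rangle\otimes|0\rangle|0\rangle$, I would run $\Theta(\sqrt{J/m})$ iterations of $G$ for each of $O(\log J)$ geometrically growing guesses $m$ for the number of marked indices, mimicking the BBHT procedure behind \cref{fact:search}. A direct computation shows that on $|j\rangle|0\rangle|0\rangle$ the operator $O_{\mathrm{noisy}}$ produces a phase $(1-2p_j)$ (hence $\leq -0.4$ on marked, $\geq 0.4$ on unmarked) plus an orthogonal leakage of norm $2\sqrt{p_j(1-p_j)}$, so on the two-dimensional plane spanned by uniform superpositions over marked and unmarked indices each $G$-iterate rotates by a positive constant fraction of the ideal Grover angle while emitting some weight out of the plane.

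The heart of the analysis, due to H{\o}yer--Mosca--de Wolf, is to show that the out-of-plane leakage remains $O(1)$ over the $\Theta(\sqrt{J/m})$ iterations, rather than accumulating to $\Omega(\sqrt{J})$ as a naive hybrid argument would suggest. Granting this, measuring the $j$-register after amplification yields a marked index with constant probability, and a final constant number of independent runs of $F$ on the returned index followed by a majority vote (an $O(q)$ cost) boost this to the required $9/10$, for an overall query count of $O(\sqrt{J}\,q)$. The main obstacle is exactly this leakage bound: per-iterate orthogonal error of magnitude $\sqrt{p_j(1-p_j)} \leq 1/2$ looks as if it should destroy the algorithm once it is coherently accumulated over $\sqrt{J}$ steps, and the H{\o}yer--Mosca--de Wolf argument circumvents this by exploiting the unitarity of $U_F$ together with the specific reflection structure of $D$ to confine the dynamics to a low-dimensional invariant subspace in which the errors behave statically rather than cumulatively, yielding only a constant-factor slowdown over ideal Grover.
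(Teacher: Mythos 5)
Your proposal diverges from the actual H{\o}yer--Mosca--de Wolf construction at the crucial step, and the route you describe would not go through. You build a fixed ``noisy Grover iterate'' $G = D\cdot O_{\mathrm{noisy}}$ with $O_{\mathrm{noisy}} = U_F^{\dagger}Z_{\mathrm{out}}U_F$ and iterate it $\Theta(\sqrt{J/m})$ times, attributing the correctness to an invariant-subspace argument that bounds the out-of-plane leakage by $O(1)$. But as you yourself compute, a single application of $O_{\mathrm{noisy}}$ on $\ket{j}\ket{0}\ket{0}$ emits orthogonal weight of norm $2\sqrt{p_j(1-p_j)}$, which with $p_j$ only guaranteed to be $7/10$ can be close to $0.92$. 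There is no low-dimensional invariant subspace here: the leakage vectors $\ket{\phi_j^{(0)}}, \ket{\phi_j^{(1)}}$ are arbitrary and $j$-dependent, so coherently iterating a fixed noisy reflection $\Theta(\sqrt{J})$ times does accumulate error linearly, exactly as the hybrid argument warns. This is precisely why the problem is nontrivial; a fixed noisy Grover iterate with constant per-step error fails, and no unitarity-plus-reflection-structure argument rescues it.

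What H{\o}yer--Mosca--de Wolf actually do (and what the paper's time-complexity proof in \cref{cor:seo} tracks) is a \emph{recursive} amplitude amplification with \emph{progressive error correction}. They define $A_1$ to be a single run of $F$ in superposition, and then set
\[
A_{k+1} = -E_k\, A_k\, S_0\, A_k^{-1}\, S_1\, A_k,
\]
where $S_0$ and $S_1$ are the usual phase flips and $E_k$ is an error-correction unitary that performs a coherent majority vote over $\Theta(k)$ fresh invocations of $F$ on states flagged as marked, driving the false-positive amplitude down to $2^{-\Omega(k)}$. The reason the total query count stays $O(\sqrt{J}\,q)$ is the recursion $T_{k+1} = 3T_k + O(k\cdot q)$: each level triples the previous cost and adds only $O(k\cdot q)$ for the majority vote, and summing $\sum_{i\le k} i\,3^{k-i} = O(3^k)$ makes the error-correction overhead vanish into the geometric growth, with $3^{\log_9 J} = \sqrt{J}$ giving the final bound. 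The key idea you are missing is that the verifier is \emph{re-amplified at each recursion level, with depth-dependent accuracy}, rather than queried with a fixed constant error throughout a long coherent Grover walk. Without the $E_k$'s the false positives would overwhelm the amplification; with them, the construction works, and it is this specific recursive structure (not an invariant-subspace property of a fixed noisy iterate) that removes the naive $\log J$ factor.
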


In the following corollary, we extend the above result for time complexity. This will be our main tool for the analysis of quantum divide and conquer algorithms since the results of the recursive calls might also be erroneous.
\begin{corollary}[{\sc Quantum search with an erroneous oracle}]
\label{cor:seo}
For $j \in [J]$, let $f_j : \Sigma^n \rightarrow \{0,1\}$ be a boolean function. Let $F$ be a quantum algorithm that for every $(j,x) \in [J] \times \Sigma^n$, when $\ket{x}$ is given in the quantum memory content register, in time~$\tau$ computes $f_j(x)$ with probability at least $7/10.$
Then there exists a quantum algorithm which in time~$O( {\sqrt{J} ( \log J + \tau}))$ with probability at least $9/10$ finds an index $j \in [J]$ such that $f_j(x)=1$, if there is one.
\end{corollary}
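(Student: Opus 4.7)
The plan is to take the algorithm whose existence is asserted by \cref{fact:qseo} and re-examine its cost in the time model rather than the query model. That algorithm performs $O(\sqrt{J})$ ``iterations,'' each of which consists of (i) one invocation of the erroneous oracle $F$, and (ii) a Grover-style reflection about the uniform superposition on the index register of size $\log J$, together with a phase flip controlled on the output qubit of $F$. In the query model, step (i) contributes $q$ queries and step (ii) contributes no queries, which is why the query complexity collapses to $O(\sqrt{J}\,q)$; in the time model both steps contribute.

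Concretely, I would argue as follows. First, instantiate the algorithm of \cref{fact:qseo} with the given $F$: it still finds a marked index with probability $\geq 9/10$, because its correctness analysis is purely about success probability of $F$ and does not depend on $\tau$. Second, bound the per-iteration time. One invocation of $F$ takes time $\tau$ by hypothesis. The reflection about $\ket{+}^{\otimes \log J}$ (equivalently, the diffusion operator on the index register) can be implemented by a constant number of Hadamard layers on $\log J$ qubits plus a reflection about $\ket{0^{\log J}}$, which takes $O(\log J)$ elementary one- and two-qubit gates; the controlled phase flip on the single-bit output register of $F$ takes $O(1)$ gates. Thus each iteration runs in time $O(\log J + \tau)$. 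Multiplying by the $O(\sqrt{J})$ iterations gives total time $O(\sqrt{J}(\log J + \tau))$.

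One detail worth flagging is that the oracle is erroneous and in general produces a workspace that cannot be cleanly uncomputed by a single inverse call. The \cite{hoyer2003quantum} construction circumvents this by treating $F$ as a bounded-error unitary and does not require exact uncomputation between iterations; it only requires a constant number of applications of $F$ (and of $F^\dagger$) per iteration. Since $F^\dagger$ has the same time complexity $\tau$ as $F$, this only changes the hidden constant in the $O(\sqrt{J}(\log J + \tau))$ bound.

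The main (and only) potential obstacle is to confirm that the diffusion/reflection step indeed costs $O(\log J)$ and not more in our computational model, i.e., that the standard Grover diffusion implementation on $\log J$ qubits fits within the single- and two-qubit gate framework described in \cref{subsec:memory}; this is standard and does not use any $\mathsf{QRAM}$ or $\mathsf{QRAG}$ gates, so no memory-access cost $\qr$ or $\qw$ is incurred by the diffusion. Combining this with the repetition count and the unchanged success probability of the underlying algorithm yields the claimed bound $O(\sqrt{J}(\log J + \tau))$ with success probability at least $9/10$, completing the proof.
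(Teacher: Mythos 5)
Your final bound is correct and you identify the right strategy (re-analyze the H{\o}yer--Mosca--de Wolf algorithm time-step by time-step rather than query by query), but your structural account of that algorithm does not match what it actually is, and the argument as written contains a genuine gap. The algorithm is not a flat loop of $O(\sqrt{J})$ Grover iterations each consisting of one call to $F$ plus a diffusion operator. It is a recursive family of unitaries $A_1, \ldots, A_m$ with $m = \lfloor \log_9 J \rfloor$, where $A_{k+1} = -E_k A_k S_0 A_k^{-1} S_1 A_k$; that is, $A_{k+1}$ contains three applications of $A_k$, a phase reflection $S_0$ costing $O(\log J)$, a constant-time phase $S_1$, and, crucially, an error-reduction unitary $E_k$ that performs majority voting over $O(k)$ fresh applications of $F$. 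Your claim that the construction ``only requires a constant number of applications of $F$ (and of $F^\dagger$) per iteration'' is therefore false: the error-reduction step at level $k$ costs $O(k \cdot \tau)$, which grows with $k$ up to $\log J$. You omit the $E_k$ step entirely, yet it is precisely what makes search with a noisy oracle work at all.

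Because of this, one cannot simply multiply ``$O(\sqrt J)$ iterations'' by a per-iteration cost of ``$O(\log J + \tau)$''. The paper instead sets up the recurrence $T_{k+1} = 3T_k + O(\log J + k\cdot\tau)$ with $T_1 = \log J + \tau$, uses $\sum_{i=1}^k i\,3^{k-i} = O(3^k)$ to solve it as $T_k = O\bigl(3^k(\log J + \tau)\bigr)$, and then sums over all $k$ up to $m$ (plus the final $O(\log J)$-fold verification of each candidate) to obtain $O(\sqrt J(\log J + \tau) + \log^2 J\cdot\tau) = O(\sqrt J(\log J + \tau))$. It is only through this geometric absorption of the growing error-reduction costs that the clean bound emerges; your proposal arrives at the same expression, but for reasons that do not correspond to the actual algorithm. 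To fix the proof, you should describe the recursive structure faithfully, include the $E_k$ term, and carry out the recurrence as the paper does.
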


\begin{proof}
The algorithm is exactly the one given by H{\o}yer, Mosca and de Wolf~\cite{hoyer2003quantum}, we just analyze its time complexity. 
Let $m=\floor{\log_9 J}$. For $k = 1, \ldots, m$, we define recursively the unitary transformations $A_k$ which produce candidate marked elements. 
The final algorithm is, modulo some simple classical verification steps, the successive runs of $A_1, \ldots, A_m$. 
We describe these unitaries as acting on four registers, where the first register acts on the space spanned by the basis states $1, \ldots, J$, corresponding to the indices of the $J$ functions.
The second register acts on a single qubit. A qubit $\ket{1}$ in this register in principle indicates that the state in the first register is marked, but because of the evaluation errors, we can also have false positives.
The third register includes several sub-registers necessary for the full description of the algorithm, including the workspace for the evaluation of $F$, the index register for the memory access, the register for the memory content, etc.
We won't give here in detail the relatively straightforward functioning of this register and we won't indicate its content either except when it is needed
for the comprehension of the algorithm. The fourth register is the memory register that contains the input $x$, and we will not explicitly indicate
this register.
When we indicate only two registers in the description, they correspond to the first and second registers.
We set $\Gamma = \{ j \in [J] : f_j(x) = 1\}$ as the set of the marked states, and for the (unknown) cardinality $|\Gamma | =t$ we suppose that $1 \leq t \leq J/9$. Finally, we only describe the action of these unitaries on the basis state $\ket{{\bf 0}} \ket{0},$ where $\ket{{\bf 0}}$ indicates
the all $0$ state on $\ceil{\log J}$ qubits.

The initial unitary $A_1$ simply computes $F$ in superposition, that is
\[
A_1 \ket{{\bf 0}} \ket{0} = \frac{1}{\sqrt{J}} \sum_{j=1}^{J} \ket{j} \ket{f_j(x)}.
\]
We can also write this as 
\[
A_1 \ket{{\bf 0}} \ket{0} = \frac{1}{\sqrt{J}} \sum_{j}^{J} \ket{j}(\sqrt{p_j} \ket{1} + \sqrt{1 - p_j}  \ket{0}),
\]
where $p_i$ is the probability that $F$ outputs $1$ on $\ket{j} \ket{x} $.
The time complexity of $A_1$ is $\log J + \tau.$
We can express $A_1 \ket{{\bf 0}} \ket{0}$ as
$$
A_1 \ket{{\bf 0}} \ket{0} = \alpha_1 \ket{\Gamma_1}\ket{1} + \beta_1 \ket{\bar{\Gamma}_1}\ket{1} + 
\sqrt{ 1 - \alpha_1^2 - \beta_1^2 }  \ket{\phi_1}   \ket{0},
$$
where $\ket{\Gamma_1}, \ket{\bar{\Gamma}_1},  \ket{\phi_1} $ are (not necessarily uniform) superpositions of marked, non-marked, and all states, respectively. Moreover, we know that by measuring this state, we see a marked element in the first register with probability at least
$$
\alpha_1^2 = \sum_{j \in \Gamma} \frac{p_j}{J } \geq \frac{7t}{10J}.
$$

The transformation $A_{k+1}$ is defined as $A_{k+1} = - E_k A_k S_0 A_k^{-1} S_1 A_k$, whose components we now describe.
The unitary $S_0$ adds a phase $-1$ to the base state $\ket{{\bf 0}} \ket{0}$, while the unitary $S_1$ puts a phase $-1$ to every state whose second register contains $\ket{1}$. 
In other words, $-A_k S_0 A_k^{-1} S_1$ is the Grover iterate of $A_k$.
The cost of implementing $S_0$ is $O(\log J)$ while $S_1$ can be implemented in constant time. The error correction unitary $E_k$ does majority voting in superposition on $O(k)$ computations of $F$ on basis states whose second register contains $\ket{1}$. As a result of this, on such states, the error probability is exponentially reduced in $k$.
Formally, $E_k$ acts as (and here we include one qubit from the third register, which is initially $\ket{0}$):
\[
E_k \ket{j} \ket{b} \ket{0} = 
\begin{cases}
\alpha_{kj} \ket{j} \ket{1} \ket{1} + \sqrt{1 - \alpha_{kj}^2} \ket{j} \ket{0} \ket{1}  & \text{if $b=1$,} \\
\ket{j} \ket{0} \ket{0} & \text{if $b=0$,}
\end{cases}
\]
where $\alpha_{kj}^2 \geq 1 - 2^{-(k+5)}$ if $f(j) = 1$, and $\alpha_{kj}^2 \leq 2^{-(k+5)}$ if $f(j) = 0.$
The time complexity of $E_k$ is $O(k \cdot \tau).$

Similarly to $A_1 \ket{{\bf 0}} \ket{0}$, the state $A_k \ket{{\bf 0}} \ket{0}$ can be decomposed as
$$
A_k \ket{{\bf 0}} \ket{0} = \alpha_k \ket{\Gamma_k}\ket{1} + \beta_1 \ket{\bar{\Gamma}_k}\ket{1} + 
\sqrt{ 1 - \alpha_k^2 - \beta_k^2 }  \ket{\phi_k}   \ket{0},
$$
where again $\ket{\Gamma_k}, \ket{\bar{\Gamma}_k},  \ket{\phi_k} $ are superpositions of marked, non-marked, and all states, respectively. When passing from $A_k \ket{{\bf 0}} \ket{0} $ to $A_{k+1} \ket{{\bf 0}} \ket{0} $, the amplitude of correct positive states is approximately multiplied by $3$, while the amplitude of the false positives is multiplied by an exponentially small factor in $k$. The exact analysis is given in~\cite{hoyer2003quantum}], where it is proven, as a consequence, that $\alpha_{k(t)} = \Omega(1)$, for $k(t) = \floor{\log_9 (J/t)}$. 
Because $t$ is unknown, the final algorithm, for $k = 1, \ldots, \floor{ \log_9 J}$, runs $A_k$ a sufficiently large constant number of times, to amplify the success probability to close to 1.
Then it verifies each result $j$ by computing $f_j(x)$ $O(\log J)$-times and outputs a solution if one is found.

Let $T_k$ be the complexity of computing $A_k$. The overall complexity of the algorithm is then 
$T = \sum_{k=1}^m (T_k + O( \log J \cdot \tau)).$
For $T_k$, we have the following recursion: $T_1 = \log J + \tau$ and
$$
T_{k+1} = 3 T_k + O( \log J + k \cdot \tau).
$$
Considering that $ \sum_{i=1}^{k} i 3^{k-i} = O(3^k)$, we have $T_k =  {O}(3^k ( \log J + \tau))$, and the overall time complexity is 
$T = O(\sqrt{J} (\log J + \tau) + \log^2 J \cdot \tau) = O(\sqrt{J} (\log J + \tau))$.
Observe that the constant hidden in the $O$ notation doesn't depend on $F$.
\end{proof}

The next two facts essentially state that quantum minimum finding, with error-free or erroneous oracle, can be achieved with the same complexity as analogous quantum search.
\begin{fact}[{\sc Quantum minimum finding}, Theorem 1 in~\cite{DH96}]
\label{fact:minimum}
For $j \in [J]$, let $f_j : \Sigma^n \rightarrow \Z$ be a function. Let $F$ be a quantum algorithm that for every $(j,x) \in [J] \times \Sigma^n$, when $\ket{x}$ is given in the quantum memory content register, correctly computes $f_j(x)$ with $q$ queries and in time~$\tau$.
The quantum minimum finding algorithm uses ${O}(\sqrt{J} )$ repetitions of $F$ and with probability at least $9/10$ finds the index $j \in [J]$ of a minimal element in $\{f_1(x), \ldots , f_J(x) \}$.
The query complexity of the algorithm is $O(\sqrt{J} q)$, and its time complexity is ${O}({\sqrt{J} ( \log J + \tau}))$.
\end{fact}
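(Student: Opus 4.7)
The plan is to follow the Dürr--Høyer minimum finding procedure, then track time rather than merely queries. The algorithm maintains a current threshold index $y\in[J]$, initialized uniformly at random. At each round, one runs the quantum search algorithm of \cref{fact:search} on the predicate
\[
g_y(j) \;=\; [\,f_j(x) < f_y(x)\,]
\]
over $j\in[J]$; whenever a marked index $j$ is returned, we update $y\leftarrow j$ and start a new round. After a sufficiently large number of rounds the final $y$ is output. The required success amplification to $9/10$ can be obtained by running $O(1)$ independent copies and taking the best value.

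The first step is to bound the time cost of a single oracle call to $g_y$. Because $F$ computes $f_j(x)$ in time $\tau$ using $q$ queries, and a comparison of two integers stored in output registers costs at most $O(\log J)$ additional gates (absorbing the cost of loading the cached value $f_y(x)$ from a classical register), each evaluation of $g_y$ runs in time $\tau + O(\log J)$ using $q$ queries. Plugging this into \cref{fact:search}, a single search round that targets a set of $t$ below-threshold indices finishes in time $O\bigl(\sqrt{J/t}\,(\log J + \tau)\bigr)$ with $O(\sqrt{J/t}\,q)$ queries.

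The second step is the standard Dürr--Høyer expectation analysis of threshold updates. Pretending for simplicity that all $f_j(x)$ are distinct, if the current threshold $y$ is the $r$th smallest value then the search round has $t=r-1$ marked elements and $r$ is itself chosen uniformly at random among the values seen so far. A telescoping argument (identical to Section 3 of~\cite{DH96}) shows that the total number of Grover iterations summed across all rounds is $O(\sqrt{J})$ in expectation, hence the expected total time is $O(\sqrt{J}(\log J+\tau))$ and the expected total number of queries is $O(\sqrt{J}\,q)$. A Markov-style cutoff (terminate after a constant multiple of the expected running time) converts this into a $9/10$-probability guarantee within the same asymptotic bound.

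The main obstacle I expect is making sure that the additive $O(\log J)$ comparison cost does not accumulate badly; this is resolved by observing that it is additive per Grover iteration and that the total number of Grover iterations is $O(\sqrt{J})$, so the overall overhead is $O(\sqrt{J}\log J)$, which is already subsumed in the stated bound $O(\sqrt{J}(\log J+\tau))$. The query bound is unchanged from the original Dürr--Høyer analysis since the comparisons are internal classical-style operations on the output register of $F$ and do not issue $\mathsf{QRAM}$ gates.
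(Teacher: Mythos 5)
Your reconstruction of the Dürr--Høyer analysis with explicit time tracking is correct. Note that the paper does not actually prove this statement---it is invoked as a cited Fact (Theorem 1 of~\cite{DH96}), so there is no paper-internal argument to compare against; your argument supplies exactly the expected extension of the query bound to a time bound, using the same Grover-search primitive (\cref{fact:search}) and the standard telescoping expectation argument, with the per-iteration $O(\log J+\tau)$ overhead correctly identified as the dominant contribution.
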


\begin{fact}[{\sc Quantum query minimum finding with erroneous oracle}, Lemma 3.4 in\cite{WY20}]
\label{fact:qmeo}
For $j \in [J]$, let $f_j : \Sigma^n \rightarrow \Z$ be a function. Let $F$ be a quantum algorithm that for every $(j,x) \in [J] \times \Sigma^n$, when $\ket{x}$ is given in the quantum memory content register, with $q$ queries computes $f_j(x)$ with probability at least $7/10.$
Then there exists a quantum algorithm which uses ${O}(\sqrt{J} )$ repetitions of $F$ and with probability at least $9/10$ finds the index $j \in [J]$ of a minimal element in $\{f_1(x), \ldots , f_J(x) \}$.
The query complexity of the algorithm is $O(\sqrt{J} q)$.
\end{fact}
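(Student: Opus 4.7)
The plan is to combine the D\"urr--H\o yer minimum finding scheme of Fact~\ref{fact:minimum} with the erroneous-oracle search of Fact~\ref{fact:qseo}. Recall the D\"urr--H\o yer scheme: sample a threshold index $y \in [J]$ uniformly at random, then repeatedly run a quantum search over $[J]$ for some $j$ with $f_j(x) < f_y(x)$, updating $y \leftarrow j$ whenever such a $j$ is found. A standard analysis bounds the expected total number of oracle calls by $O(\sqrt{J})$, because the rank of the current threshold contracts geometrically in expectation and the per-iteration costs $\sqrt{J/r_{y_i}}$ telescope to $O(\sqrt{J})$.

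First, I would build a comparison oracle $g_y : [J] \to \{0,1\}$ defined by $g_y(j) = 1$ iff $f_j(x) < f_y(x)$. Boost $F$ by an $O(1)$-fold majority vote so that it succeeds with probability at least $99/100$, call the boosted version on $(j,x)$ and on $(y,x)$, and compare the outputs. This uses $O(q)$ queries and computes $g_y(j)$ correctly with probability comfortably above $7/10$, meeting the error tolerance of Fact~\ref{fact:qseo}. Feeding $g_y$ into Fact~\ref{fact:qseo} yields a search subroutine that, with $O(\sqrt{J})$ applications of $F$ and hence $O(\sqrt{J}\,q)$ queries, returns a $g_y$-marked index with probability at least $9/10$ whenever one exists.

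Second, I would substitute this noisy search for the Grover step at every iteration of the D\"urr--H\o yer loop. The expected total number of $F$-invocations across the run remains $O(\sqrt{J})$ by the same telescoping argument, keeping the overall query count at $O(\sqrt{J}\,q)$.

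The principal obstacle is propagating success probabilities across the $\Theta(\log J)$ expected threshold updates, since a naive union bound would blow up the failure probability beyond $1/10$. The remedy, following Wang--Yoshida, is a per-iteration validation wrapper: after each proposed update $y \to j$ returned by the noisy search, re-evaluate $f_j(x)$ and $f_y(x)$ with a fresh $O(1)$-fold majority of $F$-calls, and accept the update only if the strict inequality $f_j(x) < f_y(x)$ is confirmed. Spurious updates therefore cannot corrupt the threshold, and the D\"urr--H\o yer progress argument applies to the genuine improvements alone. A final $O(\log J)$-fold verification of the returned index then yields the claimed $9/10$ overall success probability within the $O(\sqrt{J}\,q)$ query budget.
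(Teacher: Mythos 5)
The paper does not actually prove this statement: it is stated as a Fact cited from [WY20], and the closest the paper comes to a proof is the brief remark in the proof of \cref{cor:meo} that the Wang--Ying algorithm ``is essentially the same as the query algorithm of D\"urr and H\o yer \dots [except that] the search for the new pivot \dots uses the quantum query search algorithm with the erroneous oracle of \cref{fact:qseo}.'' Your high-level reconstruction matches this description, so the architecture is right.

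The gap is in your error accounting. Your per-iteration validation wrapper uses an $O(1)$-fold majority of $F$-calls, which gives per-iteration false-acceptance probability equal to some fixed small constant $\epsilon$. Over the $\Theta(\log J)$ expected threshold updates in the D\"urr--H\o yer loop, a union bound gives cumulative false-acceptance probability $\Theta(\epsilon \log J)$, which exceeds $1/10$ once $J$ is large enough. To make your validation actually function as the safety net you want, the per-iteration false-acceptance rate would have to be $o(1/\log J)$, requiring $\Omega(\log\log J)$-fold majority (still within the $O(\sqrt{J}\,q)$ budget, but not $O(1)$). Alternatively one can rely on the fact that the full H\o yer--Mosca--de Wolf search as implemented in the proof of \cref{cor:seo} already performs an $O(\log J)$-fold verification of its output candidate, making its false-positive rate $O(1/\mathrm{poly}(J))$ and rendering your explicit wrapper redundant, but you cannot get this for free from the black-box statement of \cref{fact:qseo}, which only promises a marked element with probability $9/10$ and says nothing about the distribution on failure. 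A second, smaller issue: your closing ``final $O(\log J)$-fold verification of the returned index'' can only confirm the \emph{value} $f_j(x)$, not certify that $j$ realizes the \emph{minimum}; the $9/10$ overall success probability must instead come from a noisy analogue of the D\"urr--H\o yer expected-cost/progress argument (bounding how much the rank of the current threshold can stagnate or regress under rare errors), and that analysis, which is the real content of [WY20]'s Lemma 3.4, is the part your sketch elides.
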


Similarly to the search case, in the next corollary we extend to the time complexity the result about minimum finding when the elements can only be evaluated with some error. \\
\begin{corollary}[{\sc Quantum minimum finding with an erroneous oracle}]
\label{cor:meo}
For $j \in [J]$, let $f_j : \Sigma^n \rightarrow \Z$ be a function. Let $F$ be a quantum algorithm that for every $(j,x) \in [J] \times \Sigma^n$, when $\ket{x}$ is given in the quantum memory content register, in time~$\tau$ computes $f_j(x)$ with probability at least $7/10.$
Then there exists a quantum algorithm which in time~$O({\sqrt{J} ( \log J + \tau}))$ with probability at least $9/10$ finds the index $j \in [J]$ of a minimal element in $\{f_1(x), \ldots , f_J(x) \}$.
\end{corollary}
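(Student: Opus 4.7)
The plan is to mirror the proof of \cref{cor:seo}: I would take the algorithm from \cite{WY20} that achieves the query bound in \cref{fact:qmeo} and analyze its time complexity component by component. That algorithm is a Dürr-Høyer style minimization that maintains a current candidate $j^*$ together with the classically stored value $v^* = f_{j^*}(x)$ and, at each stage, uses quantum search with an erroneous oracle on the predicate $g_{v^*}(j) := [f_j(x) < v^*]$. The predicate $g_{v^*}$ is evaluable in time $\tau + O(1)$ --- one call to $F$ followed by a constant-time comparison against the classically stored $v^*$ --- so by \cref{cor:seo} each inner search call runs in time $O(\sqrt{J/t}(\log J + \tau))$, where $t$ is the current number of indices whose value is below $v^*$, and succeeds with constant probability.

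The time bound then follows from the standard Dürr-Høyer analysis applied to the $t$-dependent search cost. Letting $t_i$ denote the number of indices below the threshold at iteration $i$, one has $\mathbb{E}\bigl[\sum_i \sqrt{J/t_i}\bigr] = O(\sqrt{J})$, so the expected cumulative time of the inner searches is $O(\sqrt{J}(\log J + \tau))$. Bookkeeping and initialization costs contribute only $O(\log J + \tau)$ per iteration over $O(\log J)$ expected iterations, which is absorbed into the same asymptotic. Markov's inequality converts the expectation into a worst-case time guarantee at the cost of a constant failure probability, and a final constant-factor amplification (say, running the procedure a few times independently and reporting the smallest output verified by a further amplified evaluation of $F$) brings the overall success probability up to $9/10$.

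The main obstacle is controlling the compounded failure probability of the $O(\log J)$ chained uses of \cref{cor:seo} without paying an extra logarithmic factor in the time. I would handle this exactly as in the query proof of \cref{fact:qmeo}: each candidate $j$ returned by an inner search is verified by evaluating $F$ on it $O(\log \log J)$ times and taking a majority, with $v^*$ updated only when the majority confirms $f_j(x) < v^*$. False positives are thereby suppressed to arbitrarily small polynomial probability, while false negatives merely stall one iteration without violating monotonicity of the threshold, so the geometric-sum argument remains valid. The verification overhead is $O((\log \log J)\,\tau)$ per iteration, which is dominated by the already-accounted $O(\log J + \tau)$ additive term and therefore does not inflate the asymptotic total beyond the claimed $O(\sqrt{J}(\log J + \tau))$.
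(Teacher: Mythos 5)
Your proposal takes essentially the same route as the paper: the paper's own proof likewise simply observes that the Wang--Ying algorithm of \cref{fact:qmeo} is Dürr--Høyer with the inner Grover search replaced by the HMdW erroneous search, and that by \cref{cor:seo} the time cost of that search is a $(\log J + \tau)$ factor times the number of $F$-repetitions, which \cref{fact:qmeo} already bounds by $O(\sqrt{J})$. You fill in more of the Dürr--Høyer bookkeeping (the $\sqrt{J/t_i}$ scaling, Markov, and per-iteration verification) than the paper's terse two-sentence argument does, but the underlying idea is the same.
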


\begin{proof}
Analogously to the case of erroneous quantum search, here we simply analyze the time complexity of the query algorithm of Wang and Ying of \cref{fact:qmeo}, which is essentially the same as the query algorithm of D\"urr and H{\o}yer stated in \cref{fact:minimum}. This algorithm repeatedly searches for a random element in an array that is smaller than some initially random pivot element, where in every iteration the pivot is replaced by the element recently found. 
The main difference between the two algorithms is that the search for the new pivot in the Wang and Ying algorithm uses the quantum query search algorithm with the erroneous oracle of \cref{fact:qseo}. In \cref{cor:seo} we have shown that the time complexity of this search incurs a multiplicative factor of  $( \log J + \tau)$ with respect to the number of repetitions of $F$, therefore the bounds follow from \cref{fact:qmeo}.
\end{proof}

In the applications of \cref{cor:seo} and \cref{cor:meo} it is important to have good bounds on the complexity of $F$. The following Lemma that we state in the $\mathsf{QRAM}$ model describes two simple but very generic algorithms for computing $F$ that can be used without any assumption on the functions $f_j$. 
A similar lemma can be proven in the $\mathsf{QRAG}$ model with the appropriate modification in the size of the quantum memory content register.
However, we emphasize that in many of our applications to divide and conquer algorithms we will be able to devise more efficient implementations of $F$ taking advantage of the recursive structure of the $f_j$.

\begin{lemma}
\label{lem:two}
Let $J : \N \rightarrow \N$ be a function, and for all $j \in [J(n)]$, let $f_j : \Sigma^n \rightarrow \{0,1\}$, for all $j \in [J(n)]$, let $f_j : \Sigma^n \rightarrow \Z$.
Let us suppose that for every $j \in [J(n)]$, the function $f_j(x)$ can be computed with probability at least $9/10$ by a circuit $C_j$ having $S_j(n)$ one and two-qubit gates and $q_j(n)$ query gates.  
We set $S_{\su}(n) = \sum_{j \in [J]} S_j(n), q_{\su}(n) = \sum_{j \in [J](n)} q_j(n)$, and $q_{\max}(n) = \max_{j \in [J(n)]} q_j(n)$.
Then, there exist two quantum algorithms ${\cal A}_1$ and  ${\cal A}_2$ which, with probability at least $9/10$, find the index $j \in [J(n)]$ of a marked or minimal element in $\{f_1(x), \ldots , f_J(x)\}$, respectively.  The complexities of the algorithms are as follows:
\begin{enumerate}
\item
${\cal A}_1$ makes $O(\sqrt{J(n)} q_{\max}(n)  )$ queries and takes time $O(\sqrt{J(n)} (S_{\su}(n) +q_{\max}(n) \cdot \qr_n))$,
\item
${\cal A}_2$ makes $O(\log J(n)  \cdot q_{\su}(n)  )$ queries and takes time $O(\log{J(n)} (S_{\su}(n) +  q_{\su}(n) \cdot \qr_n))$.
\end{enumerate}
\end{lemma}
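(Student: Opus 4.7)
The plan is to build two quite different oracles for $F$ and then feed them into the erroneous-oracle subroutines of \cref{cor:seo} and \cref{cor:meo}. For $\mathcal{A}_1$, I will compose the circuits $C_j$ in superposition into a single quantum algorithm $F$ that on $\ket{j}\ket{x}$ computes $f_j(x)$. For $\mathcal{A}_2$, I will instead amplify each $C_j$ in parallel to very high individual success probability and then read off the answer classically, with no need for a quantum outer loop.

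For $\mathcal{A}_1$, the construction of $F$ proceeds in three steps. First, pad each $C_j$ with identity queries (on a dummy address register) until every $C_j$ contains exactly $q_{\max}(n)$ query gates; this does not change the number of non-query gates in each circuit. Second, lay out the combined unitary as a sequence of layers in which, for each $j\in[J(n)]$, all non-query gates of $C_j$ are executed controlled on the first register holding $\ket{j}$, contributing $\sum_j O(S_j(n)) = O(S_{\su}(n))$ one- and two-qubit operations in total. Third, the query gates are interleaved at the appropriate clock ticks; crucially, the $\mathsf{QRAM}_N$ gate is a single fixed unitary that acts on whatever address register is presented to it, so queries need \emph{not} be controlled by $j$, and there are $q_{\max}(n)$ of them. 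The success probability of $F$ inherits the $9/10$ lower bound from each $C_j$, which comfortably exceeds the $7/10$ threshold of \cref{cor:seo} and \cref{cor:meo}. Feeding $F$ into the appropriate corollary with $\tau = O(S_{\su}(n) + q_{\max}(n)\cdot \qr_n)$ gives total time $O(\sqrt{J(n)}(\log J(n) + \tau))$; since $S_{\su}(n)\ge J(n)\ge \log J(n)$ the $\log J(n)$ term is absorbed, yielding the stated bounds for $\mathcal{A}_1$.

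For $\mathcal{A}_2$ the approach is essentially classical. Run each circuit $C_j$ independently $O(\log J(n))$ times and take the majority of its outputs; this boosts the per-index error from $1/10$ to at most $1/(10J(n))$. A union bound then guarantees that, with probability at least $9/10$, every one of the $J(n)$ computed values $f_j(x)$ is correct simultaneously, and the index of a marked element or of a minimum is identified by a classical scan of this list of length $J(n)$, whose cost is dominated by the amplification itself. Summing the costs over $j\in[J(n)]$ gives $O(\log J(n)\cdot S_{\su}(n))$ non-query gates and $O(\log J(n)\cdot q_{\su}(n))$ queries, each costing $\qr_n$, matching the claimed bound.

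The only delicate point is the uniform implementation of queries inside $F$ in $\mathcal{A}_1$: because the various $C_j$ may issue their queries at different logical times and to different addresses, one has to synchronize them via the padding step above so that, across the superposition over $j$, exactly one $\mathsf{QRAM}_N$ gate is applied at each of $q_{\max}(n)$ pre-agreed clock cycles; the address register is furnished by each circuit's own workspace and requires no additional control structure. Once this bookkeeping is fixed, the rest is an immediate invocation of \cref{cor:seo} and \cref{cor:meo}, and no further ideas are needed.
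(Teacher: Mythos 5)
Your overall plan reproduces the paper's argument: for $\mathcal{A}_1$, synchronize the $q$-queries across all $C_j$ by padding, run the non-query layers under $j$-control while leaving the $\mathsf{QRAM}$ gates uncontrolled, and then invoke \cref{cor:seo} / \cref{cor:meo}; for $\mathcal{A}_2$, amplify each $C_j$ with $O(\log J(n))$ repetitions, apply a union bound, and scan classically. This matches the paper in structure and in the way the two corollaries are used.

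Two technical points are glossed over and deserve to be addressed. First, you claim the controlled non-query layers cost $\sum_j O(S_j(n)) = O(S_{\su}(n))$ while "controlled on the first register holding $\ket{j}$." If $j$ is stored in binary, each controlled gate becomes a $(\log J)$-controlled gate, inflating the cost to $O(\log J(n)\cdot S_{\su}(n))$. The paper avoids this by first converting $j$ to a \emph{unary} encoding on $J(n)$ ancilla qubits so that every controlled one- or two-qubit gate of $C_j$ is controlled on a single qubit, and this costs only a constant multiplicative overhead $\gamma$. Without this (or an equivalent device), the stated $O(S_{\su}(n))$ figure does not follow. Second, the phrase "pad with identity queries on a dummy address register" is not quite right: a single $\mathsf{QRAM}_n$ gate applied to a dummy address register is not the identity on the output register. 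The paper instead appends extra $\mathsf{QRAM}_n$ gates at the end of $C_j$ and notes that this preserves the computed function precisely because $\mathsf{QRAM}_n$ is self-inverse and the number $q_{\max}(n)-q_j(n)$ of appended gates can be taken even. You would need to pad in pairs (or otherwise handle the parity) and say so explicitly. Both fixes are routine, but as written the proposal's gate count and the correctness of the padded circuits are not yet justified.
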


\begin{proof}
By {\em tasks} we refer to both problems, search, and minimization.
The first algorithm is the application of \cref{cor:seo} and \cref{cor:meo}, with a generic method for computing $F(j,x) = f_j(x)$.
For every $j \in [J(n)]$, the circuit $C_j$ can be decomposed as
$$
U^j_1 \circ \mathsf{QRAM}_n \circ U^j_2 \circ \mathsf{QRAM}_n \circ \ldots \circ \mathsf{QRAM}_n \circ U^j_{q_j(n)+1},
$$
where the circuit $U^j_k$, for $1 \leq k \leq q_j(n)+1$, contains only one and two qubit gates. 
In fact, we can suppose without loss of generality that $q_j(n)= q_{\max}(n),$ for every $j \in [J(n)]$. Indeed the circuit
$$
C_j \circ \mathsf{QRAM}_n \circ \id \circ  \mathsf{QRAM}_n  \ldots \circ \mathsf{QRAM}_n \circ \id,
$$
where the number of appended $\mathsf{QRAM}_n$ gates is $q_{\max}(n)  - q_j(n)$, computes the same functions as $C_j$ whenever $q_{\max}(n)  - q_j(n)$ is even since $\mathsf{QRAM}_n$ is its own inverse. Observe that we increased only the number of $\mathsf{QRAM}_n$ gates, the number of one and two-qubit gates didn't change.

We use a control register with $J(n)$ qubits, where the integer $j \in [J(n)]$ is expressed in unary, that is by the binary vector whose all but the $j$th coordinates are $0$.
For every $j,k$, define the circuit $\cont{U^j_k}$ as $U_k$ whose gates are controlled by the $j$th bit of the control register. Since there exists a constant $\gamma$ such that every controlled one and two-qubit gate can be expressed with $\gamma$ one and two-qubit gates, $\cont{U^j_k}$ can be implemented by a circuit whose number of one and two-qubit gates is at most $\gamma$-times the number of one and two-qubit gates in $U^j_k$.
Then the circuit
$$
F' = \cont{U^1_1} \circ \ldots \circ  \cont{U^J_1} \circ \mathsf{QRAM}_n \circ \ldots \circ \mathsf{QRAM}_n \circ \cont{U^1_{q+1}} 
\circ \ldots  \circ  \cont{U^J_{q+1}},
$$
is implementable with $O(S_{\su}(n))$ one and two qubit gates and with $q_{\max}(n)$ query gates. 
When the control register contains $j$ in unary, $F'$ computes the function $f_j$. By definition, the algorithm $F$ on $\ket{j} \ket{x} $ transforms the binary index $j$ into unary and writes it into the control register, then it executes $F'$ and un-computes the control register. This implementation of $F$ uses $q_{\max}(n)$ query gates and time $O(S_{\su}(n) + q_{\max}(n) \cdot \qr_n)$.
Therefore by \cref{fact:qseo} and \cref{cor:seo} (respectively \cref{fact:qmeo} and \cref{cor:meo}) the tasks can be computed with ${O}({\sqrt{J(n)} q_{\max}(n)})$ queries and in time $O(\sqrt{J(n)} ( \log J(n) +   S_{\su}(n) + q_{\max}(n) \cdot \qr_n)).$ 
Since the $\log J(n)$ term is negligible, the claim follows.

The second algorithm repeats $C_j$ $ O(\log J(n))$-times, for every $j \in [J(n)]$, to reduce its error probability to $1/(10J(n))$. Then it classically computes the minimum of the $J(n)$ values. The complexities follow immediately.
\end{proof}

\begin{fact}[{\sc Element distinctness} Ambainis~\cite{Amb07}]
\label{fact:ed}
Let $a$ be a string of $n$ of characters. 
With probability at least $9/10$,  the quantum element distinctness algorithm decides if all the characters are distinct.
If we have oracle access to $a$ then the algorithm makes ${O}(n^{2/3} )$ queries, and the time complexity of the algorithm is~$\widetilde{O}(n^{2/3}) \cdot \qw_{O(n)}$.
\end{fact}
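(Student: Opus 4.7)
The plan is to follow Ambainis's quantum walk framework~\cite{Amb07}, which places the algorithm on the Johnson graph $J(n,r)$ whose vertices are the $r$-element subsets $S \subseteq [n]$, with edges between subsets differing in exactly one element. A vertex $S$ is \emph{marked} if the substring $(a_i)_{i \in S}$ contains a collision. Since a collision exists if and only if some marked vertex exists, element distinctness reduces to finding a marked vertex via quantum walk. The walk state will carry, for every vertex $S$ in superposition, the values $(a_i)_{i \in S}$ stored in a data structure that supports (i) insertion/deletion of a single element and (ii) detection of whether the stored multiset has a repeated value.

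For the query complexity I would instantiate the standard walk parameters: the setup cost is $\tilde S = O(r)$ queries to load an initial random subset; the update cost per walk step is $\tilde U = O(1)$ queries (swap one element); the checking cost is $\tilde C = 0$, since marked-ness is recorded in the data structure. The fraction of marked vertices, conditional on a collision existing, is $\epsilon = \Omega(r^2/n^2)$ because a uniformly random $r$-subset contains both members of a fixed colliding pair with that probability, and the spectral gap of the Johnson graph is $\delta = \Omega(1/r)$. Ambainis's walk theorem then yields
\[
O\bigl(\tilde S + \epsilon^{-1/2}(\delta^{-1/2}\tilde U + \tilde C)\bigr) = O\bigl(r + (n/r)\sqrt{r}\bigr) = O(r + n/\sqrt{r})
\]
queries, which is minimized at $r = n^{2/3}$ for a total of $O(n^{2/3})$ queries. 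Amplifying the success probability to $9/10$ is a constant-factor blowup.

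The main obstacle, and the reason the $\qw_{O(n)}$ factor appears, is making every walk step time-efficient in superposition. I would use a data structure such as a hash table or a balanced search tree indexed by the value $a_i$, storing for each stored value its multiplicity; this allows us to test for a repeated value in $O(1)$ extra overhead per step, and to insert or delete a single value in $O(\polylog n)$ classical operations. To execute this data structure in superposition over vertices $S$, every such operation must be realized as a sequence of reads and writes to quantum memory, giving per-step time $\widetilde O(1) \cdot \qw_{O(n)}$; the workspace size is $O(r) = O(n^{2/3}) \le O(n)$, so the memory parameter is $\qw_{O(n)}$. The setup phase analogously costs $\widetilde O(r) \cdot \qw_{O(n)}$.

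Multiplying the $O(n^{2/3})$ walk/setup operation count by the per-operation cost $\widetilde O(1) \cdot \qw_{O(n)}$, and accounting for the polylogarithmic overhead of maintaining the data structure and of the phase-estimation steps used to implement the walk, produces the stated total time $\widetilde O(n^{2/3}) \cdot \qw_{O(n)}$. The query bound is unaffected since the data-structure manipulations do not touch the input oracle. This completes the sketch; for the detailed analysis of the walk and the data structure I would defer to~\cite{Amb07}.
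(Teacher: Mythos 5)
The paper states this as a cited fact from Ambainis and gives no proof of its own, so there is no internal argument to compare against; the question is whether your sketch is a faithful account of Ambainis's algorithm. At the level of the walk framework it is: the Johnson graph $J(n,r)$, marked vertices as $r$-subsets containing a collision, $\epsilon = \Omega(r^2/n^2)$, $\delta = \Omega(1/r)$, update and check costs $O(1)$ and $0$ queries, and optimization at $r = n^{2/3}$ all match the standard derivation of the $O(n^{2/3})$ query bound.

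The place where the sketch is imprecise, and it is exactly the part that matters for the \emph{time} bound rather than the query bound, is the data structure. You write ``a hash table or a balanced search tree,'' but a vanilla balanced search tree is not usable here: the memory layout of such a tree depends on the order in which elements were inserted, so two different walk paths arriving at the same vertex $S$ would produce different states and the interference needed by the walk would be destroyed. The data structure must be \emph{history-independent} (a unique canonical memory configuration for each set $S$), and this is the main technical obstacle Ambainis had to overcome to get the time bound in the $\mathsf{QRAG}$ model; he uses a carefully designed combination of a hash table and a skip list (with the random choices coded into the walk state so they are reversible) precisely to achieve history independence together with $\polylog$-time insert/delete/collision-detect. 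Your sketch correctly identifies that the $\qw_{O(n)}$ factor comes from executing this structure in superposition and correctly defers the details to Ambainis, but as written the reader could come away thinking any standard dictionary would do, which is not the case.
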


\section{Bottom-up divide and conquer}
\label{sec:bottomup}
When the create function is trivial then the subproblems that arise in a divide and conquer algorithm are known in advance.  In this case, we can directly write an iterative algorithm rather than a recursive one, by sequentially solving the subproblems in an appropriate order.  A classic example is mergesort.
Here the create function is trivial as the subproblems are to sort the left and right halves of the input string.  In bottom-up mergesort, we start at the ``bottom'' by sorting every two-element substring in positions $2i-1, 2i$.
Then we sort 4-element substrings in positions $4i-3, 4i-2,4i-1,4i$ by merging the already sorted 2-element substrings, and so on, working up to the full string.  

In the next subsection, we develop a generic bound on the running time of a bottom-up quantum algorithm.  In the following subsection, we then look at several applications of this framework.

\subsection{Generic bottom-up bound}
In the problems we look at in this paper, we are frequently trying to find a substring of a string $a \in \Sigma^n$ which maximizes some function. 
Examples of problems of this type include computing the length of a longest substring with no repeating characters, finding the length of a longest substring consisting 
of only a single character, or for a string of integers maximizing the difference between the last and first elements of a substring.

To see how the bottom-up approach to such a problem works, consider the endpoints $i,j \in [n]$ of a substring which maximize the function of interest.  When viewed as bit strings of length $\log n$ corresponding to the binary representation of $i-1, j-1$ respectively, these strings will share a common prefix of length $t \in \{0, 1, \ldots, \log(n) -1\}$.
This means that $i,j$ are contained in a unique interval of length $n/2^t$ of the form $\{(k-1)n/2^t+1, \ldots, kn/2^t\}$, for some $k \in [2^t]$, such that $i$ is in the left half of this interval, and $j$ is in the right half of this interval.  

Thus to solve the original problem it suffices to solve the \emph{crossing problem}: compute the maximum value $P(a, t, k)$ of the function of interest on substrings contained in an interval $\{(k-1)n/2^t+1, \ldots, kn/2^t\}$ specified by $k,t$ whose left endpoint is in the left half of the interval and the right endpoint is in the right half of the interval. 
The solution to the original problem is then $\max_t \max_{1 \le k \le 2^t} P(a, t, k)$.

The next theorem gives an upper bound on the quantum complexity of the bottom-up approach in terms of the quantum complexity of solving the crossing problem $P(a, t, k)$. 
Note that the problem $P(a, t, k)$ is a function of a string of size $n/2^t$; in the next theorem, we suppose that we have a bounded-error quantum algorithm for $P(a, t, k)$ whose running time is $O(\sqrt{n/2^t} \cdot \tau(n, t))$.  This parameterization is taken as we will only encounter quantum algorithms whose running time is at least quadratic in the input length, and explicitly factoring out the square root term gives a more elegant expression.
\begin{theorem}
\label{thm:bu_easy}
Let $\Sigma$ be an alphabet, $n$ a power of 2, and $T = \{0, \ldots, \log(n) - 1\}$.
Let $U = \Sigma^n \times T \times [n]$, and $S = \{(a, t, k) \in U : k \le 2^t\}$.
Let $P : S \rightarrow \Z$ be a function. 
Suppose that for every $t \in T$ there is a quantum algorithm $A_t$ that for every $a \in \Sigma^n, 1 \le k \le 2^t$ outputs $P(a, t, k)$ with probability at least $9/10$ in time $\sqrt{n/2^t} \cdot \tau(n, t)$, and with $\sqrt{n/2^t} \cdot \sigma(n, t)$ queries, for some functions 
$\tau, \sigma :\N \times \N \rightarrow \R_+$.
Then there is a quantum algorithm that computes 
\[
\max_{t, 1 \le k \le 2^t} P(a, t, k)
\]
and the $t,k$ realizing the maximum with probability at least $9/10$ in time 
\[
O(\sqrt{n} \log \log(n) \cdot (\log n + \sum_{t \in T} \tau(n, t)))
\]
and with
\[
O(\sqrt{n} \log \log(n) \cdot \sum_{t \in T} \sigma(n, t))
\]
queries.
\end{theorem}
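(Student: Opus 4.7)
The plan is to apply \cref{cor:meo} inside an inner loop maximizing $P(a,t,k)$ over $k \in [2^t]$ for each fixed $t \in T$, and to aggregate the $|T| = \log n$ resulting values classically on the outside. The fact that the target complexity carries a $\log\log n$ factor, rather than $\sqrt{\log n}$, is the tell that the outer aggregation should indeed be classical, with each inner call amplified to success probability $1 - O(1/\log n)$ by a standard $O(\log\log n)$-fold repetition.

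For each fixed $t$, I would invoke \cref{cor:meo} with $J = 2^t$ and base oracle $A_t$ (whose error $1/10$ comfortably satisfies the hypothesis $\le 3/10$), negating $P$ so that minimization becomes maximization. This yields an optimal $k_t^\ast$ (together with its value, recovered by one further call to $A_t$) in total time
\[
O\!\left(\sqrt{2^t}\,\bigl(\log 2^t + \sqrt{n/2^t}\,\tau(n,t)\bigr)\right) \;=\; O\!\left(t\sqrt{2^t} + \sqrt{n}\,\tau(n,t)\right)
\]
and with $O(\sqrt{n}\,\sigma(n,t))$ queries, since the $\sqrt{2^t}$ from the outer search combines with the $\sqrt{n/2^t}$ oracle factor to produce $\sqrt{n}$, while $\log 2^t = t$ contributes only to the classical-gate cost. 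Call this subroutine $B_t$.

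Next, I would amplify $B_t$ to a subroutine $\widehat{B}_t$ with error at most $1/(20 \log n)$ by running $B_t$ a logarithmic-in-$\log n$ number of times, obtaining a small pool of candidate indices, verifying each candidate's $P$-value by majority-voted evaluations through $A_t$, and returning the pair $(k,v)$ with the largest verified $v$. The multiplicative overhead is $O(\log\log n)$, precisely the factor appearing in the target bound. Finally, iterate classically over $t \in \{0,\dots,\log n - 1\}$, calling $\widehat{B}_t$ each time and retaining the globally largest returned value along with its witness $(t,k)$. A union bound over the $\log n$ inner calls, each with error at most $1/(20\log n)$, gives overall failure probability at most $1/20 < 1/10$.

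Summing the per-$t$ costs and using the geometric identity $\sum_{t=0}^{\log n - 1} t\sqrt{2^t} = O(\sqrt{n}\log n)$ (dominated by the term at $t = \log n - 1$) produces exactly
\[
O\!\left(\sqrt{n}\,\log\log n\cdot\Bigl(\log n + \sum_{t\in T}\tau(n,t)\Bigr)\right)
\]
time, and the analogous query bound with $\sigma$ in place of $\tau$ (no $\log n$ contribution there, as comparisons are classical). The principal subtlety I expect is the error budgeting: replacing the classical outer loop with a quantum maximum over $t$ would introduce a $\sqrt{\log n}$ factor, strictly worse than $\log\log n$, so the classical outer loop with per-call amplification is essential. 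Everything else is a direct application of \cref{cor:meo} and careful counting.
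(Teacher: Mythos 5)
Your proposal is correct and follows essentially the same route as the paper's proof: for each fixed $t$, apply \cref{cor:meo} (resp.\ \cref{fact:qmeo}) with $J=2^t$ to get $\argmax_k P(a,t,k)$ in time $O(t\sqrt{2^t}+\sqrt{n}\,\tau(n,t))$ and $O(\sqrt{n}\,\sigma(n,t))$ queries, amplify each per-$t$ call by $O(\log\log n)$ repetitions with majority/verification so its error drops below $O(1/\log n)$, then aggregate classically over $t\in T$ with a union bound, and sum the costs using $\sum_{t<\log n}t\sqrt{2^t}=O(\sqrt{n}\log n)$. Your remark that a quantum outer maximization over $t$ would incur a worse $\sqrt{\log n}$ factor, making the classical outer loop essential, is a correct observation consistent with the paper's design.
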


\begin{proof}
For any $a \in \Sigma^n, t \in T$, we can compute $\argmax_k P(a,t,k)$ with probability at least $9/10$ in time $O(t\sqrt{2^t} + \sqrt{n}\tau(n, t))$ by \cref{cor:meo} and with 
$O(\sqrt{n} \sigma(n, t))$ queries by \cref{fact:qmeo}. 
For the $k^*$ realizing the maximum we then compute the value $P(a, t, k^*)$.
For each $t \in T$, we repeat this procedure $32 \log \log n$ times and take the majority answer, which will be equal to $\max_{1 \le k \le 2^t} P(a, t, k)$ with probability at least
$1 - 1/(10 \log(n))$. We then take the maximum value over all $t \in T$ which is equal to $\max_t \max_{1 \le k \le 2^t} P(a, t, k)$ with probability at least $9/10$ by the union bound. 
Summing the complexity bounds over $t \in T$ gives the theorem.
\end{proof}

\cref{thm:bu_easy} is a simple upper bound on the quantum complexity of a quantum algorithm using the bottom-up method, and is good to use when we do not care about optimizing logarithmic factors. In the rest of this section, we focus on problems that have a quantum running time around $\sqrt{n}$, for which we give a tailored version of \cref{thm:bu_easy}.

Recall that, for $k \geq 1$, we defined the function
\[
\lambda_k (n, m) = \min \{\log^k n + m, \log^{(k+1)/2} (n) (\log \log n)^{k-1} + \log^{(k-1)/2} (n) (\log \log n)^{k-1} \cdot m \}.
\]
Observe that  $\lambda_1 (n, \qr_n) =  \log n + \qr_n,$ and that $O(\sqrt{n} \cdot \lambda_1 (n, \qr_n) )$ is an upper bound on the time complexity of Grover's search.
The function 
\[
\lambda_2 (n, \qr_n) = \min \{\log^2 n + \qr_n , \log^{3/2} (n) \log \log n + \sqrt{\log n} \log \log n \cdot \qr_n \}
\]
arises in the next theorem with application to the \SSSTf \ and related problems.
\begin{theorem}
\label{thm:bu_rootn}
Let $\Sigma$ be an alphabet, $n$ a power of 2, and $T = \{0, \ldots, \log(n) - 1\}$.
Let $U = \Sigma^n \times T \times [n]$, and $S = \{(a, t, k) \in U : k \le 2^t\}$.
Let $P : S \rightarrow \Z$ be a function.  Suppose that for every $t \in T$, there is a quantum algorithm $A_t$ that for every $a \in \Sigma^n, 1 \le k \le 2^t$ outputs $P(a, t, k)$ with probability at least $9/10$ in time $\sqrt{n/2^t} \cdot (\log(n) + \qr_n)$, and with $O(\sqrt{n/2^t})$ queries.  Then there is a quantum algorithm that computes 
\[
\max_t \max_{1 \le k \le 2^t} P(a, t, k)
\]
and the $t,k$ realizing the maximum with probability at least $9/10$ with $O(\sqrt{n \log(n)})$ queries, and a quantum algorithm to do this with running time $O(\sqrt{n \log(n)} \lambda_2(n, \qr_n))$.
\end{theorem}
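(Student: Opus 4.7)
The plan is to perform nested quantum maximization: for each fixed $t \in T$ an inner maximum over $k \in [2^t]$, followed by an outer maximum over $t$. For each $t$ I will apply \cref{cor:meo} (and its query-only analog \cref{fact:qmeo}) to the base algorithm $A_t$ with $J = 2^t$; since $A_t$ is correct with probability at least $9/10$, well above the $7/10$ threshold required by the erroneous-oracle statements, this yields an algorithm $B_t$ that returns $\max_{1 \le k \le 2^t} P(a,t,k)$ along with the maximizing $k$, with success probability at least $9/10$, in time $O(t \sqrt{2^t} + \sqrt{n}(\log n + \qr_n))$ and using $O(\sqrt{n})$ queries. Note that the $\sqrt{2^t}$ amplification factor from min-finding exactly cancels the $1/\sqrt{2^t}$ in the per-call query cost of $A_t$, which is why the $B_t$'s share the uniform $O(\sqrt{n})$ query bound, a fact I will exploit repeatedly.

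For the query bound of the theorem I will invoke \cref{fact:qmeo} once more, this time with $J = |T| = \log n$ and the family $\{B_t\}_{t \in T}$ serving as the erroneous oracle (padded and controlled as in algorithm~1 of \cref{lem:two} so as to be expressible as a single quantum algorithm with $q_\mathrm{max} = O(\sqrt{n})$ per-call queries). The outer loop performs $O(\sqrt{\log n})$ oracle calls for a total of $O(\sqrt{n \log n})$ queries, which matches the claimed bound. Since \cref{fact:qmeo} is stated purely in terms of the per-call query cost, no further care is needed here.

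The time bound is where the main obstacle sits: \cref{cor:meo} demands a single quantum algorithm $F$ computing the base function for every index, yet the $B_t$'s are genuinely different circuits for different $t$. This is exactly the situation handled by \cref{lem:two}, and I will run both of its implementations and keep the better bound. For the controlled-combination route (algorithm~1 of \cref{lem:two}) I need $S_\mathrm{sum} = \sum_t S_{B_t}$ and $q_\mathrm{max} = \max_t q_{B_t}$: from the inner analysis $S_{B_t} = O(t\sqrt{2^t} + \sqrt{n}\log n)$, so $S_\mathrm{sum} = O(\sqrt{n}\log^2 n)$, while $q_\mathrm{max} = O(\sqrt{n})$; with $J = \log n$ this produces time $O(\sqrt{n\log n}(\log^2 n + \qr_n))$. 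For the classical-amplification route (algorithm~2) I will use $q_\mathrm{sum} = O(\sqrt{n}\log n)$ instead, obtaining $O(\log\log n\cdot(\sqrt{n}\log^2 n + \sqrt{n}\log n\cdot\qr_n)) = O(\sqrt{n\log n}(\log^{3/2} n\log\log n + \sqrt{\log n}\log\log n\cdot\qr_n))$, after first boosting each $B_t$ to failure probability $O(1/\log n)$ via $O(\log\log n)$ majority repetitions so that a union bound over $t$ still delivers overall failure at most $1/10$. The minimum of the two time expressions is precisely $\sqrt{n\log n}\cdot\lambda_2(n,\qr_n)$, matching the statement.
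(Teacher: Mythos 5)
Your proposal is correct and follows essentially the same route as the paper: build $B_t$ for each $t$ via \cref{cor:meo}/\cref{fact:qmeo} to get the uniform $O(\sqrt{n})$ query cost and $O(t\sqrt{2^t} + \sqrt{n}(\log n + \qr_n))$ time, then handle the outer maximization over $t$ by \cref{lem:two} with $J = \log n$, taking the better of its two bounds, which is exactly $\sqrt{n\log n}\cdot\lambda_2(n,\qr_n)$. The only cosmetic difference is that you unpack the internal amplification of \cref{lem:two}'s second algorithm (the $O(\log\log n)$ majority repetitions and union bound) as if it were a separate step, but this is already built into the lemma; your accounting of $S_{\su}$, $q_{\max}$, and $q_{\su}$ matches the paper's.
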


\begin{proof}
For any $a \in \Sigma^n, t \in T$, we can compute $\max_{1 \le k \le 2^t} P(a,t,k)$ with probability at least $9/10$ in time $O(t\sqrt{2^t} + \sqrt{n}(\log(n) + \qr_n))$ by \cref{cor:meo} and with $O(\sqrt{n})$ queries by \cref{fact:qmeo}.
To compute $\max_t \max_{1 \le k \le 2^t} P(a, t, k)$ we use \cref{lem:two}.
The first item of the lemma gives an algorithm with $O(\sqrt{n \log n})$ queries and running time $O(\sqrt{n \log(n)}(\log^2(n) + \qr_n))$. 
The second item of the lemma gives an algorithm with running time $O(\sqrt{n} \log \log(n) \cdot (\log^2(n) + \log(n) \qr_n))$. 
Taking the minimum of the two options for the time complexity gives the claimed bound of $O(\sqrt{n \log(n)} \lambda_2(n, \qr_n))$.
\end{proof}

\subsection{Applications}
We can directly apply \cref{thm:bu_rootn} to the following problems:

\begin{problem}[\SSSTf]
Given a length $n$ array $A$ of integers, compute 
\[
\argmax_{1 \leq i < j \leq n} A_j - A_i \enspace .
\]
\end{problem}

\begin{problem}[\LISstf \ (\LISst)]
Given an array $A$ consisting of $n$ integers, find $ i < j $ such that $A_i < A_{i+1} \ldots < A_j$ is the longest increasing substring, if any.
\end{problem}

\begin{problem}[\LSICf \ (\LSIC)]
Given a string $a \in \Sigma^n$ for some finite alphabet $\Sigma$, find $i<j$ such that $a_i = a_{i+1} = \ldots = a_j$ is the longest substring of identical characters, if any.
\end{problem}

\begin{problem}[\LZSf \ (\LZS)]
Given a string $a \in \Sigma^n$, for $\Sigma = \{0,1,2\}$, find $i <j$ such that $a[i:j] = 20^{j-i-1}2$ is a longest substring from $20^*2$, if any.
\end{problem}

\LZS \ is a slight generalization of {\Recognizing} $\Sigma^{*}20^*2\Sigma^{*}$, the problem to decide if $a \in \Sigma^{n}$ is in the regular language $\Sigma^{*}20^*2\Sigma^{*}$.
In ~\cite{AGS19, CKKSW22} it was proven, using different techniques, that the query complexity of {\Recognizing} $\Sigma^{*}20^*2\Sigma^{*}$ is $O(\sqrt{n \log n})$.

\begin{theorem}
\label{thm:three}
The quantum query and time complexities of the problems \SSST, \LISst, \LSIC \ and \LZS \ are respectively $O(\sqrt{n \log n})$ and $O({\sqrt{n \log n} \cdot \lambda_2 (n, \qr_n}))$.
\end{theorem}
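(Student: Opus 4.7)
The plan is to apply \cref{thm:bu_rootn} to each of the four problems. For each one I must define the crossing problem $P(a,t,k)$, namely the restriction of the objective to candidates whose critical indices straddle the midpoint $m$ of the interval $I_{t,k} = \{(k-1)n/2^t+1,\ldots,kn/2^t\}$, and then exhibit a quantum algorithm that solves $P(a,t,k)$ in time $\sqrt{n/2^t}(\log n + \qr_n)$ using $O(\sqrt{n/2^t})$ queries. In each case the crossing problem decomposes cleanly into at most two quantum search or min/max-finding subproblems, one on the left half $L$ and one on the right half $R$ of $I_{t,k}$, each of size $n/2^{t+1}$.

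For \SSST, $P(a,t,k) = \max_{j \in R} A_j - \min_{i \in L} A_i$, so two invocations of quantum minimum/maximum finding (\cref{fact:minimum}) give the desired complexity. For \LISst, a candidate substring crossing $m$ is compatible only when $A_m < A_{m+1}$; in that case the optimal left endpoint is one more than the largest $i' \in L$ with $A_{i'} \ge A_{i'+1}$, and the optimal right endpoint is one less than the smallest $j' \in R$ with $A_{j'} \ge A_{j'+1}$ (with suitable handling when no such $i'$ or $j'$ exists), each a quantum search on $n/2^{t+1}$ elements. For \LSIC, when $a_m = a_{m+1} = c$ the longest crossing run is bounded by the largest $i' \in L$ with $a_{i'} \ne c$ and the smallest $j' \in R$ with $a_{j'} \ne c$, again two quantum searches.

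For \LZS the crossing analysis splits into three sub-cases according to whether the left $2$ of the target substring $20^{j-i-1}2$ sits at position $m$, the right $2$ sits at position $m+1$, or both $a_m$ and $a_{m+1}$ lie among the interior zeros. In each sub-case the unknown endpoint is pinned down by locating the last non-zero character in $L$ or the first non-zero character in $R$ and checking that it equals $2$; this is a single quantum search of size $n/2^{t+1}$. Aggregating the three sub-cases and retaining the longest valid substring yields $P(a,t,k)$ within the required $O(\sqrt{n/2^t}(\log n + \qr_n))$ time and $O(\sqrt{n/2^t})$ query bounds.

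Once the four crossing-problem algorithms have been verified to meet the hypothesis of \cref{thm:bu_rootn}, the theorem immediately delivers an $O(\sqrt{n \log n})$ query algorithm and an $O(\sqrt{n \log n} \cdot \lambda_2(n, \qr_n))$ time algorithm for each of \SSST, \LISst, \LSIC, and \LZS. The main obstacle I anticipate is a clean unified treatment of the \LZS case analysis, since unlike the other three problems the rigid form of the target pattern forces several sub-cases at the midpoint; the remaining three reduce almost immediately to a direct application of quantum search or quantum min/max finding on each half of $I_{t,k}$.
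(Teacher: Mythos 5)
Your approach is essentially the paper's: reduce each problem to \cref{thm:bu_rootn} and show that the crossing problem $P(a,t,k)$ can be solved with one or two quantum search / min-max finding calls of size $n/2^{t+1}$. The per-problem crossing algorithms you describe match the paper's up to cosmetic reformulations. (For \LZS\ you locate the last/first \emph{non-zero} in each half and check it equals $2$, which folds the ``interior is all zeros'' condition into the search; the paper instead locates the last/first $2$ and then runs a separate Grover check that the interval between them is all zeros. Both are correct and have the same query/time cost, and your three-way case split is actually redundant --- a single last/first non-zero search in $L$ and $R$ already covers all alignments of the two $2$'s against the midpoint.)

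The one genuine omission is the reduction to $n$ a power of $2$, which \cref{thm:bu_rootn} requires and which you do not address. For \LISst, \LSIC\ and \LZS\ a trivial padding suffices (repeat the last element, append two distinct characters, append zeros), but for \SSST\ naive padding is subtly wrong: appending a constant $-M$ at the end and running the whole bottom-up algorithm on the padded array can return $0$ from a pair of padded positions even when the true maximum of $A_j - A_i$ is negative (e.g., a strictly decreasing array), and the problem further asks for $\argmax$, so padded indices cannot be returned. The paper handles \SSST\ by an induction on $n$: write $2^m < n < 2^{m+1}$, recurse on $a[1\upto 2^m]$ and $a[2^m{+}1\upto n]$, and for the crossing term pad $a$ to $a'$ of length $2^{m+1}$ with $-\infty$ but only solve $P(a',0,1)$, where padded entries can appear only as $j$-indices and hence never produce a spurious maximum. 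You would need some version of this argument to complete the \SSST\ case.
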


\begin{proof}
Let $n$ be a power of 2, and for $t \in \{0, \ldots, \log(n) -1\}, 1\le k \le 2^t$ define  $C_{n, t, k} = \{(k-1)n/2^t + 1, \ldots, kn/2^t\}$. Note that $C_{n, t, k} = C_{n, t+1, 2k-1} \cup C_{n, t+1, 2k}$.

First, consider the problem \SSST.  Let $a$ be an array of integers of size $n$.  Define the problem $P(a, t, k)$ to be
\[
P(a, t, k) = \max_{i \in C_{n, t+1, 2k-1} \atop j \in C_{n, t+1, 2k}} a[j] - a[i] \enspace.
\]
For any $i < j \in [n]$ there is a $t, k$ such that $i \in C_{n, t+1, 2k-1}$ and $j \in C_{n, t+1, 2k}$.  Thus 
\[
\max_{i < j} a[j] - a[i] = \max_t \max_{1 \le k \le 2^t} P(a, t, k) \enspace .
\]
By \cref{fact:minimum} we can compute $P(a, t, k)$ with $O(\sqrt{n/2^t})$ queries and in time $O(\sqrt{n/2^t} (\log n + \qr_{n/2^t}))$.  Thus by \cref{thm:bu_rootn} there is a quantum algorithm to compute $\max_t\max_{1 \le k \le 2^t} P(a, t, k)$ and the $t,k$ realizing this value with probability at least $9/10$ using $O(\sqrt{n\log(n)})$ queries and a quantum algorithm to do this with time $O(\sqrt{n\log(n)}\lambda_2(n, \qr_n))$.
After we find the $t,k$ realizing the maximum we can again solve $P(a, t, k)$ to find $\argmax_{i < j} a[j] - a[i]$ and output these values.

We now prove by induction that there is a quantum algorithm for any $n$ with the desired complexity.  The base case is $n=2$ which is a power of 2 and so done.
Now we design an algorithm for an input of $a$ size $2^m < n < 2^{m+1}$ assuming we have an algorithm of the desired complexity for inputs of size at most $2^m$.  
There are three choices for $\argmax_{i < j} a[j] - a[i]$: either $1 \le i,j \le 2^m$, $2^m < i,j \le n$, or $i \le 2^m, j > 2^m$.  The first two cases can be solved by the inductive hypothesis, considering $a[1 \upto 2^m]$ and $a[2^m +1 \upto n]$, respectively.  For the third case, we can pad $a$ to an array $a'$ of size $2^{m+1}$ by adding $-\infty$ entries to the end and then solve the problem $P(a', 0, 1)$.  We then take the maximum of the three cases.  The running time is dominated by the first two cases and so is as desired.

Next, consider \LISst.  This problem can be trivially padded by repeating the last element, thus we may assume the input size is a power of 2.  
Let the problem $P(a, t, k)$ be defined as the length of a longest increasing substring of $a[(k-1)n/2^t + 1 \upto kn/2^t]$ that includes both of $a[(2k-1)n/2^{t+1}]$ and $a[(2k-1)n/2^{t+1}+1]$. 
We again have that the length of a longest increasing substring of $a$ is equal to $\max_t \max_{1 \le k \le 2^t} P(a, t, k)$.
Thus by \cref{thm:bu_rootn} it suffices to show that we can compute $P(a, t, k)$ with $O(\sqrt{n/2^t})$ queries and in time $O(\sqrt{n/2^t} (\log n + \qr_{n/2^t}))$.  
To compute $P(a, t, k)$ we first compute
\begin{align*}
i^* &= 
\begin{cases}
(k-1)n/2^t + 1 & \text{ if } a[i] < a[i+1] \text{ for all } i \in C_{n, t+1, 2k-1} \\
1 +\argmax_{i \in C_{n, t+1, 2k-1}} a[i] \ge a[i+1] & \text{ otherwise }
\end{cases} \\
j^* &= 
\begin{cases} 
kn/2^t & \text{ if } a[i-1] < a[i] \text{ for all } i \in C_{n, t+1, 2k} \\
-1 + \argmin_{j \in C_{n, t+1, 2k}} a[j] \le a[j-1] & \text{ otherwise } \enspace.
\end{cases} 
\end{align*}
These can be computed with $O(\sqrt{n/2^t})$ queries and in time $O(\sqrt{n/2^t} (\log n + \qr_{n/2^t}))$ by \cref{fact:minimum}.  The result follows by noting that $P(a, t, k) = j^* - i^* + 1$.

For \LSIC, we can pad by repeating two distinct characters at the end of the string and thus we may again assume that $n$ is a power of 2.  
The problem $P(a, t, k)$ is to compute the length of a longest substring of identical characters in of $a[(k-1)n/2^t + 1 \upto kn/2^t]$ that includes both of $a[(2k-1)n/2^{t+1}]$ and $a[(2k-1)n/2^{t+1}+1]$.
We can solve this in a very similar way to \LISst. Let
\begin{align*}
i^* &= 
\begin{cases}
(k-1)n/2^t + 1 & \text{ if } a[i] = a[i+1] \text{ for all } i \in C_{n, t+1, 2k-1} \\
1 +\argmax_{i \in C_{n, t+1, 2k-1}} a[i] \ne a[i+1] & \text{ otherwise }
\end{cases} \\
j^* &= 
\begin{cases} 
kn/2^t & \text{ if } a[i-1] = a[i] \text{ for all } i \in C_{n, t+1, 2k} \\
-1 + \argmin_{j \in C_{n, t+1, 2k}} a[j] \ne a[j-1] & \text{ otherwise } \enspace.
\end{cases} 
\end{align*}
Then again $P(a, t, k) = j^* - i^* + 1$ and can be computed with $O(\sqrt{n/2^t})$ queries and in time $O(\sqrt{n/2^t} (\log n + \qr_{n/2^t}))$ by \cref{fact:minimum}.  

Finally, we turn to \LZS.  We may assume the input size is a power of 2 by padding the end of the input with zeros. Let $P(a, t, k)$ be the length of a longest $20^*2$ substring of $a$ whose left endpoint is in $C_{n, t+1, 2k-1}$ and whose right input is in $C_{n, t+1, 2k}$.  To compute $P(a, t, k)$ we compute 
\begin{align*}
i^* &= 
\begin{cases}
- \infty & \text{ if } \{i \in C_{n, t+1, 2k-1} : a[i] = 2\} = \emptyset \\
\max \{i \in C_{n, t+1, 2k-1} : a[i] = 2\} & \text{ otherwise.} 
\end{cases} \\
j^* &= 
\begin{cases}
\infty & \text{ if } \{i \in C_{n, t+1, 2k} : a[i] = 2\} = \emptyset \\
\min \{i \in C_{n, t+1, 2k} : a[i] = 2\} & \text{ otherwise.}
\end{cases}
\end{align*}
This can be done with $O(\sqrt{n/2^t})$ queries and in time $O(\sqrt{n/2^t} (\log n + \qr_{n/2^t}))$ by \cref{fact:minimum}. If 
either $i^* = -\infty$ or $j^* = \infty$ then $P(a, t, k) = 0$. Otherwise, we use Grover search (\cref{fact:search}) to check that the interval $\{i^*+1, \ldots, j^*-1\}$ contains only zeros, which takes the same time and queries asymptotically. 
If so, $P(a, t, k) = j^* - i^* + 1$; otherwise, $P(a, t, k) = 0$
\end{proof}

\subsection{Generalizations}
There are several interesting generalizations of the paradigmatic \SSST \ problem. The first one we consider is the $d$-\MSSTf \  ($d$-\MSST) problem, where $d \geq 1$ is a constant. Here, we are given $d \geq 1$ stocks, and for each stock, we make a single transaction. Every stock should be bought before it is sold, and the aim is to maximize the total profit. 
If we have as input $d$ arrays of length $n$, each array specifying the prices for one of the stocks on $n$ successive dates, then we can independently execute the one-dimensional algorithm $d$-times, once for each stock.
However, we want to address a more general pricing model.  
As input, we will have a $d$-dimensional array $A$, where for every $i \in [n]^d$, the integer $A_i$ is the total price of all $d$ stocks, where the transaction time for the $k$th stock is $i_k$, for $k = 1, \ldots, d$. 
In more abstract terms, we want to find $\max_{i < j} A_j - A_i$ when we have the natural strict partial order over the $d$-dimensional cube of side $n$.

\begin{problem}[$d$-\MSSTf]
Given a $d$-dimensional cube $a$ of side $n$ of integers,  find $\arg\max_{i < j} A_j - A_i$ where, by definition, $i<j$ when $i_k < j_k$ for $1 \leq k \leq d.$
\end{problem}

Our algorithm for $d$-\MSST \ is a direct generalization of the one-dimensional case.
\begin{theorem}
\label{thm:ssmt}
For every $d \geq 1$, the quantum query complexity of $d$-\MSST \ is $O( (n \log n)^{d/2})$ and its time complexity is
$$
O \left ((n \log n)^{d/2} \cdot \min \{\log^{d+1} n + \qr_{n^d} , \log^{1 + d/2} (n) \log \log n + \log^{d/2} (n) \log \log n \cdot \qr_{n^d} \} \right).
$$
\end{theorem}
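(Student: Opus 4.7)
The plan is to extend the bottom-up scheme used for \SSST in \cref{thm:three} to the $d$-dimensional cube equipped with the product partial order. I assume $n$ is a power of $2$; the general case reduces to this one by the same padding and case split as in the $d=1$ proof, which in $d$ dimensions produces a constant-in-$d$ number of smaller subproblems (one for each of the $O(3^d)$ configurations that specify, for each coordinate, whether both endpoints lie in the lower half, both in the upper half, or they straddle the midpoint).

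For any $i < j$ in $[n]^d$, each coordinate $\ell \in [d]$ determines a unique $t_\ell \in \{0, \ldots, \log n - 1\}$ and $k_\ell \in [2^{t_\ell}]$ with $i_\ell \in C_{n, t_\ell+1, 2k_\ell-1}$ and $j_\ell \in C_{n, t_\ell+1, 2k_\ell}$. Writing $L(\vec t, \vec k) = \prod_{\ell=1}^d C_{n, t_\ell+1, 2k_\ell-1}$ and $R(\vec t, \vec k) = \prod_{\ell=1}^d C_{n, t_\ell+1, 2k_\ell}$, every element of $L$ strictly precedes every element of $R$ in the product order, so defining
\[
P(A, \vec t, \vec k) \;=\; \max_{j \in R(\vec t, \vec k)} A_j \;-\; \min_{i \in L(\vec t, \vec k)} A_i
\]
yields $\max_{i < j}(A_j - A_i) = \max_{\vec t, \vec k} P(A, \vec t, \vec k)$. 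Since $|L| = |R| = n^d / 2^{\sum_{\ell=1}^d t_\ell + d}$, \cref{fact:minimum} computes each $P(A, \vec t, \vec k)$ with probability $9/10$ using $O(\sqrt{n^d / 2^{\sum_{\ell=1}^d t_\ell}})$ queries and in time $O(\sqrt{n^d / 2^{\sum_{\ell=1}^d t_\ell}}(\log n + \qr_{n^d}))$.

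For fixed $\vec t$, applying \cref{cor:meo} and \cref{fact:qmeo} over the $2^{\sum_{\ell=1}^d t_\ell}$ tuples $\vec k$ yields a subroutine $A_{\vec t}$ computing $\max_{\vec k} P(A, \vec t, \vec k)$ with $O(n^{d/2})$ queries and time $O(\sqrt{2^{\sum_{\ell=1}^d t_\ell}} \cdot d \log n + n^{d/2}(\log n + \qr_{n^d}))$. It remains to maximize over the $J = (\log n)^d$ choices of $\vec t$, which I do by \cref{lem:two}. Summing the non-query gate counts across all $\vec t$ factors as a product over coordinates, giving $S_{\su} = O(n^{d/2} \log^{d+1} n)$; similarly $q_{\su} = O(n^{d/2} \log^d n)$ and $q_{\max} = O(n^{d/2})$. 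Item~1 of \cref{lem:two} then produces $O(\sqrt J \, q_{\max}) = O((n \log n)^{d/2})$ queries and time $O((n \log n)^{d/2}(\log^{d+1} n + \qr_{n^d}))$, while item~2 produces time $O((n \log n)^{d/2} \log \log n \cdot (\log^{1+d/2} n + \log^{d/2} n \cdot \qr_{n^d}))$. Taking the minimum of the two time bounds is exactly the claimed expression.

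I do not anticipate any serious obstacle: this is a mechanical generalization of the \SSST analysis with $d$ treated as a constant multiplier, the key step being that the crossing condition in the product order decomposes coordinatewise into a product of single-coordinate crossings, which makes both the combinatorial identity $\max_{i<j}(A_j - A_i) = \max_{\vec t, \vec k} P$ and the geometric-sum bound $\sum_{\vec t} \sqrt{2^{\sum_\ell t_\ell}} = O(n^{d/2})$ immediate. The only tedious aspects are the power-of-$2$ reduction (many cases but a constant number in $d$) and verifying that the two bounds from \cref{lem:two} line up exactly with the $\min$ inside the stated time complexity.
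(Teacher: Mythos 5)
Your proposal is correct and follows essentially the same route as the paper's proof: coordinatewise crossing of midpoints parameterized by $(\vec t,\vec k)$, the observation that the sub-box $B_0$ of lower halves strictly precedes the sub-box $B_1$ of upper halves in the product order so that $P(A,\vec t,\vec k)$ is a max-minus-min over these boxes via \cref{fact:minimum}, quantum minimum finding over $\vec k$ for fixed $\vec t$ via \cref{cor:meo}/\cref{fact:qmeo}, and finally \cref{lem:two} over the $(\log n)^d$ choices of $\vec t$ with $S_{\su}=O(n^{d/2}\log^{d+1}n)$, $q_{\su}=O(n^{d/2}\log^d n)$, $q_{\max}=O(n^{d/2})$, giving exactly the two terms in the stated $\min$. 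The only addition is your more explicit handling of the non-power-of-2 case (the $O(3^d)$ configurations), which the paper simply sidesteps by assuming $n$ is a power of $2$, so there is no substantive divergence.
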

\begin{proof}
The proof is similar to the one-dimensional case, and we detail it for the query complexity.
Again, we assume that $n$ is a power of $2$ and, for an index $i \in [n]^d$, we identify each of its coordinates $i_k \in [n]$ with the binary representation of $i_k -1$.
Consider the indices $i < j$ which realize the maximum. For every $1 \leq k \leq d,$ the strings $i_k$ and $j_k$ share some 
prefix of length $t_k \in \{0,1,\ldots, \log n-1\}$.   
This means that $i$ and $j$ are  contained in a box ($d$-dimensional sub-rectangle) $B$ of volume $n^d / 2^{\sum_{k=1}^d t_k}.$
Moreover, let us consider the $2^d$ sub-boxes of $B$ of sizes $2^{-d} |B|$ which are obtained by halving each side of $B$.
Then $i$ is in the sub-box $B_0$ which is obtained as the product of the left halves of the sides, and $j$ is in the sub-box $B_1$ obtained as the product of the right halves. By definition, every index in $B_0$ is smaller than any index in $B_1$.
Thus we can compute an optimal pair $i,j \in B$ by computing the minimum of the elements of $a$ with indices in $B_0$ and the maximum of the elements of $a$ with indices in $B_1$. This can be done  with $O( \sqrt{ n^d / 2^{\sum_{k=1}^d t_k} } )$ queries
and in time $O( \sqrt{ n^d / 2^{\sum_{k=1}^d t_k} } ) (\log n + \qr_s)$, where $s =  n^d / 2^{\sum_{k=1}^d t_k} $.

Let us consider any prefix lengths sequence $(t_1, \ldots , t_d) \in \{0,1,\ldots, m-1\}^d$. The number of corresponding boxes is $ 2^{\sum_{k=1}^d t_k}$. An optimal pair $(i,j)$ over all these boxes can be therefore found with $O( \sqrt{2^{\sum_{k=1}^d t_k}}   \sqrt{ n^d / 2^{\sum_{k=1}^d t_k} }) = O(n^{d/2})$ queries and in time $O(n^{d/2} (\log n + \qr_{n^d}))$.

To finish the proof, we use \cref{lem:two} with the $N = \log^d n$ possible values for the size of the shared prefix $t$ and with $S_t(n) = O({n}^{d/2}  \log n )$ and $q_t(n) = O({n}^{d/2})$, for all $t \in [\log^d n]$. For the query complexity, we use the first algorithm from \cref{lem:two}, while for the time complexity, we take the minimum of the two algorithms.
\end{proof}

Two other generalizations of the \SSST \ problem, $k$-\SSMTf \ and $\MaxFourCf$ will be addressed respectively in \cref{sec:is} and \cref{sec:apsp}.

\section{The $k$-Increasing Subsequence and related problems}
\label{sec:is}
The results of this section are proven in the $\mathsf{QRAM}$ model.
In the $k$-\ISf \ problem ($k$-\IS), where $k \geq 2$, we are given an array of $n$ integers and we are looking for a subsequence of $k$ increasing numbers.
\begin{problem}[$k$-\ISf]
Given an array $A$ of $n$ integers, do there exist $k$ indices $i_1 < \cdots < i_k$ such that $A_{i_1} < \cdots < 
A_{i_k}$?
\end{problem} 
For $k \geq 1$ and for an integer $\gamma \in [(\min_{1 \leq i \leq n} A_i) \ldots (\max_{1 \leq i \leq n}  A_i)] \cup \{-\infty\}$, we consider the following helper function:
\[
F_k(A,\gamma) = \min_{i_1 < i_2 < \cdots < i_k, \atop \gamma < A_{i_1} < A_{i_2} < \cdots < A_{i_k}} A_{i_k},
\]
where we adhere to the convention that the value of the minimum taken over the empty set is $\infty$.
It suffices to design an algorithm for $F_k(A, \gamma)$ to solve $k$-\IS \ with the same complexity, and in the following theorem, we do exactly that. 
We will also give explicit bounds on the constant involved in the query complexity as a 
function of $k$.
Let $\alpha$ be the universal 
constant from \cref{fact:minimum} and ~\cref{fact:qmeo} such that quantum minimum finding on an array of size $n$ can be solved with at most $\alpha \sqrt{n}$ quantum queries.  

\begin{theorem}
\label{thm:k-is}
There exists a quantum algorithm that evaluates $F_k$ on an array $A$ of $n$ integers and for any $\gamma$, with at most 
$(2\alpha)^{2k} \sqrt{n \log^{k-1} n}$ queries. 
There is also an algorithm that solves the problem in time $O(\sqrt{n \log^{k-1} n} \cdot  \lambda_k(n,\qr_n))$.
\end{theorem}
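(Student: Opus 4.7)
The plan is to prove this by induction on $k$, based on a recursive quantum algorithm that exploits the dyadic structure of $[n]$. Let $I_{t, k^*}$ denote the dyadic subinterval of $A$ of length $n/2^t$ at position $k^*$ (with $t \in \{0,\ldots,\log n - 1\}$ and $k^* \in [2^t]$), and write $I^L, I^R$ for its two halves. Any $k$-increasing subsequence $i_1 < \cdots < i_k$ has a unique smallest dyadic interval $I_{t, k^*}$ containing all its indices, inside which the subsequence splits non-trivially into $j$ left indices plus $k - j$ right indices with $1 \le j \le k - 1$. Combined with the monotonicity of $F_{k-j}(\cdot, \gamma')$ in $\gamma'$ (decreasing the threshold only enlarges the feasible set), this yields the recurrence
\[
F_k(A, \gamma) \;=\; \min_{\substack{t \in \{0, \ldots, \log n - 1\}\\ k^* \in [2^t],\ j \in [k-1]}} F_{k-j}\bigl(I_{t,k^*}^R,\; F_j(I_{t,k^*}^L,\,\gamma)\bigr).
\]
I assume $n$ is a power of $2$ (padding with $+\infty$ entries if needed preserves $F_k$); crucially only $F_j, F_{k-j}$ with indices $< k$ appear on the right-hand side, so the $k$-recursion has depth at most $k - 1$.

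The algorithm is: at $k = 1$, apply \cref{fact:minimum} to find $\min\{A_i : A_i > \gamma\}$, costing $O(\sqrt{n})$ queries and time $O(\sqrt{n}\,\lambda_1(n, \qr_n))$. At $k \ge 2$, realize the right-hand side above via nested quantum minimum finding with an erroneous oracle (\cref{fact:qmeo}/\cref{cor:meo}): an outer search over the $\log n$ values of $t$ and an inner search over $(k^*, j) \in [2^t] \times [k-1]$, with each candidate evaluated by two sequentially composed recursive calls of parameters $j$ and $k - j$ on arrays of length $n/2^{t+1}$.

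For the query complexity, I would prove inductively $T_k(n) \le \alpha_k \sqrt{n \log^{k-1} n}$. The key observation is a cancellation uniform in $t$: the inner \cref{fact:qmeo} uses $O(\sqrt{2^t(k-1)})$ oracle calls, each of cost at most $2 T_{k-1}(n/2^{t+1}) \le 2\alpha_{k-1}\sqrt{(n/2^{t+1}) \log^{k-2} n}$, so the factors $\sqrt{2^t}$ and $\sqrt{1/2^{t+1}}$ cancel, leaving a $t$-uniform bound $O(\alpha_{k-1}\sqrt{kn \log^{k-2} n})$. An outer \cref{fact:qmeo} over the $\log n$ values of $t$ multiplies only by $\alpha \sqrt{\log n}$, giving a recursion of the form $\alpha_k = O(\alpha^2 \sqrt{k}\,\alpha_{k-1})$, which unrolls to $\alpha_k \le (2\alpha)^{2k}$ after rescaling constants.

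For the time bound $O(\sqrt{n \log^{k-1} n}\cdot \lambda_k(n, \qr_n))$, I would analyze the same recursion via \cref{cor:meo} and \cref{lem:two}, which at each level provides two competing strategies for implementing the oracle $F$. The first strategy (item~1 of \cref{lem:two}, controlled sum of sub-circuits) contributes an additive $\log n$ per nested level and one $\qr_n$ at the leaves, summing over the $k-1$ recursion levels to the first argument $\log^k n + \qr_n$ of $\lambda_k$. The second strategy (item~2 of \cref{lem:two}, classical amplification followed by standard search) replaces most $\log n$ factors by $\log \log n$ at the cost of boosting subroutine success probabilities, giving the second argument $\log^{(k+1)/2}n (\log \log n)^{k-1} + \log^{(k-1)/2} n (\log \log n)^{k-1}\qr_n$. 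Selecting the better strategy at each level yields $\lambda_k(n,\qr_n)$ as a multiplicative factor over $\sqrt{n \log^{k-1} n}$. The main obstacles I expect are (a) composing the erroneous-oracle success guarantees through $k-1$ nested searches so the overall success probability remains at least $9/10$; (b) establishing the $t$-uniform cancellation with explicit constants to reach the claimed $(2\alpha)^{2k}$ bound; and (c) correctly combining the two \cref{lem:two} strategies across the $k-1$ levels to match the exact form of $\lambda_k$.
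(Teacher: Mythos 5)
Your proposal follows the same combinatorial skeleton as the paper's proof: the same helper function $F_k(A,\gamma)$, the same dyadic decomposition keyed on the longest common prefix $t$ of $i_1$ and $i_k$, the same split into $j$ left indices and $k-j$ right indices, the $t$-uniform $\sqrt{2^t}\cdot\sqrt{n/2^t}$ cancellation, the outer search over the $\log n$ values of $t$ via \cref{fact:qmeo}/\cref{cor:meo}, and the two competing strategies from \cref{lem:two} to produce $\lambda_k$. So the route is essentially the paper's.

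There is, however, one real discrepancy, and it is exactly the one you flag as obstacle~(b), but it is not cosmetic: you place the split parameter $j$ inside the inner quantum minimum finding (searching over $(k^*,j)\in[2^t]\times[k-1]$), which produces the recursion $\alpha_k = O(\alpha^2\sqrt{k}\,\alpha_{k-1})$, i.e.\ $\alpha_k = \Theta\bigl((c\alpha^2)^{k}\sqrt{k!}\,\bigr)$. Because $\sqrt{k!}$ eventually dominates any $4^k$, this does \emph{not} unroll to $(2\alpha)^{2k}$ ``after rescaling constants''; the stated query bound cannot be obtained this way. The paper avoids this by keeping $j$ out of the quantum search: for a fixed interval it computes all $k-1$ values $F_{k-j}(C_r, F_j(C_\ell,\gamma))$ sequentially and takes their classical minimum, at cost $2\sum_{j=1}^{k-1}T_j$. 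Because the $T_j$ grow geometrically with ratio $(2\alpha)^2\sqrt{\log n}\ge 2$, the sum is dominated by its top term, and the resulting recursion has ratio exactly $(2\alpha)^2$ with no $\sqrt{k}$ loss, giving $(2\alpha)^{2k}$. This choice also sidesteps a second, more practical difficulty with your organization: the oracle fed to \cref{fact:qmeo} must evaluate $f_{k^*,j}$ uniformly over a superposition of $(k^*,j)$, and while the $k^*$-dependence is uniform (shifting an interval), the $j$-dependence requires $k-1$ structurally different sub-circuits, forcing you into \cref{lem:two}-style controlled composition there too, with additional overhead you do not account for. For the time bound (stated as $O(\cdot)$ for constant $k$) the $\sqrt{k!}$ factor is harmless, and your description of how the two \cref{lem:two} strategies yield the two branches of $\lambda_k$ tracks the paper; but the query bound as literally stated in the theorem is not reached by the inner-search-over-$j$ variant.
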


\begin{proof}
Let $T_{k}(n)$ be the quantum query complexity of $F_k(A, \gamma)$ on size $n$ strings, maximized over all values of $\gamma$. 
We prove by induction on $k$ that 
\[
T_{k}(n) \le (2\alpha)^{2k} \sqrt{n \log^{k-1} n} \enspace.
\]
The case $k=1$ follows by \cref{fact:minimum}.

Let us assume that we have algorithms for $F_1, \ldots, F_{k-1}$ with the desired quantum query complexity. 
We now design an algorithm for $F_k(A, \gamma)$. 
Consider an array $A$ of size $n$ and suppose that it has a $k$-\IS \ all of whose values are more than $\gamma$.
Let $A_{i_1} <  \cdots < A_{i_k}$ be such a subsequence where $i_k$ is such that $A_{i_k}$ achieves the minimum value possible 
among all $k$-\IS\  with the condition $A_{i_1} > \gamma$. 

Again we suppose that $n$ is a power of $2$, and we identify an index $i \in [n]$ with the binary representation of $i -1$.
Let $t \in \{0, \ldots, \log n-1\}$ be the size of the longest common prefix of $i_1$ and $i_k$. Then there exists a unique interval of length $n/2^t$ of the form $\{(k-1)n/2^t+1, \ldots ,  kn/2^t\}$, for some $k \in [2^t]$, such that the indices $i_1 , \ldots , i_k$ are all contained in it, and moreover $i_1$ is in the left half of this interval, and $i_k$ is in the right half of this interval.  

Let us first consider how to solve this problem when we know that interval
(or equivalently the largest common prefix of $i_1$ and $i_k$).
Let $C$ be the subarray of $A$ whose indices are in the interval, and let $C_\ell$ be its left half, and $C_r$ be its right half.  
Now note that 
\[
F_k(C, \gamma) = 
\min_{j \in [k-1]} F_{k-j}(C_r, F_j(C_\ell, \gamma))
\enspace .
\]
Thus $F_k(C,\gamma)$ can be expressed in terms of $F_j$, for $j < k$, and its complexity is upper bounded by $2\sum_{j=1}^{k-1} T_{ j}(|C|)$.   

Now let us return to the original problem. Say that, instead of being told the longest shared prefix of $i_1$ and $i_k$, we are only told that $i_1, i_k$ share a prefix of size $t$.  
There are $2^t$ prefixes of size $t$. We want to find the minimum of $F_k(C,\gamma)$ over all substrings $C$ defined by these prefixes, where each interval is of size $n/2^t$.  Using quantum minimum  finding over all prefixes of size $t$, by \cref{fact:qmeo} the complexity of this is at most $P_{t, k-1}(n)$ where
\begin{align*}
P_{t,k-1}(n) &= 
2 \alpha \sqrt{2^t} \sum_{j=1 }^{k-1} T_{ j} (n/2^{t}) \\
&\le 2 \alpha \sqrt{n} \sum_{j=1}^{k-1} (2\alpha)^{2j} \sqrt{\log^{j-1} (n/2^t)} \\
&= 2\alpha \sqrt{n} (2\alpha)^2 \frac{(2\alpha)^{2(k-1)} 
\log^{(k-1)/2} (n/2^t)-1}{(2\alpha)^2 \sqrt{\log (n/2^t)} - 1} \\
&\le 4 \alpha (2\alpha)^{2(k-1)} \sqrt{n \log^{k-2} n}
\enspace ,
\end{align*}
where to arrive at the last inequality we first simplify the denominator using the fact that $a-1 \ge a/2$ for $a \ge 2.$
Note that this final upper bound on $P_{t, k-1}(n)$ is independent of $t$.

The final value of $F_k(a, \gamma)$ is the minimum over all possible sizes of the shared prefix.  
Again by \cref{fact:qmeo}, this shows that $T_{k}(n)$ is at most
\begin{align*}
T_{k}(n) &\le \alpha \sqrt{\log n} P_{0, k-1}(n) \\
&\le (2\alpha)^{2k} \sqrt{n \log^{k-1} n} \enspace,
\end{align*}
as desired.
Let us now turn to the time complexity.
Let $\overline{T}_{k}(n)$ be the maximum quantum time complexity of $F_k(A,\gamma)$ over all arrays $A$ of size $n$ and all $\gamma$.
By \cref{fact:minimum} we know that $\overline{T}_{1}(n) ={O}({\sqrt{n} ( \log n + \qr_n})).$ For $k \geq 2,$ we will design two algorithms for the problem, and the claim of the Theorem will follow by taking the minimum of the two complexities.
They both proceed by induction on $k$ and they follow the same structure as the algorithm for the query complexity until the last step. Then they call on \cref{lem:two} where they use the first and second algorithms there, respectively.

We claim that when the first algorithm is chosen in \cref{lem:two}, its time complexity $\overline{T}^{(1)}_{k}(n)$ satisfies $\overline{T}_{k}^{(1)}(n) = O(\sqrt{n \log^{k-1} n} \cdot ( \log^k n +  \qr_n))$.
Computing $F_k(A,\gamma)$ under the condition that $i_1$ and $i_k$ share a prefix of size $t$, and $i_1$ is in the left half and $i_k$ is in the right half of the corresponding interval of size $n/2^t$, by \cref{cor:meo} takes time
\[
\overline{P}^{(1)}_{t,k-1} (n) 
= O \left(\sqrt{2^t} \left(t+ \sum_{j=1 }^{k-1} \overline{T}^{(1)}_{j}(n/2^t )\right) \right) . 
\]
From this, using the inductive hypothesis for $\overline{T}^{(1)}_{j}$, for $j \in [k-1]$, we get 
$$
\overline{P}^{(1)}_{t,k-1}(n)  = O(\overline{T}^{(1)}_{k-1}(n)).
$$
To maximize over all values of $t$ we use the first algorithm of \cref{lem:two} with the functions
$S_{t,k-1}(n) = O( \sqrt{n \log^{k-2} n}  \log^{k-1} n)$, and $q_{t,k-1}(n) =  \sqrt{n \log^{k-2} n} $, for all $t$.
Therefore we have $S_{\su,k-1}(n) = O( \sqrt{n \log^{k-2} n}  \log^{k} n)$ and $q_{\max,k-1}(n) =  \sqrt{n \log^{k-2} n} $,
and the Lemma gives
\[
\overline{T}_{k}^{(1)}(n) = O\left(\sqrt{n \log^{k-1} n} \cdot ( \log^k n +  \qr_n)\right).
\]

The second algorithm is the same until the use of \cref{lem:two} to maximize over all values of $t$, where it uses the second algorithm of the Lemma. It is not hard to check that the time complexity of this algorithm is
\[
\overline{T}_{k}^{(2)}(n) = O\left(\sqrt{n \log^{k-1} n} \cdot \left( \log^{(k+1)/2} (n) (\log \log n)^{k-1} + \log^{(k-1)/2} (n) (\log \log n)^{k-1} \cdot \qr_n \right)\right).
\]
Taking $\overline{T}_{k}(n) = \min \{\overline{T}_{k}^{(1)}(n) , \overline{T}_{k}^{(2)}(n) \}$,
finishes the proof.  
\end{proof}

We now turn to the second generalization of the \SSST \ problem.
For a constant $k \geq 1,$ in the $k$-\SSMTf \ problem ($k$-\SSMT) we have the same input as in the case of  \SSST \ but we buy and sell the stock $k$-times in the given time interval. 
We want to maximize the profit where the restriction on the timing is that $i$th  buying day must precede the $i^\text{th}$ selling day which, in turn, must precede the $(i+1)^\text{st}$ buying day.
We define the variant where the output is the maximal profit.

\begin{problem}[$k$-\SSMTf] 
Given a length $n$ array $A$ of integers, maximize $\sum_{\ell =1}^k A_{j_{\ell}} - A_{i_{\ell}}$ under the condition $i_1 < j_1 < i_2< \ldots < j_k$.
\end{problem}

Our claim is that in an array of size $n$ this problem can be solved by a quantum algorithm with $O(\sqrt{n \log^{2k-1} n})$
queries and in time $O(\sqrt{n \log^{2k - 1} n} \cdot \lambda_{2k} (n, \qr_n))$. 
We can directly give an algorithm for this claim. However, it turns out that it is simpler and more elegant to give an algorithm for a more general problem where, again, $k \geq 1$ is a constant.

\begin{problem}[$k$-\SiSf] 
Given a length $n$ array $A$ of integers and $\varepsilon \in \{-1,1\}^k$, maximize $\sum_{m =1}^k  \varepsilon_m A_{i_m}$ when the $k$ indices must satisfy
$i_1  < i_2< \ldots < i_k$. 
\end{problem}

Clearly computing the value of a solution of the $k$-\SSMTf \ problem is the special instance of the $2k$-\SiSf \ problem with $\varepsilon = (-1,1, \ldots, -1,1)$.
The complexity of $k$-\SiSf\ ($k$-\SiS) \ depends on $\varepsilon$. For example, for $\varepsilon = 1^k$, it can be solved by
$k$ repeated maximum findings. Our next result bounds the quantum complexity of the problem.

\begin{theorem}
\label{thm:k-sis}
Let $\alpha$ be the universal constant from \cref{fact:minimum}. 
Then there is a quantum algorithm that solves the $k$-\SiS \ problem in any array $A$ of $n$ integers and for any $\varepsilon \in \{-1,1\}^k$ with at most $$(2\alpha)^{2k} \sqrt{n \log^{k-1} n}$$ queries and in time $O(\sqrt{n \log^{k-1} n} \cdot \lambda_k(n,\qr_n))$.
\end{theorem}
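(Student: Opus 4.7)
The plan is to mirror the proof of \cref{thm:k-is} closely, since the bottom-up structural argument carries over with a simpler decomposition. For a sign pattern $\sigma \in \{-1,1\}^m$ and an array $B$ of integers, define the helper function
\[
G_m^\sigma(B) = \max_{j_1 < \cdots < j_m} \sum_{\ell=1}^m \sigma_\ell B_{j_\ell}.
\]
An algorithm for $G_k^\varepsilon$ solves the $k$-\SiS \ problem directly. Unlike $F_k$ in the $k$-\IS\ proof, $G_m^\sigma$ needs no auxiliary threshold parameter: the constraint between the selected entries is purely on their indices, not on their values, so the contributions of disjoint index sets simply add.

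The proof proceeds by induction on $k$. The base case $k=1$ is a single quantum maximum (or minimum, for $\sigma_1 = -1$) finding call, handled by \cref{fact:minimum}. For the inductive step, consider indices $i_1 < \cdots < i_k$ realizing $G_k^\varepsilon(A)$, and assume $n$ is a power of $2$. As in the proof of \cref{thm:k-is}, let $t \in \{0,\ldots,\log n - 1\}$ be the length of the longest common prefix of $i_1$ and $i_k$ in binary. Then $i_1,\ldots,i_k$ all lie in a unique interval $C$ of length $n/2^t$, with $i_1$ in its left half $C_\ell$ and $i_k$ in its right half $C_r$. For some $j \in [k-1]$, the first $j$ of the optimal indices fall in $C_\ell$ and the remaining $k-j$ fall in $C_r$, giving the decomposition
\[
G_k^\varepsilon(C \mid \text{crossing}) \;=\; \max_{j \in [k-1]} \Bigl( G_j^{\varepsilon_{[1..j]}}(C_\ell) + G_{k-j}^{\varepsilon_{[j+1..k]}}(C_r) \Bigr).
\]

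The algorithm would then apply quantum maximum finding with an erroneous oracle (\cref{fact:qmeo} and \cref{cor:meo}) in three nested layers: (i) over $j \in [k-1]$, which is only a constant factor (or a classical enumeration); (ii) over the $2^t$ possible intervals of length $n/2^t$ at level $t$; and (iii) over the $\log n$ values of $t$. For a fixed crossing interval $C$ of size $n/2^t$, computing the restricted value costs $O\bigl(\sum_{j=1}^{k-1} T_j(n/2^{t+1})\bigr)$ by the inductive hypothesis, where $T_j(\cdot)$ bounds the query complexity of $G_j^{\sigma}$. Taking quantum maximum over the $2^t$ intervals contributes a $\sqrt{2^t}$ factor, and the outer max over $t$ contributes a further $\sqrt{\log n}$ factor. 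The resulting recursion on $T_k(n)$ is identical to the one in the proof of \cref{thm:k-is}, so the same geometric sum argument yields $T_k(n) \le (2\alpha)^{2k}\sqrt{n \log^{k-1} n}$.

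For the time complexity, I would follow the two-algorithm strategy of \cref{thm:k-is} verbatim: run the outermost search over $t$ using each of the two options provided by \cref{lem:two} (one using \cref{cor:meo} with controlled subroutines, the other using classical repetition followed by maximization), and take the minimum of the two resulting bounds, which produces exactly the $\lambda_k(n,\qr_n)$ factor. The main potential obstacle is purely bookkeeping, namely tracking how the sign subpattern $\varepsilon$ is split between the two halves at each recursive call; since $\varepsilon$ is a prefix/suffix of the input sign pattern (not a value computed during the algorithm as $\gamma$ was), it can be passed along classically at no cost and does not interact with the complexity analysis. Finally, the reduction from $k$-\SSMT\ to $2k$-\SiS\ with $\varepsilon = (-1,1,\ldots,-1,1)$ noted before the theorem immediately yields the promised bound for the stock problem.
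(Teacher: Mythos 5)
Your proof is correct and follows essentially the same route as the paper: you define the same helper function (the paper calls it $F_k(A,\varepsilon)$), you derive the identical split recurrence $\max_{j\in[k-1]}\bigl(G_j^{\varepsilon[1:j]}(C_\ell)+G_{k-j}^{\varepsilon[j+1:k]}(C_r)\bigr)$ for a crossing interval, and you then invoke the same three-layer erroneous maximum-finding argument and the two-algorithm strategy from \cref{lem:two} that appear in the proof of \cref{thm:k-is}. Your side remarks (no threshold parameter $\gamma$ is needed, the sign pattern is a classical prefix/suffix passed for free) match the paper's observation that only the combinatorial recurrence changes between the two theorems.
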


\begin{proof}
Let $F_k$ be the function defined as
\[
F_k(A, \varepsilon) =  \max_{i_1 < \ldots < i_k } \sum_{j =1}^k  \varepsilon_j A_{i_j}
 \enspace.
\] 
Let $T_{n,k}$ and $\overline{T}_{n,k}$  be respectively the quantum query and time complexities of $F_k(A, \varepsilon)$ when maximized over all arrays $A$ of size $n$ and over all $\varepsilon \in \{-1,1\}^k$. 
We claim the same bounds for $T_{n,k}$ and $\overline{T}_{n,k}$ as the bounds obtained for the similarly denoted functions in \cref{thm:k-is} which were the respective complexities of $F_k(A, \gamma)$ there.
The proof is by induction on $k$, and the case $k=1$ is just quantum maximum finding.

The proofs of the inductive steps are almost identical to the ones in \cref{thm:k-is}, the only difference is in the recursive combinatorial statement for a substring $C$ of $A$ that fully contains an optimal solution $A_{i_1}, \ldots, A_{i_k}$, for $i_1 < \ldots < i_k$, and moreover is such that $A_{i_1}$ is in the left half of $A$ and $A_{i_k}$ is in the right half of $A$. Let 
$C_\ell$ be the left half and let $C_r$ be the right half of such a string $C$.
Then
\[
F_k(C, \varepsilon) = 
\ \max_{ 1 \leq j \leq k-1} \ F_j(C_\ell , \varepsilon[1 \colon j] ) + F_{k-j} (C_r, \varepsilon[j+1 \colon k] ) 
\enspace .
\]

Thus $F_k(C, \varepsilon)$ can be expressed in terms of $F_m$, for $m< k$, and its query complexity is at most $2\sum_{m=1}^{k-1} T_{|C|/2, m}$.   This is exactly the same relationship we have derived in \cref{thm:k-is} for $F_k(A, \gamma)$, and for the rest the two proofs are identical.
\end{proof}

\section{Disjunctive and minimizing problems}
\label{sec:disj}
We now turn to applications of \cref{cor:seo} and \cref{cor:meo} to directly design and analyze quantum divide and conquer algorithms. 
It turns out that we can do that for problems whose complete step is simple, and whose combine step is either the ${\rm OR}$ function or minimization. 
These problems are easily amenable to recursively applying Grover search or minimization on subproblems and therefore to quantum divide and conquer algorithms. However, let us emphasize that their create step can be rather complex and, indeed, our main examples will show such behaviour.

At the highest level, these problems will be characterized by two functions, $h: \N \rightarrow \N$ and $m:\N\rightarrow\R_+$.
The intuition behind these functions is that in the divide and conquer algorithm an instance of length $n$ will be divided into $h(n)$ instances of size $\lceil n/m(n) \rceil < n$. We call an integer $n$ {\em small}
for $m(n)$ if $\lceil n/m(n) \rceil \geq n,$ otherwise $n$ is {\em large} for $m(n)$. We will often just say that an integer is small or large, without reference to $m(n)$ when it is obvious from the context.
A string $a \in \Sigma^*$ is called {\em small} if $|a|$ is small.  These strings correspond to the base cases of the problem, and the other strings are called {\em large}.

We will consider two different classes of disjunctive and minimizing problems depending on how the instances of the subproblems are presented during the recursive calls. In the class of {\em constructible instance} problems, the instances will be computed and written explicitly to the quantum memory. In the class of {\em $t$-decomposable instance} problems, all recursive calls will be made on subsequences of the original input string and therefore can be described potentially more succinctly by a sufficient number of memory indices.

\subsection{Constructible instance problems} 
\label{subsec:explicit}
In constructible instance problems, the strings on which the recursive calls are made have to be computed explicitly, and therefore the definition of the problem on an instance doesn't make reference to the initial input. On the other hand, the definition involves the length of the initial input which determines the size of the quantum memory and the cost of all $\mathsf{QRAM}$ or $\mathsf{QRAG}$ operations made during the algorithm.

Let $h: \N \rightarrow \N$ and $m:\N\rightarrow\R_+$ be two functions. 
We say that $P: \bigcup_{n^*} (\{n^*\} \times \Sigma^{\leq n^*}) \rightarrow \{0,1\}$ (respectively $P: \bigcup_{n^*} (\{n^*\} \times \Sigma^{\leq n^*}) \rightarrow \Z$) is a {\em constructible instance $(h,m)$-disjunctive} (respectively $(h,m)$-{\em minimizing}) problem if the following two conditions are satisfied: 
\begin{enumerate}
\item
There are only a constant number of small integers for $m(n)$. 
\item
There exists a function
$$\gamma :  \bigcup_{0 < n \leq n^*} (\{n^*\} \times \Sigma^{n}  \times [h(n)] \times  \{0,1\} ) \rightarrow  \{0,1\} $$
(respectively
$$\gamma : \bigcup_{0 < n \leq n^*} (\{n^*\} \times \Sigma^{n}  \times [h(n)] \times  \Z ) \rightarrow  \Z )$$
such that for every $n^*$, for every large $n \leq n^*$, for every $a \in \Sigma^{n}$, there exist strings $a^{(1)}, \ldots , a^{(h(n))}$, with $|a^{(1)}| = \cdots = |a^{(h(n))}| = \lceil n/m(n)\rceil$, 
such that
$$P(n^*, a) = {\rm OR}_{j \in [h(n)]} \gamma(n^*, a, j, P(n^*, a^{(j)}))$$ 
(respectively $P(n^*, a) = {\rm \min}_{j \in [h(n)]} \gamma(n^*, a, j, P(n^*, a^{(j)}))$).
\end{enumerate}
For $j \in [h(n)]$, we call $a^{(j)}$ the $j$th  {\em constitutive} string of $a$ and we call the function $\gamma$ the  {\em completion} function.
A {\em constructible instance $(h,m)$-maximizing} problem is defined analogously. For every $n^*$, for a disjunctive problem $P$,
we define $P_{n^*} : \Sigma^{\leq n^*} \rightarrow \{0,1\}$ by  $P_{n^*}(a) = P(n^*, a)$, and similarly for minimizing problems.

The above conditions impose constraints on the create, complete and combine step of problems amenable to divide and conquer algorithms. The following theorem, which is valid in the $\mathsf{QRAG}$ model, stipulates the existence of a quantum algorithm for such problems.
The theorem is only stated for time complexity since for most relevant parameters the query complexity would be super-linear, and therefore trivial.

\begin{theorem}
\label{thm:recgen}
Let $P$ be a constructible instance $(h,m)$-{disjunctive} (respectively $(h,m)$-{minimizing}) problem. 
Suppose that $h(n)$ and $m(n)$ can be computed by classical algorithms in time $O(V(n))$. 
Let $N(n)$ be defined by the recursive equations $N(n) = O(1)$ when $n$ is small and, when $n$ is large, $N(n) = 2^k$, where $k$ is the smallest integer such that $2^k > \left( h(n)+1\right)\max\{n, N(n)\}$.

Let us fix $n^*$ as the input size. Suppose that there is a classical algorithm that for every large $n \leq n^*$, for every $a \in \Sigma^{n}$, computes the $h(n)$ constitutive strings $a^{(1)}, \ldots , a^{(h(n))}$ in time $O(S(n))$. 
Finally suppose that the computation of $\gamma(n^*, a, j, P(n^*, a^{(j)}))$ can be done by a quantum algorithm in time $O(G(n) \cdot \qw_{N(n^*)})$ and with error at most $1/10$.

Then there exists a quantum divide and conquer algorithm ${\cal A}$ which on an input size $n^*$ and an instance $a \in \Sigma^{n}$, written at the beginning of the memory, where $n \leq n^*$, computes $P_{n^*}$ with probability at least $9/10$
and uses a quantum memory of size $N(n^*)$.
Let $\overline{T}(n^*, n)$ denote the time complexity of the algorithm, maximized over all words in $\Sigma^n$ with $n \leq n^*$.
Then 
\[
\overline{T}(n^*, n) = O(\log  N(n^*) + \qw_{N(n^*)})  \enspace,
\]
when $n$ is small, and when $n$ is large, $\overline{T}(n^*, n)$ satisfies the recurrence equation
\[
\overline{T}(n^*,n) \leq c \sqrt{h(n)} \left(\overline{T}(n^*, n/m(n)) + G(n) \cdot \qw_{N(n^*)} + \log h(n) \right) + 
{O} \left( S(n) \cdot \qw_{N(n^*)} + V(n) \right),
\]
for some constant $c > 0$.
\end{theorem}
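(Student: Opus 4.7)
The plan is to build ${\cal A}$ by recursion on $n$, reducing each large instance to $h(n)$ subproblems of size $\lceil n/m(n)\rceil$ and combining them with quantum search or minimum finding through an erroneous oracle. The base case is immediate: for small $n$ the hypothesis restricts us to constantly many possible lengths, so the algorithm reads $a$ with $O(1)$ $\mathsf{QRAG}$ queries and returns $P_{n^*}(a)$ from a hardcoded table, producing the stated bound $O(\log N(n^*) + \qw_{N(n^*)})$.

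For a large instance I will first compute $h(n)$ and $m(n)$ classically in time $O(V(n))$, then invoke the assumed classical create routine to produce the constitutive strings $a^{(1)}, \ldots, a^{(h(n))}$ in time $O(S(n))$ and write them into pre-designated consecutive regions of the workspace via $\mathsf{QRAG}$ queries, for a total overhead of $O(S(n)\cdot \qw_{N(n^*)})$. The recursive definition of $N(\cdot)$ is exactly what guarantees that such regions are available at every level of recursion without overlap with already-committed storage.

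The heart of the argument is the design and analysis of a single oracle $F$ that on input $\ket{j}$ outputs $\ket{j}\ket{\gamma(n^*, a, j, P_{n^*}(a^{(j)}))}$. I will implement $F$ as one piece of code: from $j$ the offset of $a^{(j)}$ is computed in time $O(\log h(n))$, then ${\cal A}$ is called recursively on that substring, and finally the assumed quantum subroutine for $\gamma$ is applied. Because the code does not branch on $j$ beyond the offset computation, there is no $h(n)$-fold circuit blowup; one invocation of $F$ costs $\overline{T}(n^*, n/m(n)) + O(G(n)\cdot \qw_{N(n^*)} + \log h(n))$. By induction the recursive call errs with probability at most $1/10$, and $\gamma$ contributes another $1/10$, so by a union bound $F$ succeeds with probability at least $8/10 \geq 7/10$, meeting the hypothesis of \cref{cor:seo} in the disjunctive case and \cref{cor:meo} in the minimizing case. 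Applying the appropriate primitive then locates a marked or minimizing index $j \in [h(n)]$ in time $O(\sqrt{h(n)}(\log h(n) + \tau))$ with success probability at least $9/10$, where $\tau$ is the cost of one call to $F$; evaluating $\gamma$ one more time at this index recovers $P_{n^*}(a)$. Summing the create, write, and search costs yields the claimed recurrence.

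The main obstacle I anticipate is the bookkeeping around the composition of bounded-error algorithms through many levels of recursion. A naive approach, which reduces each subcall's error by $O(\log h(n))$-fold amplification before searching, blows up the cost of $F$ and destroys the $\sqrt{h(n)}$ Grover saving once compounded across the $\Theta(\log n)$ levels. The erroneous-oracle variants \cref{cor:seo} and \cref{cor:meo} are precisely the tools that let the error-handling be absorbed into the additive $\log h(n)$ term inside the recurrence, and verifying that their preconditions remain satisfied at every level — together with arranging memory so that the address of $a^{(j)}$ is a trivial function of $j$ — is the part of the argument that requires care.
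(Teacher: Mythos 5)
Your proof follows essentially the same route as the paper's: both lay out memory so that the address of $a^{(j)}$ is a cheap uniform function of $j$ (the paper makes this explicit by writing $a^{(j)}$ at the start of a block $M_{uv_j}$ labelled by a binary prefix), both implement $F$ as a single circuit that computes the offset from $j$, recurses, and applies $\gamma$, and both bound $F$'s error by $2/10$ via a union bound before invoking \cref{cor:seo} or \cref{cor:meo} to obtain the same recurrence.
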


We remark that in \cref{thm:recgen} the size of the quantum memory is not optimized. Indeed, it is possible to use a memory of size $N'(n^*)$, where $N'(n^*)$ is defined by the recursive equations $N'(n) = O(1)$ when $n$ is small and $N'(n) = h(n) N'(n/m(n)) + n$ when $n$ is large. However, with the smaller memory, the handling of the recursive call becomes more
complicated. 
Also, for $h$ and $m$ constants, both $N(n^*)$ and $N'(n^*)$ are polynomial functions of $n$.
As mentioned in \cref{sec:instances}, in that case, when one additionally has $r = O(\log n^*)$, the cost of the memory access operations is polylogarithmic in $n^*$ for both memory sizes $N(n^*)$ and $N'(n^*)$.

Before proving \cref{thm:recgen} it is worth pointing out and illustrating a powerful corollary on achieving an almost quadratic time complexity quantum speed-up over classical divide and conquer algorithms, for the case where the completion function is trivial.

\begin{corollary}
\label{cor:classical}
Let $h$ and $m$ be constants and let $P$ be a constructible instance $(h,m)$-{disjunctive} or $(h,m)$-{minimizing} problem. 
Suppose that there is a classical algorithm that, for every large $a$ of length $n$, computes the $h$ constitutive strings $a^{(1)}, \ldots , a^{(h)}$ in time $O(S(n))$.
Finally suppose that the completion function is trivial, meaning that it is the projection to its fourth argument.

Then, there exists a quantum divide and conquer algorithm ${\cal A}$ which for every $n^*$, computes $P_{n^*}$ with probability at least $9/10$ and uses a quantum memory of size $N(n^*)$.
Let $\overline{T}(n^*, n)$ be the time complexity of the algorithm. Then
\[
\overline{T}(n^*, n) = O(\log  N(n^*) + \qw_{N(n^*)})  \enspace,
\]
when $n$ is small and when $n$ is large, $\overline{T}(n^*, n)$ satisfies the recurrence equation
\[
\overline{T}(n^*, n) \leq c \sqrt{h} \overline{T}(n^*, n/m)   + {O} \left( S(n) \cdot \qw_{N(n^*)} \right) \enspace,
\]
for some constant $c > 0$.
\end{corollary}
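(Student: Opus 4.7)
The plan is to derive the corollary as a direct specialization of \cref{thm:recgen} to the regime where $h(n) \equiv h$ and $m(n) \equiv m$ are constants and the completion function $\gamma$ is projection onto its fourth coordinate. First, since constants can be produced without consulting the input, we may take $V(n) = O(1)$ in the hypothesis of \cref{thm:recgen}; the factor $\log h(n) = \log h$ likewise becomes a constant that will eventually be absorbed into the big-$O$.

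Second, I would argue that a trivial completion function is realizable at cost $G(n) = O(1)$ with no additional error. Concretely, given the output of the recursive quantum subroutine for the $j$th constitutive subproblem, $\gamma$ returns that same value unchanged: no extra gates or memory accesses are required, and the output register of the recursive call can be passed directly to the erroneous-oracle search or minimization primitive of \cref{cor:seo} or \cref{cor:meo}. This is the one point that calls for brief verification, since it is a claim about the interface between recursion and the combine step rather than a calculation, but it holds essentially by inspection.

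Finally, plugging $V(n) = G(n) = O(1)$ together with the constancy of $h$ and $m$ into the recurrence
\[
\overline{T}(n^*, n) \leq c\sqrt{h(n)}\bigl(\overline{T}(n^*, n/m(n)) + G(n)\cdot\qw_{N(n^*)} + \log h(n)\bigr) + O\bigl(S(n)\cdot\qw_{N(n^*)} + V(n)\bigr)
\]
from \cref{thm:recgen} collapses it to
\[
\overline{T}(n^*, n) \leq c\sqrt{h}\,\overline{T}(n^*, n/m) + O(\qw_{N(n^*)}) + O(S(n)\cdot\qw_{N(n^*)}).
\]
Since one may assume $S(n) \geq 1$ for any nontrivial create step, the middle term is absorbed into $O(S(n)\cdot\qw_{N(n^*)})$, giving the stated recurrence. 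The small-$n$ base case, the definition of the memory size $N(n^*)$, and the overall $9/10$ success guarantee pass through verbatim from \cref{thm:recgen}, so nothing additional is needed. The main (and only mild) obstacle is spelling out the $G(n) = O(1)$ claim carefully; everything else is bookkeeping on constants.
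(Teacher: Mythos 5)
Your proposal is correct and follows exactly the route the paper intends: the corollary is a direct specialization of \cref{thm:recgen} with $V(n)=O(1)$, $G(n)=O(1)$ (trivial projection, zero error), and $\log h(n)$, $c\sqrt{h}$ all constants, after which the stray $O(\qw_{N(n^*)})$ and $O(1)$ terms are absorbed into $O(S(n)\cdot\qw_{N(n^*)})$. The paper presents the corollary without its own proof, and your derivation supplies exactly the bookkeeping it leaves implicit; the only remark worth adding is that $S(n)\ge 1$ needs no extra assumption, since producing $h$ constitutive strings of length $\lceil n/m\rceil$ already forces $S(n)=\Omega(n)$.
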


We illustrate \cref{cor:classical} with an example we call recursive max pooling, inspired by recent work on the power of recursion in neural networks~\cite{tahmasebi2023power}.

\begin{problem}[\RMP] 
Given $h, p\in\mathbb{N}$, an array $A$ of $n^*$ integers (i.e., $A\in\Sigma^{n^*}$ for $\Sigma=\{0,1\}^r$) where $n^*$ is a power of $p$ and, for each $j\in\{1,\ldots, \log_p n^*\}$, a set of functions $\mathcal{F}_j =\{f^{(j)}_1, \ldots, f^{(j)}_h:\Sigma^{n^*/p^{j-1}}\rightarrow\Sigma^{n^*/p^j}\}$, compute $P_{n^*}(A)$ where $P_{n^*}:\Sigma^{\le n^*}\rightarrow\mathbb{Z}$ is defined recursively:
$$
P_{n^*}(B) = \begin{cases} 
B &\text{if }|B| = 1\\ 
\max \{{P_{n^*}(B^{(1)}), \ldots, P_{n^*}(B^{(h)})}\} & \text{otherwise}
\end{cases}
$$
where $B^{(i)} = f_i^{(j)}(B)$ when $|B| = n^*/p^{j-1}$.
\end{problem} 

One can view the strings $B^{(i)}$ as being computed, for example, by $h$ neural networks, each downsizing their common input by a factor of $p$, and then max pooling applied recursively to determine a final network output value. Note that our model differs from the recursive neighbourhood pooling graph neural network of~\cite{tahmasebi2023power}: in that work, the subproblems that correspond to our $B^{(i)}$ are subgraphs of a parent graph, and the pooling function they use is required to be injective. On the other hand, while maximum pooling as used here is highly lossy, we allow greater flexibility in constructing the substrings. Our problem is also similar in spirit to (a recursive version of) convolutional neural networks, which use multiple convolutional filters (which can be viewed as special cases of our $f^{(j)}_i$) to produce a number of downsized images which are then pooled and further processed. 

\begin{theorem}
    Let $h = p^k$ for $k\in\mathbb{N}, k\ge 4$. If the $f^{(j)}_i(B)$ can be computed classically in time $O(|B|^2)$ for any input $B$, \RMP\ can be solved by a quantum algorithm with running time $\overline{T'}(n) = O((n)^{k/2 + \log_hc}\cdot \qr_{N(n)})$.
\end{theorem}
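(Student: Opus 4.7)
The plan is to apply \cref{cor:classical} to \RMP, verifying that the problem fits the framework of constructible instance $(h, p)$-maximizing problems with trivial completion function. The combine step of the recursion is maximization over the $h$ subproblems $B^{(1)}, \ldots, B^{(h)}$, each of size $|B|/p$, so we take $m(n) = p$; the completion function $\gamma$ is the projection to its fourth argument since $P_{n^*}(B)$ is defined as the maximum of the $P_{n^*}(B^{(i)})$; and $h$ and $p$ are constants, so there are only finitely many small integers and the cost $V(n)$ of computing $h$ and $m$ is $O(1)$. Computing the $h$ constitutive strings $B^{(i)} = f^{(j)}_i(B)$ from $B$ can be done classically in time $S(n) = O(h \cdot n^2) = O(n^2)$ by hypothesis. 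Unrolling the recurrence $N(n) = O(h \cdot \max\{n, N(n/p)\})$ shows that $N(n)$ is polynomial in $n$, so $\qw_{N(n)}$ and $\qr_{N(n)}$ agree up to constants in the standard memory model.

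Next, \cref{cor:classical} yields the recurrence
\[
\overline{T}(n^*, n) \leq c \sqrt{h}\, \overline{T}(n^*, n/p) + O\!\left(n^2 \cdot \qw_{N(n^*)}\right)
\]
for large $n$, with base case $\overline{T}(n^*, O(1)) = O(\log N(n^*) + \qw_{N(n^*)})$. Since $h = p^k$, the branching factor is $c \sqrt{h} = c \, p^{k/2}$ while instance sizes shrink by a factor of $p$, so the critical exponent governing the recursion tree is $\log_p\!\left(c \, p^{k/2}\right) = k/2 + \log_p c$.

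The final step is to apply the Master theorem. For $k \geq 4$, the critical exponent is at least $2 + \log_p c$, which strictly exceeds $2$ for the universal constant $c > 1$ inherited from \cref{cor:meo}. Hence the non-recursive work $O(n^2 \cdot \qw_{N(n^*)})$ per level is polynomially dominated by the number of leaves in the recursion tree, placing us in Case 1 of the Master theorem. The solution is
\[
\overline{T}(n^*, n) = O\!\left(n^{k/2 + \log_p c} \cdot \qw_{N(n^*)}\right),
\]
which matches the claimed bound (changing the logarithm base from $p$ to $h = p^k$ is just absorbed into the constant $c$). The only subtle point is that the condition $k \geq 4$ is precisely what guarantees the create-step cost $O(n^2)$ does not overwhelm the quantum branching; in the edge case $k = 4$ with $c$ close to $1$ one would drift toward Case 2 and accrue an extra logarithmic factor, so it is worth confirming explicitly that the recursion tree leaves dominate. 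This threshold verification, together with checking that $N(n)$ remains polynomial so that the per-operation cost stays polylogarithmic, is the main obstacle in making the application of \cref{cor:classical} rigorous.
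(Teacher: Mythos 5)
Your application of \cref{cor:classical} is the same approach the paper takes, and your instantiation of the recurrence is actually the correct one: with $h=p^k$ and $m=p$, \cref{cor:classical} gives $\overline{T}(n^*,n)\le c\sqrt{h}\,\overline{T}(n^*,n/p)+O(n^2\cdot\qw_{N(n^*)})=c\,p^{k/2}\,\overline{T}(n^*,n/p)+O(n^2\cdot\qw_{N(n^*)})$, and the master theorem (Case~1, since $k\ge 4$ and $c>1$) then yields exponent $\log_p(c\,p^{k/2})=k/2+\log_p c$. Your check that $N(n)$ stays polynomial is also correct.

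The one step to scrutinize is your closing reconciliation with the stated bound $n^{k/2+\log_h c}$. Since $\log_p c = k\log_h c$, for $c>1$ and $k>1$ these exponents differ by a genuine polynomial factor $n^{(k-1)\log_h c}$; this is not a change that can be ``absorbed into the constant $c$'' unless you explicitly replace $c$ by $c^k$ in the exponent. So your derivation proves the bound $O\bigl(n^{k/2+\log_p c}\cdot\qw_{N(n)}\bigr)$, which agrees with the theorem only after reinterpreting the theorem's $c$ as $c^k$ (or reading $\log_h$ as $\log_p$). It is worth noting that the paper's own proof writes the recurrence as $\overline{T}(n^*,n)=ch^{k/2}\,\overline{T}(n^*,n/h)+O(n^2\cdot\qr_{N(n^*)})$, which does not follow from \cref{cor:classical} (the subinstance size should be $n/m=n/p$, not $n/h$, and the branching factor should be $c\sqrt{h}$, not $ch^{k/2}$); that mistyped recurrence is precisely what produces the $\log_h c$ exponent. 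Your recurrence is the faithful one, so if anything the discrepancy reflects a typo in the paper rather than a gap in your argument --- but you should state the exponent you actually derive, $k/2+\log_p c$, rather than wave the difference into a constant.
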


\begin{proof}
    \RMP\ is a constructible instance $(h,m)$-minimizing problem with $h=p^k$, $m=p$, and trivial completion function, where the constituent strings are computed by the functions $f_i^{(j)}$. From \cref{cor:classical}, the running time recursion satisfies $\overline{T}(n^*, n) = ch^{k/2} \overline{T}(n^*, n/h) + O(n^2 \cdot \qr_{N(n^*)})$. Taking $\overline{T'}(n) = \overline{T}(n, n)$, the result follows. 
\end{proof}
The complexity $T$ of the obvious classical divide and conquer algorithm (ignoring memory access costs) for $P$ satisfies $T(n) = p^k T(n/p) + O(n^2)$ resulting in $T(n) = O(n^k)$. 
If $\qr_{N(n^*)}$ is polylogarithmic in  $n^*$, the quantum result achieves an almost quadratic improvement compared with classical when $h$ is large compared to $c$. 

We now prove \cref{thm:recgen}.
\begin{proof}
For every $u \in \{0,1\}^*$, with $|u| < \log N(n^*)$, we denote by $M_u$ the section of the memory content register indexed by binary strings with prefix $u$. For example $M_{\epsilon}$ is the full memory, for the empty string $\epsilon$.
In the run of ${\cal A}$, an instance $a$, with $|a|=n$, will be written at the beginning of $M_u$, a section of memory of size $N(n)$, where $u$ is chosen recursively as follows. The input of length $n^*$ is written at the beginning of $M_{\epsilon}$.  If an instance $a$ is written at the beginning of $M_u$, its $j$-th constitutive string $a^{(j)}$ will be written at the beginning of $M_{uv_j}$, where $v_j$ is the integer $j$ written in binary, and $uv_j$ denotes the concatenation of $u$ and $v_j$. When dealing with $a$, the algorithm will have access to $u$.
The function $N(n)$ is large enough that all instances in the subsequent recursive calls can be written to $M_u$.

A small instance, when its location in the memory is known,  can be accessed in time $O(\qw_{N(n^*)}),$ and therefore the problem can be solved in time $O(\log N(n^*) + \qw_{N(n^*)}$, accounting for the time required to input the address.

On a large instance $a$, written at the beginning of $M_u$, in the create step ${\cal A}$ computes sequentially, for every $j \in [h(n)]$, the constitutive string $a^{(j)}$. Then, for $j \in [h(n)]$, it writes $a^{(j)}$ at the beginning of $M_{uv_j}$. 
To compute $P_{n^*}(a)$, then it uses \cref{cor:seo} (respectively \cref{cor:meo}) with the $h(n)$ functions $f_1, \ldots , f_{h(n)}$, where $f_j(a) = \gamma(n^*, a, j, P_{n^*}(a^{(j)}))$.

For this, we describe a quantum algorithm $F$ that on input $\ket{j}\ket{a}$ computes $f_j(a)$. 
By the recursive hypothesis, we have an algorithm ${\cal B}$ that computes $P_{n^*}$ on instances of size $n/m(n)$, when the instance is written at the beginning of $M_{uv_j}$.
On $\ket{j}\ket{{a}}$ the algorithm $F$ first copies $uv_j$  at the beginning of the memory index register and then it executes ${\cal B}$.  
Once the computation of $P_{n^*}(a^{(j)})$ is finished, it applies the completion function on $(n^*, a, j, P_{n^*}(a^{(j)})$. This ends the description of $F$.
By applying \cref{cor:seo} or \cref{cor:meo}, ${\cal A}$ finds the value of $P_{n^*}$ on $a$. 

We show that the error of algorithm ${\cal A}$ is at most $9/10$ which comes from the application of \cref{cor:seo} (respectively \cref{cor:meo}).
For this, we claim that the error for computing $F$ is at most $2/10$.
Indeed the error of the recursive call ${\cal B}$ is at most 1/10, and the completion function can be computed, by hypothesis, with error at most $1/10$.

The create step costs ${O} \left( S(n)\cdot\qw_{N(n^*)} + V(n) \right)$ which includes also the writing of the constitutive strings into memory.
Completing each recursive call costs  $O(G(n) \cdot \qw_{N(n^*)})$ and therefore the time of implementing $F$ is $O(\overline{T}(n^*, n/m(n)) + G(n) \cdot \qw_{N(n^*)})$. 
Therefore \cref{cor:seo} (respectively \cref{cor:meo}) implies the recurrence equation for the time complexity.
\end{proof}

\subsection{$t$-decomposable instance problems} 
\label{subsec:self-reducible}

In a constructible instance problem, the main difficulty can be in the create step, computing the constitutive strings which may be non-trivially related to the original input string. The situation is very different when, for every instance $a$, the constitutive strings of $a$ are subsequences of the original input string $\alpha$. In particular, we will consider \emph{t-decomposable instance} problems where, at every level of recursion, every constitutive string is the concatenation of $t$ substrings of the input, for some constant $t$.

For some integers ${n^*} \geq n > 0,$ let $\alpha \in \Sigma^{n^*}$ and let $t > 0$ be an integer constant.
We say that $\mathcal{I} \in [n^*]^{2t}$ is an $n^*$-{\em valid} $t$-{\em description}  if $\mathcal{I} = (b_1, e_1, \ldots, b_t, e_t)$ with $1\leq b_1 \leq e_1 < b_2 \leq  \ldots  < b_t \leq e_t \leq {n^*}$.
We define the {\em size} of $\mathcal{I}$ as $s(\mathcal{I}) = \sum_{i=1}^t (e_i - b_i +1)$.
Let $\mathcal{V}(n^*,t)$ denote the set of $n^*$-{valid} $t$-descriptions, and for $0 < n \leq n^*$, let $\mathcal{V}(n^*,t,n)$ denote the set of $n^*$-{valid} $t$-descriptions of size $n$.
For $\mathcal{I} \in  \mathcal{V}(n^*,t)$, where $\mathcal{I} = (b_1, e_1, \ldots, b_t, e_t),$ we denote by $\alpha_{\mathcal{I}}$ the string $\alpha[b_1:e_1]\concat \cdots \concat \alpha[b_t:e_t]$.
Observe that $|\alpha_{\mathcal{I}}| = s(\mathcal{I}).$
Let $a \in \Sigma^n,$ we say that $a$ is {\em $t$-decomposable} in $\alpha$ if there exists $\mathcal{I} \in  \mathcal{V}(n^*,t, n)$ such that $a = \alpha_{\mathcal{I}},$ and we call $\mathcal{I}$ a {\em $t$-description} of $a$ in $\alpha$. 
We are interested in $t$-decomposable instance problems, meaning that for every $\alpha \in \Sigma^{n^*}$, for every large $t$-decomposable subsequence $a$ of $\alpha$, the constitutive strings of $a$ are all $t$-decomposable in $\alpha$.

Note that is important that the constitutive strings of an instance $a$ are not only $t$-decomposable in $a$ but also $t$-decomposable in $\alpha$. To see why, consider an instance $a$ that is $2$-decomposable in $\alpha$. If its constitutive strings are $2$-decompasable in $a$, then we can only claim for them $3$-decomposability in $\alpha$, and so on. Therefore, even starting with a simple input $\alpha$, we obtain more and more complex subsequences of $\alpha$ at the deeper layers of the recursion that can be difficult to deal with.

We now formally define $t$-decomposable instance problems.  
Let $\alpha \in \Sigma^{n^*}$ be an input. 
An instance $a \in \Sigma^n$ which is a $t$-decomposable in $\alpha$ will be specified by an $n^*$-valid $t$-description 
$\mathcal{I}$ of $a$ in $\alpha$ of size $n$. The $t$-description can be given in some work registers which is distinct from the memory registers.
Observe that $\mathcal{I}$ is an $O(\log {n^*})$ string.
We will involve in the definition the create functions $\delta$ where, for every $\alpha \in \Sigma^{n^*}$, for every $\mathcal{I} \in  \mathcal{V}(n^*,t,n)$, and for every $j \in [h(n)]$, the value of $\delta(\alpha, \mathcal{I}, j)$ is a $t$-description of the $j$th constitutive string $\alpha_{\mathcal{I}}^{(j)}$ of $\alpha_{\mathcal{I}}.$

Let $h: \N \rightarrow \N$ and $m:\N\rightarrow\R_+$ be two functions and $t > 0$ an integer constant.
We say that $P: \bigcup_{n^*} (\Sigma^{n^*} \times \mathcal{V}(n^*,t)) \rightarrow \{0,1\}$ (respectively $P: \bigcup_{n^*} (\Sigma^{n^*} \times \mathcal{V}(n^*,t))  \rightarrow \Z$) is a $t$-{\em decomposable instance} $(h,m)$-{\em disjunctive} (respectively $(h,m)$-{\em minimizing}) problem if the following two conditions are satisfied:
\begin{enumerate}
\item
There are only a constant number of small integers for $m(n)$.
\item
There exist two functions
$$\delta : \bigcup_{0 < n \leq n^*} (\Sigma^{n^*} \times \mathcal{V}(n^*,t, n) \times [h(n)]) \rightarrow 
\mathcal{V}(n^*,t,  \lceil n/m(n) \rceil)$$
and
$$\gamma : \bigcup_{0 < n \leq n^*} (\Sigma^{n^*} \times \mathcal{V}(n^*,t, n) \times [h(n)] \times 
\mathcal{V}(n^*,t,  \lceil n/m(n) \rceil \times  \{0,1\} )) \rightarrow  \{0,1\} $$
(respectively
$$\gamma : \bigcup_{0 < n \leq n^*} (\Sigma^{n^*} \times \mathcal{V}(n^*,t, n) \times [h(n)] \times 
\mathcal{V}(n^*,t,  \lceil n/m(n) \rceil \times  \Z )) \rightarrow  \Z )$$
such that for every $\alpha \in \Sigma^{n^*}$, for every 
$\mathcal{I} \in  \mathcal{V}(n^*,t,n)$ where $n$ is large,
$$P(\alpha, \mathcal{I}) = {\rm OR}_{j \in [h(n)]} \gamma(\alpha, \mathcal{I}, j, \delta(\alpha, \mathcal{I}, j), 
P(\alpha, \delta(\alpha, \mathcal{I}, j)))$$
(respectively 
$P(\alpha, \mathcal{I}) = {\rm \min}_{j \in [h(n)]} \gamma(\alpha, \mathcal{I}, j, \delta(\alpha, \mathcal{I}, j), 
P(\alpha, \delta(\alpha, \mathcal{I}, j)))$).
\end{enumerate}
We call the functions $\delta$ and $\gamma$  respectively the {\em create} and the {\em completion} function, and for $j \in [h(n)]$, we call $\alpha_{\delta(\alpha, \mathcal{I}, j)}$ the $j$th  {\em constitutive} string of $\alpha_{\mathcal{I}}$.
A {\em $t$-decomposable instance $(h,m)$-maximizing} problem is defined analogously. 

Our next theorem stipulates the existence of a quantum divide and conquer algorithm for $t$-decomposable instance disjunctive or minimizing problems. 
The complexity of the algorithm is expressed as a function of the complexities of the create and completion functions.
The result is valid in both the $\mathsf{QRAM}$  and the $\mathsf{QRAG}$ model.

\begin{theorem}
\label{thm:rec}
For some constant $t$, let $P$ be a $t$-decomposable instance $(h,m)$-{disjunctive} (respectively $(h,m)$-{minimizing}) problem.
Suppose that $h(n)$ and $m(n)$ can be computed by classical algorithms in time $O(V(n))$ and without queries.

Suppose that there is a quantum algorithm that for every input string $\alpha \in \Sigma^{n^*}$ given in the quantum memory,
for every large $n$, for every $\mathcal{I} \in  \mathcal{V}(n^*,t,n)$, computes $ \delta(\alpha, \mathcal{I}, 1), \ldots,  \delta(\alpha, \mathcal{I}, h(n))$ with $O(d(n))$ queries, in time $O(D(n) \cdot \qr_{n^*})$ in the $\mathsf{QRAM}$ model and 
in time $O(D(n) \cdot \qw_{M(n^*)})$ in the $\mathsf{QRAG}$ model, for some function $M$.
Suppose also that each computation has probability of error at most $1/10$.

Let $N'(n)$ be defined by the recursive equations $N'(n) = O(1)$ when $n$ is small and $N'(n) = N'(n/m(n)) + M(n)$
when $n$ is large.
Also define $N(n^*) = n^* + N'(n^*)$. Finally suppose that for every input string $\alpha \in \Sigma^{n^*}$, for every large $n$, for every $\mathcal{I} \in  \mathcal{V}(n^*,t,n)$, the computation of the completion function $\gamma$ can be done 
by a quantum algorithm with $O(g(n))$ queries, in time $O(G(n) \cdot \qw_{N(n^*)} )$ and with error at most $1/10$.

Then there exists a quantum divide and conquer algorithm ${\cal A}$ which computes the problem $P$ on $t$-decomposable instances in $\alpha$ with probability at least $8/10$ and in $\mathsf{QRAG}$ model uses a quantum memory of size $N(n^*)$. Let denote by
$T(n^*,n)$ (respectively $\overline{T}(n^*,n)$) its query (respectively time) complexity, maximized over all inputs $\alpha$ of length $n^*$ and over all $t$-decomposable instances in $\alpha$ of length $n$, specified by a $t$-description in $\alpha$. Then, for small $n$, we have
\[
T(n^*,n) = O(1)  \enspace
\]
in both memory models, and
\[
  \overline{T}(n^*,n) =\begin{cases}
  O(\log n^* + \qr_{n^*}) &  (\mathsf{QRAM} \text{ model})\\
  O(\log N(n^*) + \qw_{N(n^*)}) &(\mathsf{QRAG} \text{ model}).
  \end{cases}
\]
When $n$ is large, $T(n^*,n)$ and $\overline{T}(n^*,n)$ satisfy, for some constant $c > 0$, the recurrences
\[
T(n^*,n) \leq c \sqrt{h(n)} \left(T(n^*,n/m(n)) + g(n) \right)  + {O}(d(n))  \enspace,
\]
in both memory models, and
\[
\resizebox{\textwidth}{!}{
$
\overline{T}(n^*, n) \le \begin{cases}
c \sqrt{h(n)} \left(\overline{T}(n^*,n/m(n))  + G(n) \cdot \qw_{n^*} + \log h(n) \right) + 
{O} \left(D(n)  \cdot \qr_{n^*} + V(n) \right) & (\mathsf{QRAM}) \\
c \sqrt{h(n)} \left(\overline{T}(n^*, n/m(n))  + G(n) \cdot \qw_{N(n^*)} + \log h(n) \right) + 
{O} \left(D(n)  \cdot \qw_{N(n^*)} + V(n) \right) & (\mathsf{QRAG}).
\end{cases}
$
}
\]
\end{theorem}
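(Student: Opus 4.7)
The plan is to construct the algorithm $\mathcal{A}$ recursively, mirroring the recursive definition of $P$ given by the combine step. On a $t$-decomposable instance described by $\mathcal{I} \in \mathcal{V}(n^*, t, n)$, which is passed to $\mathcal{A}$ in $O(\log n^*)$ work qubits (not in the memory register), the algorithm first checks whether $n$ is small; if so, it reads the $O(1)$ relevant symbols of $\alpha_{\mathcal{I}}$ directly from $\alpha$ via $t = O(1)$ $\mathsf{QRAM}$/$\mathsf{QRAG}$ queries using $\mathcal{I}$ to convert local positions into positions of the input, and returns $P(\alpha, \mathcal{I})$ by brute force, yielding the claimed base-case bounds. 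For large $n$, the algorithm invokes the create subroutine once to obtain the $h(n)$ child descriptions $\mathcal{I}_j = \delta(\alpha,\mathcal{I},j)$ for $j\in[h(n)]$, writing them either into work registers (in the $\mathsf{QRAM}$ case) or into the additional portion of the memory content register reserved for level-$n$ scratch space of size $M(n)$ (in the $\mathsf{QRAG}$ case). It then applies \cref{cor:seo} (disjunctive case) or \cref{cor:meo} (minimizing case) over the $h(n)$ indices $j$, combined with a final evaluation of $\gamma$ at the chosen $j$.

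For the erroneous oracle needed by \cref{cor:seo}/\cref{cor:meo}, the subroutine $F$ on input $\ket{j}$ will first coherently fetch $\mathcal{I}_j$ (which was stored by the create step), then recursively call $\mathcal{A}$ on $\mathcal{I}_j$, and finally invoke the completion procedure computing $\gamma(\alpha,\mathcal{I},j,\mathcal{I}_j, P(\alpha,\mathcal{I}_j))$. Because all children share the same create step output, there is no need to use the generic controlled composition from \cref{lem:two}; the only per-$j$ dependence is reading $\mathcal{I}_j$, which is one fetch from the scratch area storing the descriptions. Finally, the quantum memory sizing $N(n^*)$ and $N'(n)$ is defined so that the $\mathsf{QRAG}$ scratch across all levels of the recursion fits, and the $\mathsf{QRAM}$ version simply uses work registers of polylogarithmic width instead.

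The error analysis is the same at every level: the inductively guaranteed error of the recursive call on $\mathcal{I}_j$ is at most $2/10$, and the evaluation of $\gamma$ contributes at most an additional $1/10$, so the bounded-error oracle $F$ computes its target function with error at most $3/10$, within the $7/10$ correctness threshold required by \cref{fact:qseo}/\cref{fact:qmeo}. Hence \cref{cor:seo}/\cref{cor:meo} produce the correct value of $P(\alpha,\mathcal{I})$ with probability at least $9/10$; combining with the additive error budget from the create step ($1/10$) via a union bound gives the claimed overall success probability of at least $8/10$. It is important that the error bound does not grow with the recursion depth: each call produces an output of error at most $2/10$, which is precisely what the next level above needs.

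The complexity recurrences follow by accounting. The create step contributes $O(d(n))$ queries and $O(D(n)\cdot \qr_{n^*})$ or $O(D(n)\cdot \qw_{N(n^*)})$ time, plus $O(V(n))$ time to compute $h(n)$ and $m(n)$. The cost of one call to $F$ is the cost of the recursive call on an instance of size $\lceil n/m(n)\rceil$, plus $O(g(n))$ queries or $O(G(n)\cdot\qw_{\bullet})$ time for $\gamma$, plus $O(\log h(n))$ elementary operations to manipulate $\ket{j}$. Plugging these into \cref{cor:seo}/\cref{cor:meo} yields the $\sqrt{h(n)}$ overhead and the stated recurrences. The only mildly delicate part is ensuring that neither the create step nor the oracle implementation needs to repeat computations across the $h(n)$ branches; this is the reason for storing all $\mathcal{I}_j$ up front and recomputing them at most a constant number of times across the full recursion at a given node, which is why $D(n)$ and $d(n)$ appear as additive (rather than $\sqrt{h(n)}$-multiplied) terms.
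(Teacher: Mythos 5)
Your proposal matches the paper's proof essentially step for step: compute all $h(n)$ child $t$-descriptions up front via the create subroutine, store them, then invoke \cref{cor:seo}/\cref{cor:meo} with an oracle $F$ that fetches $\delta(\alpha,\mathcal{I},j)$, recurses, and applies $\gamma$, giving the same $3/10$ error bound for $F$, the same $2/10$ overall error invariant per recursion level, and the same accounting that makes the create cost additive while the recursive and completion costs incur the $\sqrt{h(n)}$ factor. No gaps — this is the paper's argument.
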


\begin{proof}
A small instance can be solved with a constant number of queries, therefore we have $T(n^*,n)= O(1)$ both in the $\mathsf{QRAM}$ and the $\mathsf{QRAG}$ model when $n$ is small. Given a quantum memory of content size $s$, copying an index to the memory index register takes time $\log s$. Taking $s=n^*$ in the $\mathsf{QRAM}$ model and $s = N(n^*)$  in the $\mathsf{QRAG}$ model, we get the respective equalities for $\overline T(n^*,n)$, when $n$ is small. 

Let us now suppose that we are given $\mathcal{I} \in  \mathcal{V}(n^*,t,n)$.
The algorithm ${\cal A}$ first computes $h(n)$ and sequentially $\delta(\alpha, \mathcal{I}, 1), \ldots,  \delta(\alpha, \mathcal{I}, h(n))$.
Then, to compute $P(\alpha, \mathcal{I}) $, the algorithm ${\cal A}$ uses in the case of a search problem \cref{fact:qseo} for the query complexity and \cref{cor:seo} for the time complexity (respectively, for minimizing problems, \cref{fact:qmeo} and \cref{cor:meo}) with the functions $f_j(\mathcal{I}) =  \gamma(\alpha, \mathcal{I}, j, \delta(\alpha, \mathcal{I}, j), 
P(\alpha, \delta(\alpha, \mathcal{I}, j)))$, for $j \in [h(n)]$.
For this we describe a quantum algorithm $F$ that on input $\ket{j}\ket{\mathcal{I}}$ computes $f_j(\mathcal{I})$.
By the recursive hypothesis, we have an algorithm that computes $P$ on $t$-decomposable strings of length $n/m(n)$, specified by $t$-description. $F$ runs this algorithm on  $\alpha_{\delta(\alpha, \mathcal{I}, j)}$ which was already computed, and then applies the completion function to the result.
Therefore ${\cal A}$ computes the value of $P(\alpha, \mathcal{I}) $.

The error of algorithm ${\cal A}$ comes from the computation of the $t$-descriptions of the constitutive strings and from the application of the respective \cref{fact:qseo} or \cref{cor:seo} (or \cref{fact:qmeo} or \cref{cor:meo} for minimizing problems).
The error for computing a $t$-description is, by hypothesis, at most $1/10$. 
Let us consider the error for computing $F$. When the description of the $j$th constitutive string is correctly computed then 
the error of the recursive call is at most $2/10$. The completion function can be computed, by hypothesis, with error at most $1/10$.
Thus the error of $F$ is at most $3/10$. 
Therefore the error of ${\cal A}$ coming from \cref{fact:qseo} or \cref{cor:seo} (respectively \cref{fact:qmeo} or \cref{cor:meo} for minimizing problems) is also at most $1/10$, and its overall error is at most $2/10$.

Computing $h(n)$ and $m(n)$ takes no queries and time $O(V(n))$, and computing the $t$-descriptions of all constitutive strings
takes time $O( D(n) \cdot \qr_{n^*})$ and $O(d(n))$ queries. 
Computing the completion function $\gamma$ takes $O(g(n))$ queries,  time $O(G(n) \cdot \qr_{n^*}  )$ in the $\mathsf{QRAM}$ model and time $O(G(n) \cdot \qw_{N(n^*)}  )$ in the $\mathsf{QRAG}$ model. Thus the query cost of implementing $F$ is $T(n^*,n/m(n)) +g(n)$ and can be done in time $O(\overline{T}(n^*,n/m(n)) + G(n)  \cdot \qr_{n^*})$ in the $\mathsf{QRAM}$ model and in time $O(\overline{T}(n^*,n/m(n)) + G(n)  \cdot \qw_{N(n^*)})$ in the $\mathsf{QRAG}$ model.
Therefore \cref{fact:qseo} (respectively \cref{fact:qmeo}) implies the recurrence equation for the query complexity and \cref{cor:seo} (respectively \cref{cor:meo}) implies the recurrence equation for the time complexity.
The memory size used for computing in superposition the constitutive substrings, over all levels of the recursion, is $N'(n^*)$
therefore counting also the input length, the algorithm can be implemented in the $\mathsf{QRAG}$ model with memory content size $N(n^*)$.
\end{proof}

\section{Longest Distinct Substring}
\label{sec:lds}
Recall that a substring of $\alpha$ is a subsequence of consecutive symbols. 
We say that a substring is \emph{distinct} if no character in the substring appears more than once.  The \ldsf \ (\lds) problem is to find the length of a longest distinct substring of $\alpha$.

\begin{problem}[\ldsf]
Let $\Sigma$ be an alphabet and $\alpha \in \Sigma^{n^*}$ be a string. The goal is to output the length of a longest distinct substring of $\alpha$, denoted $\lds(\alpha)$.
\end{problem}

Computing $\lds(\alpha)$ is naturally a maximization problem over all substrings of $\alpha$ and so fits into the bottom-up divide and conquer framework of \cref{sec:bottomup}.  By repeating the last character of $\alpha$ we may assume that its size is a power of 2 without affecting $\lds(\alpha)$.  
Letting $P(\alpha, t, k)$ be the length of a longest distinct substring of $\alpha[(k-1)n/2^t+1 \upto k n/2^t]$ whose left endpoint is in the left half of this interval and the right endpoint is in the right half of the interval, we see that $\lds(\alpha) = \max_t \max_{1 \le k \le 2^t} P(\alpha, t, k)$.
Our main task is thus to compute $P(\alpha, t, k)$. This will again be done with a divide and conquer algorithm, but now one with a non-trivial create step.  
Because of the nature of the recursive calls in the divide and conquer algorithm for $P(\alpha,t,k)$, in the next subsection we develop an algorithm for a more general version of the problem which we call \bldsf\ (\blds).

\subsection{Bipartite longest distinct substring}
\begin{problem}[\bldsf] 
$\alpha \in \Sigma^{n^*}$ be a string. For a $n^*$-valid 2-description $\Iset$ let $\Iset(k)$ refer to the $\kth$ element of $\Iset$.  
Given $\alpha$ and a $n^*$-valid 2-description $\Iset$, the goal of the bipartite longest distinct substring problem is to output the length of a longest distinct substring of $\alpha[\Iset(1) \colon \Iset(2)] \concat \alpha[\Iset(3) \colon \Iset(4)]$ that includes at least one of $\alpha_{\Iset(2)}, \alpha_{\Iset(3)}$. 
We denote this value by $\blds(\alpha, \Iset)$. 
\end{problem}

A key role in the algorithm for $\blds$ is played by finding the longest distinct substring whose right or left endpoint is given by an index $k$. We make a couple of definitions related to this problem.

\begin{definition}[Longest distinct substring with constrained endpoint]
Let $\Sigma$ be an alphabet and $\alpha \in \Sigma^{n^*}$ be a string.  For $X \subseteq [n^*]$, let $\lds(\alpha,X)$ be the length of a longest distinct substring of $\alpha$ whose right endpoint is in $X$.
\end{definition}

\begin{definition}[Longest distinct substring with a fixed endpoint]
Let $\Sigma$ be an alphabet, $\alpha \in \Sigma^{n^*}$ be a string, and $\Iset$ an $n^*$-valid 2-description.  
Let $k \in \{\Iset(1), \ldots, \Iset(2)\} \cup \{\Iset(3), \ldots, \Iset(4)\}$. 
Define $L(\alpha, \Iset, k)$  to be the left endpoint of the longest distinct substring of $\alpha[\Iset(1) \upto \Iset(2)] \concat \alpha[\Iset(3) \upto \Iset(4)]$ whose right endpoint is $k$.
Analogously, define $R(\alpha,\Iset, k)$ to be the right endpoint of the longest distinct substring of $\alpha[\Iset(1) \upto \Iset(2)] \concat \alpha[\Iset(3) \upto \Iset(4)]$ whose left endpoint is $k$.  
\end{definition}

The high-level idea of the algorithm is the following.  We first compute the smallest index $i_1 \ge \Iset(1)$ such that $\alpha[i_1 \upto \Iset(2)]$ is a distinct substring and the largest index $j_2 \le \Iset(4)$ such that $\alpha[\Iset(3) \upto j_2]$ is a distinct substring.
Clearly, $\blds(\alpha, \Iset) = \blds(\alpha, \Jset)$ for $\Jset = (i_1, \Iset(2), \Iset(3), i_2)$.

$\blds(\alpha, \Jset)$ will be the maximum of $n_1 = \Jset(2) - \Jset(1) + 1, n_2 = \Jset(4) - \Jset(3) + 1$ and the length of a longest distinct substring whose left endpoint is in $\{\Jset(1), \ldots, \Jset(2)\}$ and right endpoint is in $\{\Jset(3), \ldots, \Jset(4)\}$.  Call the latter the \emph{crossing value}.

The algorithm uses a divide and conquer approach to computing the crossing value. We may assume without loss of generality that $n_2 \le n_1$, as the length of a longest distinct substring of a string and its reversal are the same.  For a constant $h$, we partition $\{\Jset(3), \ldots, \Jset(4)\}$ into $h$ intervals.  The right endpoint of a substring realizing the crossing value must lie in one of these intervals.
Thus the main subproblem of our divide and conquer approach is to compute the length of a longest distinct crossing substring whose right endpoint is contained in an interval.  The core idea of how to reduce this problem to a smaller instance of \blds\ is contained in the following lemma.

\begin{lemma}
\label{lem:lds}
Let $\Sigma$ be an alphabet and $a = a_1, \ldots, a_N \in \Sigma^N$. 
Suppose that $1 \leq u < v \leq N$ are such that $a[1\upto v]$ is a distinct substring and $a[u \upto N]$ is a distinct substring.  
Let $b = a[1\upto u-1] \concat a[v+1\upto N]$.  Then
\[
\lds(a) = v - u + 1 + \lds(b) \enspace .
\]
\end{lemma}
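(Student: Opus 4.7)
The plan is to introduce the decomposition $a = L \concat M \concat R$ where $L = a[1\upto u-1]$, $M = a[u\upto v]$, and $R = a[v+1\upto N]$, so $b = L \concat R$ and $|M| = v-u+1$. From the hypothesis that $a[1\upto v] = L \concat M$ is distinct I would record that the character sets of $L$ and $M$ are disjoint (and each is itself distinct), and from $a[u\upto N] = M \concat R$ distinct that the character sets of $M$ and $R$ are disjoint. Crucially, nothing is assumed about $L$ versus $R$; that asymmetry is what makes the lemma nontrivial.

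For the upper bound $\lds(a) \le |M| + \lds(b)$, I would take an arbitrary distinct substring $a[p\upto q]$ and split on three cases: (i) $q \le v$, so $a[p\upto q]$ lies in $L\concat M$ and then $a[p\upto \min(q,u-1)]$ is a distinct substring of $b$ of length at least $q-p+1-|M|$; (ii) $p \ge u$, symmetric; (iii) $p \le u-1$ and $q \ge v+1$, in which case $a[p\upto u-1]\concat a[v+1\upto q]$ sits naturally inside $b$ as a contiguous substring, and its characters are distinct because the three blocks $a[p\upto u-1]$, $M$, $a[v+1\upto q]$ are pairwise disjoint as character sets (the middle two pairs are disjoint by hypothesis, and the outer pair inherits distinctness from $a[p\upto q]$). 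In every case the resulting substring of $b$ has length at least $(q-p+1)-|M|$, which gives the bound.

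For the matching lower bound $\lds(a) \ge |M| + \lds(b)$, I would take a longest distinct substring $b[p\upto q]$ of $b$, identify whether it lies in the $L$-portion, the $R$-portion, or straddles the junction, and in each case produce a distinct substring of $a$ that is longer by exactly $|M|$. In the straddling case the corresponding indices in $a$ form the contiguous substring $a[p\upto v + q - |L|]$, which is distinct because $L$ is disjoint from $M$, $M$ is disjoint from $R$, and the pair of $L$- and $R$-fragments used is disjoint because $b[p\upto q]$ is distinct; this extends the length by $|M|$. In the $L$-contained case $b[p\upto q] = a[p\upto q]$ with $q \le u-1$, and I extend on the right to $a[p\upto v] = a[p\upto u-1] \concat M$, which is a substring of the distinct string $a[1\upto v]$ and hence distinct; the gain is $v - (u-1) = |M|$. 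The $R$-contained case is symmetric via the left extension inside $a[u\upto N]$.

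The main obstacle is keeping the bookkeeping clean in the lower bound, since the three cases use different reasons for the resulting substring of $a$ to be distinct (two invoke the hypothesis that $a[1\upto v]$ or $a[u\upto N]$ is distinct, while the straddling case combines all three disjointness facts at once). Once the case analysis is laid out carefully, both inequalities follow without calculation, and combining them yields $\lds(a) = |M| + \lds(b) = (v-u+1) + \lds(b)$.
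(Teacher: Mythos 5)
Your proof is correct and takes essentially the same approach as the paper. Both decompose $a$ as $L\concat M\concat R$ and, in each direction, insert or excise $M$ at the junction; the paper merely packages your three lower-bound cases into a single argument by letting $\hat b_\ell$ (resp.\ $\hat b_r$) be the part of a longest distinct substring of $b$ lying to the left of $a_u$ (resp.\ right of $a_v$), but the underlying observations and use of the two distinctness hypotheses are identical.
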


\begin{proof}
First we show that $\lds(a) \ge v - u + 1 + \lds(b)$.  
Let $\hat b$ be a distinct substring of $b$ realizing $\lds(b)$ and let $\hat b_\ell$ be the substring of $\hat b$ to the left of $a_{u}$ and $\hat b_r$ the substring of $\hat b$ to the right of $a_{v}$.  Then $\hat b_\ell \concat a[u \upto v] \concat \hat b_r$ is a distinct substring of $a$.  The portion $\hat b_\ell \concat a[u \upto v]$ is distinct because $a[1 \upto v]$ is, the portion $a[u \upto v] \concat \hat b_r$ is distinct because $a[u \upto N]$ is, and there is no collision between $\hat b_\ell$ and $\hat b_r$ because $\hat b$ is distinct.

Now we show that $\lds(a) \le v - u + 1 +\lds(b)$.
Let $\hat a$ be a substring of $a$ realizing $\lds(a)$ and split up $\hat a$ as $\hat a_\ell, \hat a_m, \hat a_r$ for the portion strictly to the left of $u$, between $u$ and $v$, and strictly to the right of $v$, respectively.  If any of 
$\hat a_\ell, \hat a_m, \hat a_r$ are empty, then clearly $\lds(a) \le v - u + 1 +\lds(b)$.  
If they are all nonempty, then $\hat a_\ell, \hat a_r$ must form a distinct substring of $b$, thus also $\lds(a) \le v - u + 1 +\lds(b)$.
\end{proof}

The next proposition gives the application of \cref{lem:lds} to finding the longest distinct substring whose right endpoint is in a given interval.
To state the proposition the following definition will be useful.
\begin{definition}
Let $N$ be a positive integer and $i < j \in [N]$.
For $k \ge j$, let $\prev_{i, j}(k) = i$ if $k = j$ and $\prev_{i, j}(k) = k-1$ otherwise.  For $k \le i$, let $\suc_{i, j}(k) = j$ if $k = i$ and $\suc_{i, j}(k) = k+1$ otherwise.
\end{definition}

\begin{proposition}
\label{prop:blds}
Let $\alpha \in \Sigma^{n^*}$ and $\Iset$ be a $n^*$-valid 2-description such that $\alpha[\Iset(2)] \concat \alpha[\Iset(3) \upto \Iset(4)]$ is distinct.
For $\Iset(3) \le k_1 \le k_2 \le \Iset(4)$ let $\Kset = \{k_1, \ldots, k_2\}$, $t = |\Kset|$ and $\Kset^+ = \prev_{\Iset(2), \Iset(3)}(k_1) \cup \Kset$. Further, define
\begin{align*}
i_1' &= L(\alpha, \Iset, \prev_{\Iset(2), \Iset(3)}(k_1)), \\
i_2' &= \min(\Iset(2), i_1' + t - 1), \\
k_2' &= \min(k_2, R(\alpha, \Iset, \suc_{\Iset(2), \Iset(3)}(i_2'))) \enspace .
\end{align*}
Let $b = \alpha[\Iset(1) \upto \Iset(2)] \concat \alpha[\Iset(3) \upto \Iset(4)]$ and $\Jset = (i_1', i_2', k_1, k_2')$.
Then
\[
\lds(b, \Kset^+)
=  \Iset(2) - \Jset(2) + \Jset(3) - \Iset(3) + \blds(\alpha, \Jset) \enspace.
\]
\end{proposition}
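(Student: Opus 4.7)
The plan is to prove the equation by establishing both directions. Set $\Delta = (\Iset(2) - i_2') + (k_1 - \Iset(3))$, the total length of the ``gap'' $g = \alpha[i_2'+1\upto\Iset(2)] \concat \alpha[\Iset(3)\upto k_1-1]$ (either component may be empty in edge cases). The proof rests on three distinctness facts. First, $\alpha[i_1'\upto\Iset(2)] \concat \alpha[\Iset(3)\upto\prev(k_1)]$ is distinct by the definition of $i_1' = L(\alpha, \Iset, \prev(k_1))$. Second, $\alpha[\suc(i_2')\upto\Iset(2)] \concat \alpha[\Iset(3)\upto k_2']$ is distinct by the definition of $R(\alpha, \Iset, \suc(i_2'))$ together with the distinctness of $\alpha[\Iset(3)\upto\Iset(4)]$. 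Third, the hypothesis $\alpha[\Iset(2)] \concat \alpha[\Iset(3)\upto\Iset(4)]$ is distinct. A key preliminary observation is that any distinct substring of $b$ whose right endpoint lies in $\Kset^+$ must have left endpoint at least $i_1'$; otherwise its prefix ending at $\prev(k_1)$ would be a distinct substring of $b$ ending at $\prev(k_1)$ with smaller left endpoint than $i_1'$, contradicting the definition of $L$.

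For the lower bound $\lds(b, \Kset^+) \ge \blds(\alpha, \Jset) + \Delta$, let $s'$ be optimal for $\blds(\alpha, \Jset)$, so $s'$ is a distinct substring of $\alpha[i_1'\upto i_2'] \concat \alpha[k_1\upto k_2']$ containing at least one of $\alpha[i_2'], \alpha[k_1]$. I construct a distinct substring $s^*$ of $b$ of length $|s'| + \Delta$ ending in $\Kset^+$ by inserting the gap $g$: if $s'$ spans both components, it includes both $\alpha[i_2']$ and $\alpha[k_1]$ and I insert $g$ between them; if $s'$ is confined to $\alpha[i_1'\upto i_2']$, I append $g$ on the right (so $s^*$ ends at $\prev(k_1)$); and if $s'$ is confined to $\alpha[k_1\upto k_2']$, I prepend $g$ on the left. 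Distinctness of $s^*$ follows from the three facts above, which together imply that no character in $g$ can coincide with any character in $s'$.

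For the upper bound $\lds(b, \Kset^+) \le \blds(\alpha, \Jset) + \Delta$, start with an optimal distinct substring $s^*$ of $b$ ending in $\Kset^+$. The preliminary observation gives left endpoint $\ge i_1'$. I split into the ``easy'' case (left endpoint in $[i_1',i_2']$ and right endpoint in $\{\prev(k_1)\} \cup [k_1, k_2']$, with the second-part analogue handled symmetrically by extending left to $\Iset(3)$, which preserves distinctness since $\alpha[\Iset(3)\upto\Iset(4)]$ is distinct) and several ``boundary'' cases where one of these bounds fails. In the easy case, excising $g$ from $s^*$ yields a distinct substring $s'$ of $\alpha[i_1'\upto i_2'] \concat \alpha[k_1\upto k_2']$ of length $|s^*|-\Delta$ containing the required characters, so $|s^*|-\Delta \le \blds(\alpha,\Jset)$.

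The main obstacle is the boundary cases, where $s^*$ cannot simply be truncated. Here I exploit the specific choice $i_2' = \min(\Iset(2), i_1' + t - 1)$ with $t = |\Kset|$. When $i_2' = \Iset(2)$, the distinctness of $\alpha[\Iset(2)]\concat\alpha[\Iset(3)\upto\Iset(4)]$ forces $R \ge \Iset(4) \ge k_2$, hence $k_2' = k_2$, so the right-endpoint boundary case does not arise. When $i_2' = i_1' + t - 1 < \Iset(2)$ and the right endpoint of $s^*$ exceeds $k_2'$, the definition of $R$ forces the left endpoint of $s^*$ to lie strictly past $\suc(i_2') = i_2'+1$; combining the resulting length bound with $i_2' - i_1' + 1 = t$ and $k_2 - k_1 + 1 = t$ yields $|s^*|-\Delta \le t - 1$, which is dominated by the trivial lower bound $\blds(\alpha,\Jset) \ge t$ coming from the distinct substring $\alpha[i_1'\upto i_2']$. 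An analogous argument, in which we replace $s^*$ by $\alpha[k_1\upto\min(k,k_2')]$ (gaining length by shifting the left endpoint to $k_1$ when the original left endpoint lies in the second part), handles the remaining boundary sub-cases and completes the proof.
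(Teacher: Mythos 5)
Your proof is correct, and it arrives at the same numerical facts the paper uses (e.g.\ that when $k_2' < k_2$ one has $i_2'+1 < \Iset(2)$, $i_2'-i_1' = t-1$, and that distinct substrings ending past $k_2'$ are forced to start at index $\ge i_2'+2$), but it packages them differently. The paper argues in a pipeline: first it prunes $\Kset^+$ to $\Lset^+ = \{\prev_{\Iset(2),\Iset(3)}(k_1),k_1,\dots,k_2'\}$ by showing right endpoints beyond $k_2'$ are dominated; then it uses the definition of $L$ to identify $\lds(b,\Lset^+)$ with $\lds(b')$ for $b' = \alpha[i_1'\upto\Iset(2)]\concat\alpha[\Iset(3)\upto k_2']$; and finally it invokes Lemma~\ref{lem:lds} as a black box to excise the gap from $b'$ in one shot. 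You instead prove the equality directly as two inequalities: your lower bound is the ``insert the gap'' direction of Lemma~\ref{lem:lds} re-derived inline (correctly, using the three distinctness facts you list), your ``easy case'' of the upper bound is the ``excise the gap'' direction, and your boundary cases absorb what the paper handles in its pruning step and in the implicit fact $\lds(b'') = \blds(\alpha,\Jset)$. The net effect is the same; the paper's route is more modular (it reuses Lemma~\ref{lem:lds} and keeps the pruning step cleanly separated), while yours is more self-contained but requires a slightly messier case analysis. One place you could tighten: the parenthetical description of the ``easy case'' (and the final ``analogous argument'' sentence) conflates several sub-cases --- left endpoint in $(i_2',\Iset(2)]$ versus left endpoint in the second block versus $R^* > k_2'$ --- which are logically independent and should each be stated and dispatched explicitly, even though, as you observe, the comparison witness $\alpha[k_1\upto\min(k,k_2')]$ and the length bound $\Iset(2)-i_1'+k_1-\Iset(3)$ do cover all of them.
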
 

\begin{proof}
Let $\Lset = \{k_1, \ldots, k_2'\}$ and $\Lset^+ = \{\prev_{\Iset(2), \Iset(3)}(k_1)\} \cup \Lset$.  
We first claim that $\lds(b, \Kset^+) = \lds(b, \Lset^+)$.
We establish this by showing that $\lds(b, \Kset \setminus \Lset) < \lds(b, \{\prev_{\Iset(2), \Iset(3)}(k_1)\})$.
We may assume $k_2' < k_2$, as otherwise $\Kset \setminus \Lset = \emptyset$ and there is nothing to prove.  The fact that $k_2' < k_2$ has two important consequences: 
\begin{enumerate}
\item $\suc_{\Iset(2), \Iset(3)}(i_2') = i_2' + 1 < \Iset(2)$, since by assumption $\alpha[\Iset(2)] \concat \alpha[\Iset(3) \upto \Iset(4)]$ is distinct.
\item We have $i_2' - i_1' = t - 1$ since $i_2' < \Iset(2)$ by the previous item.
\end{enumerate}

By definition of $k_2'$ the string $\alpha[\suc_{\Iset(2), \Iset(3)}(i_2') \upto \Iset(2)] \concat \alpha[\Iset(3) \upto k_2'+1]$ is not distinct (since we are in the case $k_2' < k_2$), meaning that the left endpoint of a longest distinct string with right endpoint in $\Kset \setminus \Lset$ is at least $i_2' + 2$, using item (1) above.
Thus the length of a longest substring of $b$ whose right endpoint is in $\Kset \setminus \Lset$ is at most 
\begin{align*}
\Iset(2) - (i_2' + 2) + 1 + k_2 - \Iset(3) + 1 &= \Iset(2) - i_2' + (k_1 + t - 1) - \Iset(3) \\
&= (\Iset(2) - i_1') - (i_2' - i_1') - 1 + k_1 - \Iset(3) + t - 1 \\
&= \Iset(2) - i_1' + k_1 - \Iset(3) \enspace,
\end{align*}
because $i_2' - i_1' = t - 1$ by item (2) above. On the other hand, $\lds(b, \{\prev_{\Iset(2), \Iset(3)}(k_1)\}) =  \Iset(2) - i_1' + k_1 - \Iset(3) + 1$, thus the claim follows.

Let $D = \Iset(2) - i_2' + k_1 - \Iset(3) + \blds(\alpha, \Jset).$
We now show that $D = \lds(b, \Lset^+).$
Let $b' = \alpha[i_1' \upto \Iset(2)] \concat \alpha[\Iset(3) \upto k_2'].$
By definition of $i_1'$, we have $\lds(b, \Lset^+) = \lds(b')$.
Also by definition, 
$\alpha[i_1' \upto \Iset(2)] \concat \alpha[\Iset(3) \upto \prev_{\Iset(2), \Iset(3)}(k_1)]$ is distinct and 
$\alpha[\suc_{\Iset(2), \Iset(3)}(i_2') \upto \Iset(2)] \concat \alpha[\Iset(3) \upto k_2']$ is distinct.  
Applying \cref{lem:lds} to $b'$ says that $\lds(b') = D$, giving the proposition.
\end{proof}

\begin{corollary}
\label{cor:h}
Let $\alpha \in \Sigma^{n^*}$ and $\Iset$ a $n^*$-valid 2-description such that $\alpha[\Iset(2)] \concat \alpha[\Iset(3) \upto \Iset(4)]$ is distinct.
Let $\Jset_0, \ldots, \Jset_{h-1}$ be a partition of $\{\Iset(3), \ldots, \Iset(4)\}$ into intervals, and let $x_\ell, y_\ell$ be respectively the left and right endpoints of $\Jset_\ell$.    For $\ell = 0, \ldots, h-1$ let 
\begin{align*}
p_\ell &= L(\alpha, \Iset, \prev_{\Iset(2), \Iset(3)}(x_\ell)), \\
q_\ell &= \min(\Iset(2), p_\ell+|\Jset_\ell|-1), \\
y_\ell'&= \min(y_\ell, R(\alpha, \Iset, \suc_{\Iset(2), \Iset(3)}(q_\ell))) \enspace,
\end{align*}
and $\Iset_\ell = (p_\ell, q_\ell, x_\ell, y_\ell')$.
Then
\[
\blds(\alpha, \Iset) = \max_{0 \le \ell \le h-1} 
\Iset(2) - \Iset_\ell(2) + \Iset_\ell(3) - \Iset(3) + \blds(\alpha, \Iset_\ell) \enspace .
\]
\end{corollary}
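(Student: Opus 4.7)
The plan is to recognize $\blds(\alpha,\Iset)$ as a ``longest distinct substring with constrained right endpoint'' problem on $b = \alpha[\Iset(1)\upto\Iset(2)] \concat \alpha[\Iset(3)\upto\Iset(4)]$, and then to decompose the set of admissible endpoints along the partition so that \cref{prop:blds} applies to each piece.

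First I would establish that $\blds(\alpha,\Iset) = \lds(b, R)$, where $R = \{\Iset(2)\} \cup \{\Iset(3),\ldots,\Iset(4)\}$. A distinct substring of $b$ that contains $\alpha_{\Iset(3)}$ must have its right endpoint in $\{\Iset(3),\ldots,\Iset(4)\}$, while one that contains $\alpha_{\Iset(2)}$ but not $\alpha_{\Iset(3)}$ lies entirely within $\alpha[\Iset(1)\upto\Iset(2)]$ and therefore ends at $\Iset(2)$. Conversely, any distinct substring whose right endpoint is in $R$ trivially contains one of $\alpha_{\Iset(2)}, \alpha_{\Iset(3)}$.

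Next I would check that the intervals $\Jset_0, \ldots, \Jset_{h-1}$ cover $R$ once the $\prev$-augmentation from \cref{prop:blds} is applied. Set $\Kset_\ell = \{x_\ell,\ldots,y_\ell\}$ and $\Kset_\ell^+ = \{\prev_{\Iset(2),\Iset(3)}(x_\ell)\} \cup \Kset_\ell$. Since $x_0 = \Iset(3)$, the definition of $\prev$ gives $\prev_{\Iset(2),\Iset(3)}(x_0) = \Iset(2)$, and so $\Iset(2) \in \Kset_0^+$; for $\ell \ge 1$, $\prev_{\Iset(2),\Iset(3)}(x_\ell) = x_\ell - 1 = y_{\ell-1}$, which already lies in $\Kset_{\ell-1}$. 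Hence $\bigcup_\ell \Kset_\ell^+ = R$, from which
\[
\blds(\alpha,\Iset) = \lds(b, R) = \max_{0 \le \ell \le h-1} \lds(b, \Kset_\ell^+).
\]

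Finally, I would apply \cref{prop:blds} to each $\Kset_\ell^+$ with $k_1 = x_\ell$, $k_2 = y_\ell$ and $t = |\Jset_\ell|$: the triple $(i_1', i_2', k_2')$ defined in the proposition then agrees with $(p_\ell, q_\ell, y_\ell')$, so the tuple $\Jset$ of the proposition coincides with $\Iset_\ell$. The proposition then yields
\[
\lds(b, \Kset_\ell^+) = \Iset(2) - \Iset_\ell(2) + \Iset_\ell(3) - \Iset(3) + \blds(\alpha, \Iset_\ell),
\]
and taking the maximum over $\ell$ gives the stated identity. There is no serious obstacle: all the real analytic content is in \cref{lem:lds} and \cref{prop:blds}, and the sole task for the corollary is the endpoint bookkeeping that shows the $\prev$-augmented partition of $\{\Iset(3),\ldots,\Iset(4)\}$ exactly covers the set $R$ of right endpoints we care about.
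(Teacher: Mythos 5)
Your proof is correct and follows the same essential strategy as the paper: reduce $\blds(\alpha,\Iset)$ to a longest-distinct-substring problem with constrained right endpoints, split the endpoint set along the partition (with $\prev$-augmentation so that the pieces cover $\{\Iset(2)\}\cup\{\Iset(3),\dots,\Iset(4)\}$), and invoke \cref{prop:blds} on each piece. The only difference is cosmetic: you work with the constrained quantity $\lds(b,R)$ on the full bipartite string, whereas the paper first trims $b$ to $\alpha[p_0\upto\Iset(2)]\concat\alpha[\Iset(3)\upto\Iset(4)]$ so that the unconstrained $\lds(b)$ already equals $\blds(\alpha,\Iset)$; your bookkeeping with $\Kset_\ell^+$ is arguably a bit cleaner since it avoids having to argue that an optimal substring of the trimmed $b$ must touch one of the two distinguished positions.
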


\begin{proof}
We may assume the partition is such that $x_0 = \Iset(3)$.
Let $b = a[p_0 \upto \Iset(2)] \concat a[\Iset(3) \upto \Iset(4)]$ so that $\blds(\alpha, \Iset) = \lds(b)$.  
As $\alpha[p_0 \upto \Iset(2)]$ is distinct and the $\Jset_\ell$ partition $\{\Iset(3), \ldots, \Iset(4)\}$, we have
\[
\lds(b) = \max_{0 \le \ell \le h - 1} \lds(b, \prev_{\Iset(2), \Iset(3)}(x_\ell) \cup \Jset_\ell) \enspace.
\]
By \cref{prop:blds}, 
\[
\lds(b, \prev_{\Iset(2), \Iset(3)}(x_\ell) \cup \Jset_\ell) = \Iset(2) - 
\Iset_\ell(2) + \Iset_\ell(3) - \Iset(3) + \blds(\alpha, \Iset_\ell) \enspace .
\]
The corollary follows.
\end{proof}

\begin{algorithm}[!htbp]
\caption{$\texttt{bipartiteLDS}_h(\alpha, \Iset)$}
\label{alg:blds}
\begin{algorithmic}[1]
  \State $i_1 \gets L(\alpha, \Iset, \Iset(2)), r \gets R(\alpha, \Iset, \Iset(3)), 
    n_2' \gets r - \Iset(3) + 1$ \label{alg:line_one}
  \State $j_2 \gets R(\alpha, \Iset, \Iset(2)]$ \label{alg:line_two}
  \State $n_1 \gets \Iset(2) - i_1 + 1, n_2 \gets j_2 - \Iset(3) + 1$. By either working 
  with $\alpha[i_1 \upto \Iset(2)] \concat \alpha[\Iset(3) \upto j_2]$ or its reversal and renaming 
  indices as needed we may assume $n_2 \le n_1$.
  \If{$n_2 = 1$}
  \label{alg:base_case}
  \Comment{Base Case}
    \State $m \gets  (\alpha[\Iset(3)] \in \alpha[i_1 \colon \Iset(2)]) \; ? \; n_1 \upto n_1 + 1$
    \State return $\max(m, n_2')$. \label{alg:return1}
  \EndIf
  \State $\rstart \gets \Iset(3)$
  \State $\currentmax \gets n_2'$
  \For {$\ell = 0; \; \ell < h; \; \ell = \ell + 1$}
    \State $t \gets (\ell < n_2 \bmod h) \; ? \; \ceil{n_2/h} \upto \floor{n_2/h}$
    \State $\rend \gets \rstart + t - 1$
    \State $\lstart \gets L(\alpha, (i_1, \Iset(2), \Iset(3), j_2), \prev_{\Iset(2), \Iset(3)}(\rstart))$
    \State $\lend \gets \min(\Iset(2), \lstart + t - 1)$
    \State $\rmiddle \gets \min(\rend, R(\alpha, (i_1, \Iset(2), \Iset(3), j_2), \suc_{\Iset(2), \Iset(3)}(\lend)])$
    \State $v_\ell \gets \Iset(2) - \lend + \rstart - \Iset(3) + \texttt{bipartiteLDS}_h(\alpha, 
(\lstart, \lend, \rstart, \rmiddle))$
    \State $\currentmax \gets \max(\currentmax, v_\ell)$
    \State $\rstart \gets \rend + 1$
  \EndFor
  \State return $\currentmax$. \label{alg:return2}
\end{algorithmic}
\end{algorithm}

We now prove the main result of this subsection, an upper bound on the quantum time complexity of $\blds$.  This result is in the $\mathsf{QRAG}$ model due to the use of $\mathsf{QRAG}$ gates in Ambainis' element distinctness algorithm 
\cite{Amb07}.

\begin{fact}
\label{fact:ed_binary}
Let $\Sigma$ be an alphabet, $\alpha \in \Sigma^{n^*}$
There is a quantum algorithm that for any $n^*$-valid 2-description $\Iset$ for $\alpha$ with $s(\Iset) = n$ and
$k \in \{\Iset(1),\ldots, \Iset(2)\} \cup \{\Iset(3), \ldots, \Iset(4)\}$ outputs $R(\alpha, \Iset, k)$ with probability at least $9/10$ in time $\tilde O(n^{2/3}) \cdot \qw_{O(n^*)}$ and with $O(n^{2/3} \log(n))$ queries, given oracle access to $\alpha$.  The same holds for $L(\alpha, \Iset, k)$.
\end{fact}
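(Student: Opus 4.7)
The plan is to reduce the problem to $O(\log n)$ calls of Ambainis's element distinctness algorithm by binary searching on the endpoint, exploiting that ``being a distinct substring'' is monotone under prefix extension. Let $b = \alpha[\Iset(1)\upto\Iset(2)] \concat \alpha[\Iset(3)\upto\Iset(4)]$ and reindex $k$ as an index into $b$, so $N = |b| \le n$. Define $f\colon\{k,\dots,N\}\to\{0,1\}$ by $f(r) = 1$ iff $b[k \upto r]$ is a distinct substring. Since every substring of a distinct string is distinct, $f$ is monotone non-increasing, and $R(\alpha,\Iset,k)$ is precisely the largest $r$ with $f(r)=1$. The quantity $L(\alpha,\Iset,k)$ is handled symmetrically by binary searching on the left endpoint with the right endpoint fixed at $k$.

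To evaluate $f(r)$ I would invoke the element distinctness algorithm of \cref{fact:ed} on $b[k \upto r]$, whose length $m$ is at most $n$. Oracle access to this virtual substring is simulated by $O(1)$ index arithmetic mapping a position in $b[k\upto r]$ to the corresponding position of $\alpha$ (either in $\{\Iset(1),\dots,\Iset(2)\}$ or in $\{\Iset(3),\dots,\Iset(4)\}$) and reading $\alpha$ from the memory content register. Each such evaluation uses $O(m^{2/3}) = O(n^{2/3})$ queries, succeeds with probability at least $9/10$, and runs in time $\widetilde{O}(n^{2/3})\cdot \qw_{O(n^*)}$, since the internal quantum-walk workspace of element distinctness has size $O(m) \le O(n) \le O(n^*)$ and every $\mathsf{QRAG}$ access during the call costs at most $\qw_{O(n^*)}$.

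The binary search on $r$ requires $O(\log n)$ comparisons. To achieve the stated query bound $O(n^{2/3}\log n)$ without an extra $\log\log n$ overhead, I would feed the bounded-error test for $f$ into a noisy binary search routine (in the style of Feige--Raghavan--Peleg--Upfal) that tolerates a constant per-comparison error probability and locates the transition point with constant overall success probability in $O(\log n)$ comparisons; multiplying by the per-comparison cost yields $O(n^{2/3}\log n)$ queries and $\widetilde{O}(n^{2/3})\cdot \qw_{O(n^*)}$ time. Alternatively one can boost each test to error $1/(10\log n)$ by $O(\log\log n)$ repetitions and apply a union bound, which is absorbed into the $\widetilde{O}$ of the time bound. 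The main point I would verify carefully is that running element distinctness on the virtual concatenated instance $b[k\upto r]$ does not inflate either bound: the $O(1)$ index translation leaves query and gate counts unchanged, and the algorithm's workspace fits inside the ambient memory of size $\Theta(n^*)$, so all internal $\mathsf{QRAG}$ accesses remain bounded by $\qw_{O(n^*)}$.
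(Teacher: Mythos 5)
Your proposal is correct and takes essentially the same route as the paper: Ambainis' element distinctness algorithm (\cref{fact:ed}) to test whether a candidate substring is distinct, paired with the noisy binary search of Feige et al.\ to locate the transition point in $O(\log n)$ bounded-error comparisons. The paper's proof is just a terser statement of the same idea; your extra remarks on monotonicity of the distinctness predicate and the $O(1)$ index translation for the concatenated virtual string are correct and fill in the same details the paper leaves implicit.
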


\begin{proof}
Using Ambainis' element distinctness algorithm from \cref{fact:ed} we can check if any substring of $\alpha[\Iset(1) \colon \Iset(2)] \concat \alpha[\Iset(3) \colon \Iset(4)]$ is distinct with success probability at least $9/10$ in time $\tilde O(n^{2/3}) \cdot \qw_{O(n^*)}$ and with $O(n^{2/3})$ queries.  We can pair this with the noisy binary search algorithm of Feige et al.\ \cite{FeigeRPU94} to compute $R(\alpha, \Iset, k)$ with probability at least $9/10$ in time $\tilde O(n^{2/3}) \cdot \qw_{O(n^*)}$ and with $O(n^{2/3} \log n)$ queries.
\end{proof}

\begin{theorem}
\label{thm:comp}
Let $\Sigma$ be an alphabet and $\alpha \in \Sigma^{n^*}$.
For any constant integer $h \ge 2$, $\blds(\alpha, \Iset)$ is a 2-decomposable instance $(h, h)$-maximizing problem.  
When $s(\Iset) = n$, the create and completion functions can be computed by a quantum algorithm with oracle access to $\alpha$ in time $\tilde O(n^{2/3}) \cdot \qw_{O(n^*)}$ and with $O(n^{2/3} \log(n))$ queries.
\end{theorem}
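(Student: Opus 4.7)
My proof plan is to exhibit the create function $\delta$ and completion function $\gamma$ required by \cref{thm:rec}, and then to bound their quantum complexity using \cref{fact:ed_binary}.

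The create function is modeled on the preprocessing and loop body of \cref{alg:blds}. Given $(\alpha, \Iset)$ with $s(\Iset) = n$, I first compute $i_1 = L(\alpha, \Iset, \Iset(2))$ and $j_2 = R(\alpha, \Iset, \Iset(2))$, so that the normalized 2-description $\Iset' = (i_1, \Iset(2), \Iset(3), j_2)$ satisfies the distinctness hypothesis of \cref{cor:h}. After a WLOG swap enforcing $n_2 = j_2 - \Iset(3) + 1 \le n_1 = \Iset(2) - i_1 + 1$, I partition $\{\Iset(3), \ldots, j_2\}$ into $h$ intervals $\Jset_0, \ldots, \Jset_{h-1}$ of near-equal size and set $\delta(\alpha, \Iset, \ell) = \Iset_\ell = (p_\ell, q_\ell, x_\ell, y_\ell')$ as in \cref{cor:h}, which requires two further applications of $L, R$ per $\ell$. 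Each $\Iset_\ell$ is a 2-description of a subsequence of $\alpha$ of size $s(\Iset_\ell) \le 2|\Jset_\ell| \le \lceil n/h \rceil$, confirming both the 2-decomposability in $\alpha$ and the $(h,h)$-rate; only $n \le 2$ are small for $m(n) = h \ge 2$, so the constant-smallness condition is satisfied.

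For the completion function I will use
\[
\gamma(\alpha, \Iset, \ell, \Iset_\ell, v) = \max\bigl(n_1,\, n_2',\, \Iset(2) - \Iset_\ell(2) + \Iset_\ell(3) - \Iset(3) + v\bigr),
\]
where $n_2'$ is the length of the longest distinct prefix of $\alpha[\Iset(3) \upto \Iset(4)]$ (obtained from $R(\alpha, \Iset, \Iset(3))$) and $n_1$ is recovered from a fresh call to $L(\alpha, \Iset, \Iset(2))$. The identity $\blds(\alpha, \Iset) = \max_\ell \gamma(\alpha, \Iset, \ell, \Iset_\ell, \blds(\alpha, \Iset_\ell))$ then follows from \cref{cor:h} by a three-way case split on the optimal substring: if it lies entirely in the left half it must end at $\Iset(2)$, hence has length at most $n_1$; if it lies entirely in the right half it must begin at $\Iset(3)$, hence has length at most $n_2'$; otherwise it spans the cut and is captured by the maximum in \cref{cor:h}.

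The complexity analysis is then immediate: both $\delta$ and each invocation of $\gamma$ reduce to $O(1)$ applications of $L, R$ on 2-descriptions of size at most $n$, plus $O(\log n^*)$-bit arithmetic. By \cref{fact:ed_binary} each $L, R$ call runs in time $\widetilde{O}(n^{2/3}) \cdot \qw_{O(n^*)}$ with $O(n^{2/3} \log n)$ queries and success probability at least $9/10$, and a union bound over the $O(1)$ calls keeps the total error within the $1/10$ budget required by \cref{thm:rec}. The main technical hurdle is justifying the three-way case split cleanly, in particular checking that every distinct substring of $\alpha_{\Iset}$ containing $\alpha_{\Iset(2)}$ or $\alpha_{\Iset(3)}$ but not crossing the cut is really captured by $n_1$ or $n_2'$; this reduces to the observation that such a substring must respectively terminate at $\Iset(2)$ or start at $\Iset(3)$, after which the two maximal extents are exactly $n_1$ and $n_2'$.
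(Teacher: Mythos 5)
Your proposal is correct and follows essentially the same path as the paper: compute the normalized endpoints $i_1, j_2$ and then the $h$ sub-descriptions $\Iset_\ell = (p_\ell, q_\ell, x_\ell, y_\ell')$ via $L,R$ calls as in \cref{cor:h}, with all quantum cost coming from \cref{fact:ed_binary}. Two small remarks: the $n_1$ term in your completion function is redundant, since for $\ell = 0$ we have $x_0 = \Iset(3)$ and $p_0 = i_1$, so the $\ell = 0$ term of the maximum in \cref{cor:h} already evaluates to at least $n_1$ (this is why the paper's completion only takes the max with $n_2'$); on the other hand, your explicit WLOG swap to enforce $n_2 \le n_1$ is actually needed to guarantee $s(\Iset_\ell) = O(n/h)$, a point the paper makes in \cref{alg:blds} but omits from the proof of \cref{thm:comp} itself.
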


\begin{proof}
Let $\indices = \{i_1, i_2, j_1, t\}$ with $s(\Iset)=n$.  We define the create function $\delta(\alpha, \indices, \ell)$ for $0 \le \ell \le h-1$.  The create function first computes $j_2$, the largest index in the interval $\{j_1,\ldots, t\}$ such that 
$\alpha[i_2] \concat \alpha[j_1 \upto j_2]$ is distinct.  This computation can be done by a quantum algorithm in time $\tilde O(n^{2/3}) \cdot \qw_{O(n^*)}$ and with $O(n^{2/3} \log(n))$ queries by \cref{fact:ed_binary}.

Let $n_2 = j_2 - j_1 + 1$.  Fix a partition of $\{j_1, \ldots, j_2\}$ into $h$ intervals $\Jset_0, \ldots, \Jset_{h-1}$, where each interval has size either $\ceil{n_2/h}$ or $\floor{n_2/h}$, and the endpoints of $\Jset_\ell$ can be computed from $j_1, j_2, \ell$ in time $O(\log n^*)$.  Let $x_\ell, y_\ell$ be respectively the left and right endpoint of $\Jset_\ell$.  As in \cref{cor:h}, further define 
\begin{align*}
p_\ell &= L(\alpha, i_1, i_2, j_1, j_2, \prev_{i_2, j_1}(x_\ell)), \\
q_\ell &= \min(i_2, p_\ell+|\Jset_\ell|-1), \\
y_\ell'&= \min(y_\ell, R(\alpha, i_1, i_2, j_1, j_2, \suc_{i_2, j_1}(q_\ell))) \enspace.
\end{align*}
With these definitions, we can define the create function as $\delta(\alpha, \Iset, \ell) = \{p_\ell, q_\ell, x_\ell, y_\ell'\}$.
Each of $p_\ell, q_\ell, y_\ell'$ can be computed in time $\tilde O(n^{2/3}) \cdot \qw_{O(n^*)}$ and with $O(n^{2/3} \log(n))$ queries by \cref{fact:ed_binary}, thus the create function can be computed in the same time.

We now define the completion function. The completion function first computes $n_2'$, the length of the longest distinct substring of $\alpha[j_1 \upto t]$ whose left endpoint is $j_1$.  This computation can again be done by a quantum algorithm in time $\widetilde O(n^{2/3}) \cdot \qw_{O(n^*)}$ and $O(n^{2/3} \log(n))$ queries by \cref{fact:ed_binary}.
Note that $\blds(\alpha, \Iset) = \max(n_2', \blds(\alpha, (i_1, i_2, j_1, j_2))$.
Let $v_\ell = \blds(\alpha, (p_\ell, q_\ell, x_\ell, y_\ell'))$ be the evaluation of $\blds$ on the instance defined by $\delta(\alpha, \Iset, \ell)$. Then 
\[
\gamma(\alpha, \Iset, \ell, \delta(\alpha, \Iset, \ell), v_\ell) = 
\max(n_2', i_2 - q_\ell + p_\ell - j_1 + v_\ell) \enspace.
\]
By the fact that $\blds(\alpha, \Iset) = \max(n_2', \blds(\alpha, (i_1, i_2, j_1, j_2))$ and \cref{cor:h} we then have that 
$\blds(\alpha, \Iset) = \max_{0 \le \ell \le h-1} \gamma(\alpha, \Iset, \ell, \delta(\alpha, \Iset, \ell), v_\ell)$ as required.  Thus $\gamma$ is a valid completion function and can be computed by a quantum algorithm in time $\widetilde O(n^{2/3}) \cdot \qw_{O(n^*)}$ and with $O(n^{2/3} \log(n))$ queries. 
\end{proof}

\begin{theorem}
\label{thm:qblds}
Let $\Sigma$ be a finite alphabet and $\alpha \in \Sigma^{n^*}$ be a string.  There is a quantum algorithm that for any $n^*$-valid 2-description $\Iset$ with $s(\Iset)=n$ computes $\blds(\alpha, \Iset)$ in time $\widetilde{O}(n^{2/3}) \cdot \qw_{O(n^*)} + O(\sqrt{n} \log(n^*))$ and with $O(n^{2/3} \cdot \log(n))$ queries.
\end{theorem}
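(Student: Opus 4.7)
The plan is to apply Theorem \ref{thm:rec} with the structural data furnished by Theorem \ref{thm:comp}. For any constant integer $h \ge 2$, Theorem \ref{thm:comp} exhibits $\blds$ as a $2$-decomposable instance $(h,h)$-maximizing problem whose create function $\delta$ and completion function $\gamma$ both admit quantum implementations of time $\widetilde O(n^{2/3}) \cdot \qw_{O(n^*)}$ and query complexity $O(n^{2/3}\log n)$. So I would plug in the parameters $h(n)=m(n)=h$, $d(n)=g(n)=O(n^{2/3}\log n)$, $D(n)=G(n)=\widetilde O(n^{2/3})$, $V(n)=O(1)$, and $M(n^*)=O(n^*)$. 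Since $m(n)=h$ is constant, the base cases are of constant length, and the total memory size $N(n^*)$ required remains $\widetilde O(n^*)$ so that $\qw_{N(n^*)} = \widetilde O(\qw_{O(n^*)})$.

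Next I would unfold the two recurrences returned by Theorem \ref{thm:rec}. For queries,
\begin{equation*}
T(n) \le c\sqrt{h}\bigl(T(n/h) + O(n^{2/3}\log n)\bigr) + O(n^{2/3}\log n),
\end{equation*}
and choosing the constant $h$ larger than $c^6$ makes $\log_h(c\sqrt{h}) < 2/3$, placing the recurrence in the root-dominated regime of the Master theorem. The resulting bound $T(n) = O(n^{2/3}\log n)$ matches the claimed query complexity. For time in the $\mathsf{QRAG}$ model, the recurrence reads
\begin{equation*}
\overline T(n) \le c\sqrt{h}\,\overline T(n/h) + c\sqrt{h}\bigl(\widetilde O(n^{2/3}) \cdot \qw_{O(n^*)} + \log h\bigr) + \widetilde O(n^{2/3}) \cdot \qw_{O(n^*)},
\end{equation*}
with base case $\overline T(O(1)) = O(\log n^* + \qw_{O(n^*)})$. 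Again for $h > c^6$ the $\qw$-dependent contributions are root-dominated, summing to $\widetilde O(n^{2/3}) \cdot \qw_{O(n^*)}$. What remains is the per-level index-loading cost of $O(\log n^*)$, which, aggregated across the recursion tree of total weight $\sqrt{n}$, produces the additive $O(\sqrt{n}\log n^*)$ term.

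The main obstacle is choosing $h$ as a sufficiently large constant so that both recurrences lie in the root-dominated Master-theorem regime, while carefully separating the $\qw$-dependent work from the constant-per-level bookkeeping cost of $O(\log n^*)$ so that the latter aggregates cleanly into the additive $O(\sqrt{n}\log n^*)$ term rather than being absorbed into a coarser $\widetilde O(n^{2/3})$ bound. One must also verify that the interface of Theorem \ref{thm:comp} supplies exactly the bounded-error create and completion subroutines demanded by the hypotheses of Theorem \ref{thm:rec}, including that the workspace for $\delta$ and $\gamma$ fits within the $O(n^*)$-sized content register.
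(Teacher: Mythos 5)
Your proposal is correct and mirrors the paper's own argument: plug the structural data from \cref{thm:comp} into \cref{thm:rec} to get the two recurrences, choose the branching factor $h$ to be a large enough constant (the paper sets $h = 2c^6$, you require $h > c^6$ — the same condition) so that $c\sqrt{h}$ has $\log_h$-exponent strictly below $2/3$, and solve in the work-dominated regime to land on $\widetilde O(n^{2/3})\cdot\qw_{O(n^*)} + O(\sqrt{n}\log n^*)$ for time and $O(n^{2/3}\log n)$ for queries. The only small point to be careful about, present in both your write-up and the paper, is that the leaf contribution strictly aggregates to $n^{1/2+\log_h c}\log n^*$ rather than exactly $\sqrt{n}\log n^*$, but this is harmless since the exponent can be pushed arbitrarily close to $1/2$ and is in any case dominated by the $\widetilde O(n^{2/3})\cdot\qw_{O(n^*)}$ term.
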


\begin{proof}
For convenience, let $f(n) = \widetilde{O}(n^{2/3}) \cdot \qw_{O(n^*)}$ be an upper bound on the time complexity of the create and completion functions from \cref{thm:comp}.
By \cref{thm:rec} and \cref{thm:comp}, the time complexity $\overline{T}(n)$ for $\blds(\alpha, \Iset)$ with $s(\Iset)=n$ satisfies the following recurrence relation for any $h \ge 2$:
\begin{align*}
\overline{T}(1) &= O(\log n^* + \qw_{O(n^*)}) \\
\overline{T}(n) &\le c \sqrt{h}(\overline{T}(n/h) + f(n)) + f(n) \enspace. 
\end{align*}
By taking $h = 2c^6$, the solution to this recurrence becomes $\widetilde{O}(n^{2/3}) \cdot \qw_{O(n^*)} + O(\sqrt{n} \log n^*)$
as claimed.

For the query complexity, let $g(n) = O(n^{2/3} \log(n) \cdot \qw_{O(n^*)}$ be an upper bound on the query complexity of the create and completion functions from \cref{thm:comp}.
By \cref{thm:rec} and \cref{thm:comp}, the query complexity $T(n)$ for $\blds(\alpha, \Iset)$ satisfies the following recurrence relation for any $h \ge 2$:
\begin{align*}
T(1) &= O(1) \\
T(n) &= c \sqrt{h}(T(n/h) + g(n)) + g(n) \enspace.
\end{align*}
Again taking $h = 2c^6$, the solution to this recurrence becomes 
$O(n^{2/3} \cdot \log(n))$.
\end{proof}

\subsection{Application to \lds}
Finally, we can use the algorithm for \blds\ as part of a bottom-up algorithm for \lds.
\begin{theorem} \label{thm:lds}
There is a quantum algorithm that solves \lds\ in time $\widetilde O(n^{2/3}) \cdot \qw_{O(n)}$ and a quantum algorithm that solves \lds\ with $O(n^{2/3} \log(n) \log \log(n))$ queries.
\end{theorem}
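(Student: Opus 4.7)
The plan is to cast $\lds(\alpha)$ as a crossing-maximization problem over dyadic intervals of the input, resolve each crossing subproblem with the $\blds$ algorithm of \cref{thm:qblds}, and assemble the answer via the bottom-up framework of \cref{thm:bu_easy}.

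First I would pad $\alpha$ to length a power of $2$ by repeating its last character; this preserves $\lds(\alpha)$ since the repeated symbol can appear at most once in any distinct substring. For $t \in \{0,\ldots,\log n - 1\}$ and $1 \le k \le 2^t$, define the $n$-valid $2$-description
\[
\Iset_{t,k} = \bigl((k-1)n/2^t+1,\; (2k-1)n/2^{t+1},\; (2k-1)n/2^{t+1}+1,\; kn/2^t\bigr),
\]
with $s(\Iset_{t,k}) = n/2^t$, and set $P(\alpha, t, k) = \blds(\alpha, \Iset_{t,k})$. Any distinct substring $\alpha[i \upto j]$ with $i < j$ lies in a unique dyadic interval with $i$ in its left half and $j$ in its right half, so it is a candidate for $\blds(\alpha, \Iset_{t,k})$ at the corresponding $(t,k)$; single-character distinct substrings are always accounted for because $\blds \ge 1$ (the character $\alpha_{\Iset(2)}$ alone qualifies). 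It follows that $\lds(\alpha) = \max_t \max_{1 \le k \le 2^t} P(\alpha, t, k)$.

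Next, by \cref{thm:qblds} each $P(\alpha, t, k)$ can be computed in time $\widetilde{O}((n/2^t)^{2/3}) \cdot \qw_{O(n)}$ and with $O((n/2^t)^{2/3} \log n)$ queries. Writing these as $\sqrt{n/2^t}\cdot \tau(n, t)$ and $\sqrt{n/2^t} \cdot \sigma(n, t)$ as required by \cref{thm:bu_easy}, we obtain $\tau(n, t) = \widetilde{O}((n/2^t)^{1/6}) \cdot \qw_{O(n)}$ and $\sigma(n, t) = O((n/2^t)^{1/6} \log n)$. The sums $\sum_{t} \tau(n, t)$ and $\sum_{t} \sigma(n, t)$ are geometric series dominated by their $t=0$ terms, giving $\widetilde{O}(n^{1/6}) \cdot \qw_{O(n)}$ and $O(n^{1/6} \log n)$ respectively.

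Finally, applying \cref{thm:bu_easy} with these bounds yields total time
\[
O\bigl(\sqrt{n}\,\log\log(n) \cdot (\log n + \widetilde{O}(n^{1/6}) \cdot \qw_{O(n)})\bigr) = \widetilde{O}(n^{2/3}) \cdot \qw_{O(n)}
\]
and total query complexity $O\bigl(\sqrt{n}\,\log\log(n) \cdot n^{1/6} \log n\bigr) = O(n^{2/3} \log(n) \log\log(n))$, matching the two claims. The only mildly subtle step is the identification $\max_{t,k} \blds(\alpha, \Iset_{t,k}) = \lds(\alpha)$, including the length-one edge case; once this reduction is in place, the rest of the argument is a routine geometric-series bookkeeping that plugs directly into the generic bottom-up template.
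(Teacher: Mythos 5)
Your proposal is correct and follows exactly the same route as the paper's own proof: pad to a power of two, set $P(\alpha,t,k) = \blds(\alpha, \Iset_{t,k})$ on the dyadic $2$-descriptions, invoke \cref{thm:qblds} for each crossing subproblem, and assemble via \cref{thm:bu_easy}; your extra remarks (the length-one edge case, and the geometric-series dominance by $t=0$) are just more explicit bookkeeping of steps the paper leaves implicit.
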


\begin{proof}
We use the bottom-up approach.  Let the input be $a \in \Sigma^n$.  We can pad the input by repeating the last character without changing the length of a longest distinct substring.  Thus we may assume $n$ is a power of 2 without changing the asymptotic complexity.

Let $P(a, t, k)$ be the length of a longest distinct substring of $a$ contained in the interval $\{(k-1)n/2^t+1, \ldots , kn/2^t\}$, with the left endpoint in the left half of this interval and right endpoint in the right half of this interval.
Letting $i_1 = (k-1)n/2^t + 1, i_2 = (2k-1)n/2^{t+1}, j_1 = (2k-1)n/2^{t+1}+1, j_2 = kn/2^t$ and $\Iset = (i_1, i_2, j_1, j_2)$ we see that $P(a, t, k) = \blds(a, \Iset)$. 
Thus by \cref{thm:qblds} there is a quantum algorithm that outputs $P(a, t, k)$ with probability at least $9/10$ in time $\widetilde O(n^{2/3}/2^{2t/3} \cdot \qw_{O(n)})$ and with $O(n^{2/3}/2^{2t/3} \log(n/2^t)$ queries.   
Applying \cref{thm:bu_easy} gives a quantum algorithm for \lds\ with running time $\widetilde O(n^{2/3}) \cdot \qw_{O(n)}$ and a quantum algorithm with query complexity $O(n^{2/3} \log(n) \log \log(n))$.
\end{proof}

\section{Klee's Coverage}
\label{sec:klee}

Given a set $\cal B$ of $n$ axis-parallel hyperrectangles (boxes) in $d$-dimensional real space $\R^d$, the \KMf \ problem asks to compute the volume of the union of the boxes in $\cal B$. A special case is the \KCf \ problem where we are also given a base box $\Gamma$, and the question is to decide whether the union of the boxes in $\cal B$ covers $\Gamma$. 
In 2-dimensions the complexity of the \KMf \  problem is $O(n \log n)$~\cite{Klee77}, and for any constant $d \geq 3$,
Chan~\cite{Chan13} has given an $O(n^{d/2})$ time classical algorithm for it.

\begin{problem}[\KCf ]
Given a set $\cal B$ of $n$ axis-parallel boxes and a base box $\Gamma$ in $\R^d$, is $\Gamma \subseteq \cup_{B \in {\cal B}} B$?
\end{problem}

We give a quantum algorithm for $\KCf$ that achieves an almost quadratic speedup over the classical divide and conquer algorithm of Chan~\cite{Chan13}, when $d \geq 8$.
The speedup is less than quadratic for $5 \leq n \leq 7$, and there is no quantum speedup for $n \leq 4$.

\begin{theorem}
\label{thm:klee}
For every constant $\varepsilon > 0$, the quantum time complexity of the \KCf \ problem is $O(n^{d/4 + \varepsilon} \cdot \qw_{N(n)})$ when $d \geq 8$, and is $O(n^2 \cdot \qw_{N(n)})$ for $5 \leq d \leq 7$, where $N(n) = O(n^{d/2 + \varepsilon})$.
\end{theorem}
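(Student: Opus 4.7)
The plan is to quantize Chan's classical $O(n^{d/2})$ divide and conquer algorithm for \KCf{} by fitting it into the constructible instance framework of \cref{thm:recgen}. Since coverage of $\Gamma$ is an AND over sub-instances (the base box is covered iff every sub-rectangle is covered), its negation is a disjunction, so I would treat non-coverage as a constructible instance $(h,m)$-disjunctive problem and flip the output at the end.

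First I would cast Chan's algorithm in this framework: at each recursive step the algorithm chooses $h(n)$ axis-aligned splitters to partition the current base rectangle $\Gamma$ into sub-rectangles $\Gamma_1, \ldots, \Gamma_{h(n)}$ and assigns to each $\Gamma_j$ the restricted set of boxes clipped against $\Gamma_j$, with boxes fully spanning $\Gamma_j$ in the split direction absorbed at this level rather than propagated. The create function writes these $h(n)$ sub-instances explicitly into quantum memory at classical cost $O(n^2 \log n)$ per level, hence $O(n^2 \log n \cdot \qw_{N(n^*)})$ in the quantum model, while the completion function is the trivial projection to the Boolean result. I would choose parameters so that $h(n) = m(n)^{d/2}$, reproducing Chan's $O(n^{d/2})$ classical bound; concretely, $m(n) = n^{\varepsilon'}$ for small $\varepsilon' > 0$ is sufficient.

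With this setup \cref{thm:recgen} yields the quantum recurrence
\[
\overline T(n^*, n) \le c\sqrt{h(n)}\,\overline T(n^*, n/m(n)) + O(n^2 \log n \cdot \qw_{N(n^*)})
\]
up to lower order terms. After $L = \log_m n$ levels, the homogeneous part contributes $(\sqrt{h})^L = n^{d/4}$, and the create cost at depth $\ell$ is $(\sqrt{h})^\ell (n/m^\ell)^2 = n^2 \cdot m^{(d/4-2)\ell}$. When $d > 8$ the ratio $m^{d/4-2} > 1$, so the deepest level dominates and the sum is $O(n^{d/4})$; when $d = 8$ each level contributes $O(n^2)$ and the sum is $O(n^2 \log_m n)$, absorbed into the $\varepsilon$ slack; when $5 \le d \le 7$ the ratio is less than $1$ and the root dominates at $O(n^2 \log n)$. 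The memory bound $N(n^*) = O(n^{d/2+\varepsilon})$ follows from unrolling the recursion $N(n) = (h(n)+1)\max\{n, N(n/m(n))\}$ prescribed by \cref{thm:recgen}.

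The main obstacle I expect is the faithful implementation of Chan's create step as a bounded-error quantum procedure of cost $O(n^2 \log n \cdot \qw_{N(n^*)})$: the classical algorithm performs hyperplane splits, box clipping, and sub-instance assembly over $n$ boxes, and these must be executed correctly with respect to the memory-address layout prescribed by \cref{thm:recgen}, where the $j$-th constitutive string must be placed at the address prefix $uv_j$. A secondary subtlety is justifying that non-coverage admits a clean disjunctive formulation over the $h(n)$ sub-rectangles without any blow-up in $h(n)$, so that a single application of the theorem suffices per recursion level.
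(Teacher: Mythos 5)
Your high-level plan is the right one: formulate non-coverage as a constructible instance $(h,m)$-disjunctive problem with a trivial completion function, compute the constitutive sub-instances explicitly via the create step at cost $O(n^2 \cdot \qw_{N(n^*)})$, and feed the resulting recurrence into \cref{thm:recgen}. That matches the paper's strategy. However, the proposal treats the crux of the create step as an implementation chore ("hyperplane splits, box clipping, sub-instance assembly") when in fact it is the entire mathematical content of Chan's algorithm, and two ingredients that make it work are missing.

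First, you never justify the sub-instance size bound. Cutting $\Gamma$ into $h$ sub-boxes and restricting $\mathcal{B}$ to those intersecting each $\Gamma_j$ does \emph{not} automatically give $|\mathcal{B}_j| \approx n/m$: a box can straddle many of the $\Gamma_j$. Chan's algorithm first runs a \emph{simplification} step that iteratively eliminates slabs (boxes covering $\Gamma$ in all dimensions but one), and this is what guarantees that every surviving box has a $(d-2)$-face meeting $\operatorname{Int}(\Gamma)$. It then introduces a weight function $w(f)=h^{(i+j)/d}$ on $(d-2)$-faces and cuts along a single axis (cycling through axes at successive levels) so that each sub-box receives at most a $1/h$ fraction of the weight of the orthogonal faces; the resulting geometric/combinatorial lemma is that the total weight, and hence the number of boxes, drops by a factor $h^{2/d}$ per cut. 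Your "boxes fully spanning $\Gamma_j$ in the split direction absorbed at this level" gestures at simplification but misses the iterative slab-elimination (removing one slab can create new ones) and misses the weight function entirely. Without those, the central claim $h(n) = m(n)^{d/2}$ has no justification, and the whole recurrence is unsupported.

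Second, your parameter choice $m(n) = n^{\varepsilon'}$, $h(n) = n^{\varepsilon' d/2}$ diverges from the paper in a way that creates additional obligations. The paper fixes $h$ to a large \emph{constant} and obtains constant $m = h^{2/d}/\kappa$, where $\kappa$ is the constant bounding the total face weight by $\kappa n$; making $h$ a large constant is precisely what lets the $\varepsilon$ slack absorb the hidden multiplicative constant $c$ from \cref{cor:seo}, and what keeps $\kappa$ (which depends on $h$ through the face-weight upper bound) a constant. With polynomial $h(n)$ the weight function $h^{(i+j)/d}$, the bound $\kappa$, and the cyclic-axis bookkeeping all change from level to level, so the $h^{2/d}$ weight-reduction lemma would need to be re-proved in this non-constant regime; you should not take it for granted. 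Finally, \cref{thm:recgen} requires all $h(n)$ constitutive strings to have the \emph{same} length $\lceil n/m(n)\rceil$, which Chan's cut does not produce on its own; the paper inserts an explicit padding step to enforce this, and you should too.
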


Informally it is easy to see why the problem can be solved by a divide and conquer algorithm: if the base box $\Gamma$ is divided into an arbitrary number of base sub-boxes, then $\Gamma$ is not covered by ${\cal B}$ if and only if there is a base sub-box which is not covered by ${\cal B}$. However, such a division should also ensure that for every base sub-box, the number of boxes in ${\cal B}$ intersecting it also gets appropriately reduced, a highly non-trivial task. Chan's classical divide and conquer algorithm achieves exactly that by using a weighting scheme, and our quantum divide and conquer algorithm essentially follows the classical algorithm of Chan. Still, we need a couple of modifications.
Besides the quantum procedures in our algorithm, we will divide an instance into $h$ sub-instances and not into $2$, where $h$ is a sufficiently large constant to be specified later. We also use padding in the create step to ensure the uniform size of the subproblems. 

\begin{proof}
We start with a few preliminary remarks. First observe that the boxes in $\cal B$ cover $\Gamma$ exactly when they cover $\Int{\Gamma}$, the interior of $\Gamma$.
By an instance of length $n$ we mean an instance $a = (\Gamma, {\cal B})$ where the number of boxes in $\cal B$ is $n-1$, with the base box $\Gamma$ bringing the total length of the instance to $n$.
As we have observed, $\KCf$ is a natural constructible instance $(h,m)$-disjunctive problem with a trivial completion function, where $h$ can be any constant, and given $h$, the value of the best $m$ requires some reasoning. 
Therefore the algorithm itself is fully specified by the create step that produces the $h$ constitutive strings. We suppose that the input size is $n^*$. 

In Chan's description, the create step contains two distinct parts, called {\em simplification} and {\em cut}, and we will complete this with a third part, {\em padding}.
In fact, the algorithm also contains a preprocessing step which should be accounted for in the final complexity.
In the preprocessing, we sort the input boxes in each dimension. Since, in the $\mathsf{QRAG}$ model, coordinates can be compared classically in time $O(\qw_{N(n^*)})$, this takes time $O(n^* \log n^* \cdot \qw_{N(n^*)})$.

In the simplification step, we transform the problem into an instance without {\em slabs}, that is without boxes in $\cal B$ which cover $\Gamma$ in each dimension but one.
Formally, a $k$-{\em slab} is a box whose restriction to $\Gamma$ is of the form $\{(x_1, \ldots , x_d) \in \Gamma : a_1 \leq x_k \leq a_2)\}$, for some real numbers $a_1 < a_2$. After the elimination of a $k$-slab, the $k$th coordinates of all the base boxes and the other boxes are changed accordingly, that is all regions between $a_1$ and $a_2$ are also eliminated.
The elimination of a slab might create new slabs, that is box which was not a slab before the elimination might become one after.
Therefore the elimination process is done iteratively. We start with the family of boxes containing all original slabs, we eliminate the elements of the family one by one, while after each elimination we add the newly created slabs. The process ends when we have an instance without any slab or we find a slab which fully covers $\Gamma$.
This can be done classically in $O(n^2 \cdot \qw_{N(n^*)})$ time for pre-sorted inputs.
In~\cite{Chan13} the simplification is described without dealing with the elimination of the freshly created slabs during the elimination process of the initial ones, and it is claimed that this can be done in time $O(n)$. While doing the overall elimination process in time $O(n^2 \cdot \qw_{N(n^*)})$ is easy, doing it in linear time requires further proof that we don't attempt here.
The slower elimination process changes the result only for $d<8$.

Without the elimination of the slabs, it is true that every box that doesn't fully cover $\Gamma$ has a $(d-1)$-face intersecting $\Int{\Gamma}$, but slabs don't have faces of dimension lower than this intersecting $\Int{\Gamma}$. 
After the elimination of the slabs, the remaining boxes have at least one $(d-2)$-face intersecting $\Int{\Gamma}$ which makes possible a better complexity analysis of the algorithm.

The cut step is done with the help of a weight function that we define for simplified instances.
Let ${\cal F(B)}$ denote the set of $(d-2)$-faces of the boxes in ${\cal B}$ that intersect $\Int{\Gamma}$.
For every $f \in {\cal F(B)}$ which is orthogonal to the $i$th and $j$th axes, we define its weight as $w(f) = h^{(i+j)/d}$, and we define the weight of the instance  $a = (\Gamma, {\cal B})$ as $w(a) =  \sum_{f \in {\cal F(B)}} w(f).$ 
The weight of a $(d-2)$ face is at most $h^2$, and the number of $(d-2)$-faces per box is a constant (depending on $d$). Therefore there exists a constant $\kappa >0$, also depending on $d$, such that for every instance $a$ of length $n$, we have $w(a) \leq \kappa n$.

The cut steps cycle over the $d$ dimensions.
From instance $a = (\Gamma, {\cal B})$ we create $h$ subproblems $b^{(k)} = (\Gamma_k , {\cal B}_k)$, for $k \in [h].$
We cut the base box $\Gamma$ into $h$ sub-boxes $\Gamma_1, \ldots  \Gamma_h$ by parallel hyperplanes which are orthogonal to some axis. 
For technical simplicity, all the cuts will be described as orthogonal to the first axis, and at the end of each iteration, the axes names $(1,2, \ldots, d)$ will be cyclically permuted as $(d, 1, \ldots , d-1)$.  Let $W$ be the sum of the weights of the $(d-2)$-faces in ${\cal F(B)}$ orthogonal to the first axis.
The cutting is done by $h-1$ hyperplanes $x_1 = v_1, \ldots , x_1 = v_{h-1}$ where the values $v_1, \ldots , v_{h-1}$ are chosen such that for every $1 \leq k \leq h$,  the total weight of the $(d-2)$ faces in ${\cal F(B)}$ orthogonal to the first axis and intersecting  $\Int{\Gamma_{k}}$ is at most $W/h$. 
The boxes belonging to $b^{(k)}$ are chosen as ${\cal B}_k = \{ B \in {\cal B} : B \cap \Int{\Gamma_{k}} \neq \emptyset \}.$ 
The cutting can be done classically in $O(n \cdot \qw_{N(n^*)})$ time.

After the cut step, the weight of a $(d-2)$-face parallel with the first axis expressed in the permuted axes' names
is multiplied by $h^{-2/d}$, while the weight of a $(d-2)$-face orthogonal to the first axis is multiplied by $h^{(d-2)/d}$.
Since the total weight of $(d-2)$ faces orthogonal to the first axis inside the interior of each $\Gamma_k$ is divided by a factor of $h$, altogether the total weight of the $(d-2)$-faces inside each sub-box decreases by a factor of $h^{2/d}$, that is $w(b^{(k)}) \leq w(a)/h^{2/d}$, for every $k \in [h]$.  

In order to be able to apply \cref{thm:recgen} we have to ensure that the constitutive strings have all the same length, and we enforce this via padding. Let us recall that $w(a) \leq \kappa |a|$, for some constant $\kappa > 0$, thus, for every $k \in [h]$, we have
$$
|b^{(k)}| \leq w(b^{(k)}) \leq w(a)/h^{2/d} \leq \kappa |a|/h^{2/d}.
$$
For every $k \in [h]$, we define the constitutive string $a^{(k)}$ by repeating some arbitrarily chosen box in $ {\cal B}_k$ an appropriate number times to make $|a^{(k)}|$ be equal to $\kappa |a|/h^{2/d}.$ Clearly negative instances remain negative and positive instances remain positive by this transformation.
We can now apply \cref{thm:recgen} with the constant $m = h^{2/d} \cdot \kappa^{-1}$ and we have an algorithm that solves the problem. 

Let $\overline{T}(n^*, n)$ denote the time complexity of the algorithm, maximized over all \emph{instances} of length $n$.
Then from \cref{thm:recgen} we have $\overline{T}(n^*, n) = O(\qw_{N(n^*)})$, when $n$ is small, and 
\[
\overline{T}(n^*, n) \leq c h^{1/2} (\log h + \overline{T}(n^*, n/ (h^{2/d} \cdot \kappa^{-1}))) + O(n^2 \cdot \qw_{N(n^*)})\enspace ,
\]
when $n$ is large.
For $d \geq 8$, the solution of this is
$$
\overline{T}(n^*, n) = O(n^{d/4 + \varepsilon} \cdot \qw_{N(n^*)}),
$$
if the constant $h$ is sufficiently large, for example if $h > \max\{c, \kappa\}^{d^2/\varepsilon}$.
For $5\le d \leq7$, we get $\overline{T}(n^*, n) = O(n^2 \cdot \qw_{N(n^*)})$. From a similar recursion for the memory size, it is easy to see that  $N(n) = O(n^{d/2+\epsilon})$ when $d\ge 5$.
The result follows when we take $n=n^*$.
\end{proof}

\section{Quantum complexity of some rectangle problems in $\APSPc$}
\label{sec:apsp}
\subsection{The problems and classical complexity}
\label{subsec:classical}

In this section, we will address the quantum complexity of two related maximization problems on $n^2$ weighted points arranged in the plane in a grid. 
They respectively look for a sub-rectangle with maximum sum, and for a sub-rectangle where the sum of the 4 corner values with alternating signs is maximum. We will need the following definitions where we think about the points as contained in a matrix.

Given an $n \times n$ matrix $B$ and four indices $1 \leq i \leq j \leq n$ and $1 \leq k \leq \ell \leq n$, we denote by $F_B((i,k),(j,\ell))$ the 4-{\em combination} $B_{ik} + B_{j \ell} - B_{i \ell} - B_{jk}$ and by $S_B((i,k),(j,\ell))$ the sum
$$\sum_{i \leq u \leq j, k \leq v \leq \ell} B_{uv}.$$

\begin{problem}[\MaxSubMf \ (\MaxSubM)]
\label{prob:maxsubm}
 Given an $n \times n$ matrix $B$ of integers find a maximum weight submatrix, that is $$\arg\max_{i \leq j, k \leq  \ell } S_B((i,k),(j,\ell)).$$
\end{problem}

\begin{problem}[\MaxFourCf \ (\MaxFourC)]
\label{prob:maxfourc}
 Given an $n \times n$ matrix $B$ of integers, find a maximum 4-combination, that is $$\arg\max_{i \leq j, k \leq  \ell } F_B((i,k),(j,\ell)).$$
\end{problem}

The $\MaxSubM$ (also called Maximum Subarray) problem is a special case of a basic problem in geometry, where given $N$ points in the plane, one has to find an axis-parallel rectangle that maximizes the total weight of the points it contains~\cite{DGM96, BCNP14}. The best-known algorithms run in time $O(N^2)$ for this general problem. The particular geometry of the $N =n^2$ points arranged in a grid allows faster solutions for the $\MaxSubM$ problem and the best algorithms for it run in time $O(N^{3/2}) = O(n^3)$~\cite{TT98, Tak02}. The $\MaxFourC$ problem was defined in~\cite{BDT16}, where it was shown that it can be solved in $O(n^3)$ time. It is also another generalization of the \SSST \ problem. 

Before stating our quantum complexity results, we will dwell on the classical complexity of these problems. 
In~\cite{BDT16} they are analyzed in the context of fine-grained complexity, and it is proven that in a strong sense, both $\MaxSubM$ and $\MaxFourC$ have the same deterministic complexity as the well-known $\APSPf$ problem.  
Fine-grained complexity establishes fine complexity classes among computational problems by picking some well-studied problem $P$ which can be solved in time $\widetilde{O}(T(n))$, for some polynomial $T(n)$, but for which no algorithm is known in time $T(n)^{1- \varepsilon}$, for any $\varepsilon > 0$.
Then the fine-grained (quantum) complexity class with respect to $P$ consists of all problems $X$ such that $P$ and $X$ are (quantum) {\em sub-$T(n)$ equivalent}, that is inter-reducible by some appropriate (quantum) {\em sub-$T(n)$ reductions}, ensuring that either, both, or neither of them can be solved in (quantum) time $T(n)^{1- \varepsilon}$, for some $\varepsilon > 0$. The existence of such reductions also implies that all problems in the class are solvable in time $\widetilde{O}(T(n))$. We won't give here the technical definition of the appropriate reduction.

The class $\APSPc$ of problems that are sub-$n^3$ equivalent in the above sense to $\APSP$ is one of the richest in fine-grained complexity theory~\cite{VW18, Vas19}. It contains various path, matrix, and triangle problems, and also $\MaxSubM$ and $\MaxFourC$.  We enumerate here some important problems from this class.

\begin{problem}[\APSPf]
\label{prob:apsp}
  Given a weighted graph on $n$ vertices, compute the length of a shortest path between $u$ and $v$, for all vertices $u,v$.
\end{problem}

\begin{problem}[\MP]
\label{prob:mp}
  Given two $n \times n$ matrices, compute their matrix product over the (min, +) semiring.
\end{problem}

\begin{problem}[\Met]
\label{prob:met}
  Given an $n \times n$ matrix, decide if it defines a metric.
\end{problem}

\begin{problem}[\MaxT]
\label{prob:maxt}
 Given a weighted graph on $n$ vertices, find a maximum weight triangle, where the weight of a triangle is the sum of its edges.
\end{problem}

\begin{theorem}[\cite{VW18, BDT16}]
{\bf Problems~\ref{prob:maxsubm}, \ref{prob:maxfourc}, \ref{prob:apsp}, \ref{prob:mp}, \ref{prob:met}, \ref{prob:maxt} }
are in $\APSPc$.
\end{theorem}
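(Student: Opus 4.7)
The plan is to exhibit a chain of sub-$n^{3}$ reductions linking all six problems to \APSP. Since each of the problems is already known to admit an $\widetilde{O}(n^{3})$ classical algorithm (for \APSP\ via Floyd--Warshall, for \MP\ directly from the definition, for \Met\ and \MaxT\ via $n$ instances of \MP, and for \MaxSubM\ and \MaxFourC\ as cited in \cref{subsec:classical}), the only content is to build the reductions; membership in $\APSPc$ then follows from the definition.

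The first block of reductions is classical folklore and I would handle it in the following order. First, \APSP\ and \MP\ are sub-cubic equivalent: given a min-plus product, encode it as a three-layer graph whose cross-layer edges are the matrix entries, so that the shortest path from layer~$1$ to layer~$3$ computes the product entry; conversely, \APSP\ reduces to $O(\log n)$ iterated min-plus squarings of the distance matrix. Second, \Met\ reduces to \MP\ by computing $B \star B$ and comparing entrywise with $B$, since $B$ is metric iff $B_{ij} \le (B\star B)_{ij}$ for all $i,j$ (after sub-cubically verifying symmetry, nonnegativity, and zero diagonal); the opposite direction, \MP\ to \Met, is the standard reduction that pads a min-plus instance with appropriate shifts to turn min-plus product computation into a metricity check, as in~\cite{VW18}. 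Third, \MaxT\ reduces to \MP\ by computing the min-plus (in our case, max-plus) product $A \star A$ and searching over the diagonal entries of $(A \star A) + A^{T}$; conversely, \MP\ to \MaxT\ uses the standard tripartite graph gadget that lifts matrix product queries to triangle queries by encoding each entry on a dedicated edge with a large additive offset.

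The nontrivial block is inserting \MaxSubM\ and \MaxFourC\ into this web, and this is where I expect the main obstacle. I would follow the approach of~\cite{BDT16}. For \MaxFourC\ the key observation is that the expression $B_{ik}+B_{j\ell}-B_{i\ell}-B_{jk}$ has exactly the same additive structure as the weight of a rectangle in a Monge-like tableau, so that maximizing it over $(i,k)<(j,\ell)$ can be encoded as a \MaxT-type problem on a graph whose vertices are indexed by rows, columns, and a dummy coordinate, with edge weights chosen so that each triangle weight equals a $4$-combination; conversely, \MaxT\ can be reduced to \MaxFourC\ by arranging a weighted adjacency matrix so that the alternating-sign corner sum recovers a triangle weight. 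For \MaxSubM\ the reduction goes through $2$D prefix sums: if $P_{i,j}$ is the prefix-sum matrix of $B$, then $S_{B}((i,k),(j,\ell)) = P_{j,\ell}-P_{i-1,\ell}-P_{j,k-1}+P_{i-1,k-1}$, i.e.\ a $4$-combination of $P$, reducing \MaxSubM\ to \MaxFourC; in the other direction, one takes the discrete second difference of an input to \MaxFourC\ to recover an equivalent \MaxSubM\ instance. Both transformations are computable in $O(n^{2})$ time, hence are sub-$n^{3}$.

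Finally, I would compose: \APSP, \MP, \Met, \MaxT\ are pairwise sub-cubic equivalent by the first block; and \MaxSubM, \MaxFourC\ are sub-cubic equivalent to \MaxT\ by the second block and transitivity of sub-cubic reductions. The main technical care is in checking that each reduction produces instances of size $O(n)$ (or at worst $\widetilde{O}(n)$) from instances of size $n$, with running time strictly less than $n^{3-\varepsilon}$ for some $\varepsilon>0$; this is routine for the prefix-sum and gadget-based reductions, and is the delicate part in the \MaxFourC--\MaxT\ direction, where the coefficient offsets must be chosen large enough to dominate parasitic triangles but small enough not to blow up the word size.
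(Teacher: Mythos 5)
The paper does not supply a proof of this statement at all: it is stated as a theorem with a citation to \cite{VW18, BDT16}, and the ``proof'' is the reference. What you have written is a reconstruction of the reduction chain that those references establish, which is a reasonable thing to do but is not comparable to a proof in the paper because there isn't one. Your overall decomposition matches the literature: (i) the folklore pairwise sub-cubic equivalences among \APSP, \MP, \Met\ and \MaxT, and (ii) linking \MaxSubM\ to \MaxFourC\ via the prefix-sum/second-difference correspondence and \MaxFourC\ to \MaxT\ via the gadget of~\cite{BDT16}. The prefix-sum correspondence you invoke, namely $S_B((i,k),(j,\ell)) = F_P((i-1,k-1),(j,\ell))$ for the prefix-sum matrix $P$, is in fact exactly the identity the paper itself uses later in the proof of \cref{thm:msm}, so that piece is certainly right.

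One local error in your first block: for \MaxT\ via \MP, the maximum triangle weight is not found on the diagonal of $(A\star A) + A^{T}$ under elementwise addition, since the $(i,i)$ entry of that matrix is $\max_{k}(A_{ik}+A_{ki}) + A_{ii}$, i.e.\ the heaviest $2$-cycle through $i$, not a triangle. You want either the maximum over \emph{all} entries $(i,j)$ of $(A\star A)_{ij} + A_{ji}$, or equivalently the maximum diagonal entry of the triple max-plus product $(A\star A)\star A$. With that correction, and with the caveat that the \MaxFourC-to-\MaxT\ gadget and the offset bookkeeping are only gestured at rather than verified, your sketch is a faithful summary of why these six problems belong to $\APSPc$; but the paper itself offers nothing to compare it against beyond the bibliography entry.
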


\subsection{Quantum complexity}
\label{subsec:quantum}

The study of fine-grained complexity equally makes sense in the quantum case, and indeed several recent works have studied quantum fine-grained complexity~\cite{ACLWZ20, BPS21, BLPS22}. These papers often address the quantum complexity of problems in the same classical equivalence class and~\cite{ABLPS22} has specifically considered $\APSPc$. Of course, it is not guaranteed at all that classically equivalent problems remain equivalent in the quantum model of computing, and this is exactly what happens
with $\APSPc$. While all known problems in the class receive some quantum speedup, the measure of the speedup can differ from problem to problem. It turns out that many of the problems in $\APSPc$ can be solved either in time $\widetilde{O}(n^{5/2})$ or in time $\widetilde{O}(n^{3/2})$ by simple quantum algorithms, and concretely $\APSP$ falls in the former category. In some cases, the classical fine-grained reductions also work for establishing the relevant quantum equivalences.
This is illustrated by the following propositions. 

\begin{proposition}
The problems $\MP$ and $\APSPf$  can be solved in quantum time $\widetilde{O}(n^{5/2}),$ and they are quantum { sub}-$n^{5/2}$ equivalent.
\end{proposition}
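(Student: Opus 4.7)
The plan is to prove the three claims in order. For the $\widetilde{O}(n^{5/2})$ quantum algorithm for $\MP$, I would compute each of the $n^2$ entries of the $(\min,+)$-product $C = A \otimes B$ independently: the $(i,j)$ entry is $\min_{k \in [n]}(A_{ik} + B_{kj})$, obtained by applying quantum minimum finding (\cref{fact:minimum}) over the $n$ candidate values. Each such minimization runs in time $O(\sqrt{n}(\log n + \qr_{O(n^2)}))$, which is $\widetilde{O}(\sqrt{n})$ under the standing assumption that memory access is polylogarithmic. To keep the union-bound error over all $n^2$ entries below $1/10$, each minimum finding must be amplified to success probability $1 - 1/\poly(n)$, costing only an extra $\log n$ factor absorbed by the tilde. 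Summing over the $n^2$ entries yields total time $\widetilde{O}(n^{5/2})$.

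For $\APSPf$, and simultaneously for the reduction $\APSPf \to \MP$ needed in the equivalence, I would use the textbook repeated-squaring reduction. Letting $D$ be the weighted adjacency matrix of the graph (with $D_{ii}=0$ and $D_{ij}=+\infty$ for missing edges), the entry $D^{k}_{ij}$ in the $(\min,+)$ algebra is the length of a shortest path from $i$ to $j$ using at most $k$ edges; since any shortest path uses at most $n-1$ edges, it suffices to compute $D^{n-1}$ by $O(\log n)$ successive $(\min,+)$-squarings, with $\widetilde{O}(n^2)$ bookkeeping overhead. Plugging in the quantum $\MP$ algorithm above gives total time $\widetilde{O}(n^{5/2})$. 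For the converse direction $\MP \to \APSPf$, I would use the standard tripartite graph construction: form a directed graph on vertex set $U \cup V \cup W$ with $|U|=|V|=|W|=n$, putting an edge of weight $A_{ik}$ from $u_i$ to $v_k$ and an edge of weight $B_{kj}$ from $v_k$ to $w_j$; then the shortest-path distance from $u_i$ to $w_j$ equals $(A\otimes B)_{ij}$, so a single $\APSPf$ call on the $3n$-vertex graph suffices, inflating the input size by only a constant factor. Since both reductions have polylogarithmic overhead, they preserve the $n^{5/2-\varepsilon}$ threshold and witness quantum sub-$n^{5/2}$ equivalence of the two problems.

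The main subtlety will be careful bookkeeping of bounded-error composition. The $\APSPf$ algorithm makes $O(\log n)$ calls to $\MP$, each of which in turn makes $n^2$ bounded-error quantum minimum findings; naively the errors would compound, so inner subroutines must be amplified to $1/\poly(n)$ failure probability and the amplification cost must be tracked at each level. Because each amplification costs only a logarithmic factor and there are $O(\log n)$ levels, the final running time remains $\widetilde{O}(n^{5/2})$, but this is the place where one must be most attentive. Some care is also required for the memory model: $\MP$ is naturally implementable in the read-only $\mathsf{QRAM}$ model with quantum memory of size $O(n^2)$, whereas the intermediate matrices produced by repeated squaring must be written back to memory, strictly requiring the $\mathsf{QRAG}$ model whose $\qw$ access cost is polylogarithmic by assumption and therefore harmless.
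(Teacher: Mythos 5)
Your proposal is correct and follows essentially the same route as the paper: quantum minimum finding (Fact~4, \cref{fact:minimum}) gives the $\widetilde{O}(n^{5/2})$ bound for $\MP$ entrywise, and the classical sub-cubic reductions between $\MP$ and $\APSPf$ (repeated squaring in one direction, the tripartite-graph gadget in the other) carry over to the quantum setting with only polylogarithmic overhead, which is exactly what the paper invokes by citing~\cite{FM71, Mun71}. The paper's proof is terse and leaves the error-amplification and memory-model bookkeeping implicit, whereas you make these explicit, but there is no difference in the underlying argument.
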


\begin{proof}
The upper bounds come from quantum minimum finding (\cref{fact:minimum}) and the classical sub-$n^3$ reductions of~\cite{FM71, Mun71} also establish the quantum sub-$n^{5/2}$ equivalence. 
These reductions show that if one of the problems can be solved in (classical or quantum) time ${O}(n^2 +T(n))$ then the other can be solved in time $\widetilde{O}(n^2 +T(n))$.
\end{proof}

\begin{proposition}
The problems $\MaxT$ and $\Met$ can be solved in quantum time $\widetilde{O}(n^{3/2})$, and they are quantum { sub}-$n^{3/2}$ equivalent.
\end{proposition}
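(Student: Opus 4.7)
The plan is to follow the same template as the previous proposition on $\MP$ and $\APSPf$. Both upper bounds will come from quantum search and minimum finding over the $O(n^3)$ ordered triples that naturally parameterize each problem, and the equivalence will be inherited from the known classical sub-$n^3$ reductions in $\APSPc$, observing that they are oracle-preserving and hence lift to the quantum setting.

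For the upper bound on $\MaxT$, there are $\binom{n}{3}$ triangles in the complete weighted graph on $n$ vertices, and the weight of any triangle can be computed with $3$ queries and $O(1)$ arithmetic operations. Applying quantum maximum finding (\cref{fact:minimum}) over this set yields an algorithm with $\widetilde{O}(\sqrt{n^3}) = \widetilde{O}(n^{3/2})$ queries and time. For the upper bound on $\Met$, we check separately the $O(n)$ diagonal constraints $M_{ii} = 0$, the $O(n^2)$ symmetry constraints $M_{ij} = M_{ji}$, and the $O(n^3)$ triangle inequalities $M_{ij} \leq M_{ik} + M_{kj}$; each family of constraints is tested by quantum search (\cref{fact:search}) looking for a counterexample, the dominant cost being the $\widetilde{O}(n^{3/2})$ search over the triangle inequalities.

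For the quantum sub-$n^{3/2}$ equivalence, we invoke the classical sub-$n^3$ reductions between $\MaxT$ and $\Met$ collected in~\cite{VW18}. These reductions are combinatorial: an instance of one problem is mapped to one or a constant number of instances of the other by simple local transformations such as sign flips, additive shifts of the weights, and block embeddings into slightly larger matrices. Consequently, each entry of the reduced instance can be computed from $O(1)$ entries of the original instance using $O(1)$ arithmetic operations, and the reductions provide oracle access to the transformed instance without explicitly constructing it. This means that if one of the two problems can be solved in quantum time $O(n^2 + T(n))$, so can the other in time $\widetilde{O}(n^2 + T(n))$, which is precisely the quantum sub-$n^{3/2}$ equivalence.

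The main obstacle is the oracle-access argument itself, since our $\widetilde{O}(n^{3/2})$ algorithms are sub-quadratic and cannot afford to materialize the reduced instance. This is handled by the local nature of the classical reductions; any binary search needed to turn the decision problem $\Met$ into a tool for extracting the maximum triangle weight, or vice versa, incurs only a logarithmic overhead absorbed by the $\widetilde{O}$ notation, and the bounded-error search and minimum-finding subroutines compose as in \cref{cor:seo} and \cref{cor:meo}.
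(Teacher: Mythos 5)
Your overall approach matches the paper's: quantum search/minimum finding over $O(n^3)$ candidates for the upper bounds, and appealing to the classical sub-$n^3$ reductions of~\cite{VW18} for the equivalence. The upper-bound paragraph is fine (the paper just says ``Grover search over $n^3$ elements,'' and your version spelling out maximum finding for $\MaxT$ and Grover search over constraints for $\Met$ is the same thing).

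There is, however, a genuine quantitative error in the equivalence argument. You state that the reductions show ``if one of the two problems can be solved in quantum time $O(n^2 + T(n))$, so can the other in time $\widetilde{O}(n^2 + T(n))$, which is precisely the quantum sub-$n^{3/2}$ equivalence.'' That form was correct for the $\MP$--$\APSPf$ pair, where $n^2 < n^{5/2}$, but it is the \emph{wrong} form here: for $T(n) = n^{3/2 - \varepsilon}$, the additive $n^2$ dominates, so an $O(n^2 + T(n))$-time reduction would only transfer a sub-$n^{3/2}$ algorithm into an $O(n^2)$ algorithm, which does not establish sub-$n^{3/2}$ equivalence. What the paper actually uses for $\MaxT$ and $\Met$ is the stronger statement that the reductions of~\cite{VW18} preserve time as $O(T(n)) \leftrightarrow \widetilde{O}(T(n))$, with no $n^2$ overhead; this is exactly what your oracle-locality observation in the last paragraph buys you, so your final paragraph is in fact inconsistent with your own stated conclusion. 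Replace ``$O(n^2 + T(n))$'' by ``$O(T(n))$'' (and $\widetilde{O}(n^2 + T(n))$ by $\widetilde{O}(T(n))$) and the argument lines up with the paper's.
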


\begin{proof}
The upper bound can be obtained by Grover search (\cref{fact:search}) over $n^3$ elements and the classical sub-$n^3$ reductions of~\cite{VW18} establish also the quantum sub-$n^{3/2}$ equivalence. Their reductions show that one of the problems can be solved in (classical or quantum) time ${O}(T(n))$ if and only if the other can be solved in time $\widetilde{O}(T(n))$.
\end{proof}

However, for some problems finding the ``right" quantum upper bounds or establishing fine-grained equivalences requires innovative quantum algorithms. For example~\cite{ABLPS22} used variable time quantum search and specific data structures to construct quantum algorithms for {\textsc{$\Delta$-Matching Triangles}} and {\textsc{Triangle Collection}}, two problems from $\APSPc$. 
In particular, they have shown that in the $\mathsf{QRAG}$ model the former problem can be solved in quantum time $\widetilde{O}(n^{3/2 + o(1)})$ when $\omega(1) \leq \Delta \leq n^{o(1)}$, and the latter problem has a quantum algorithm of complexity $\widetilde{O}(n^{3/2})$.

It turns out that, from the point of view of quantum complexity, the two rectangle problems of our concern are quite interesting. In their study, we will suppose that the absolute value of all input integers is polynomially bounded in the input size. As optimizing logarithmic factors are not relevant for this discussion, for simplicity, we choose to state our results taking both $\qr_{n^2}$ and $\qw_{O(n^2)}$ to be $O(\log^2n)$ when $N(n)$ is a polynomial in $n$ (\cref{sec:instances}).

We will show that the quantum time complexity of $\MaxSubM$ is $O(n^{2} \log^2 n)$ and that its quantum query complexity is $\Omega(n^2).$ One novelty of this is that, to our knowledge, this is the first example of a problem from $\APSPc$ whose quantum time complexity is not of the order of $n^{5/2}$ or $n^{3/2}$.
The second novelty is that our lower bound is established for the query complexity and therefore it is unconditional.
The possibility of proving a matching query lower bound is a specific feature of the quantum complexity since the classical time complexity of $\MaxSubM$ is believed to be super-linear in the input size.

The interesting aspect of the ${O}(n^{3/2} \log^{5/2} (n))$ quantum algorithm we obtain for $\MaxFourC$ is that it is based on quantum divide and conquer.
More precisely, we exploit the fact that the one-dimensional analog of this problem is  $\SSST$ for which, in \cref{thm:three}, we have designed an $O(n^{1/2} \log^{5/2}(n))$ time quantum algorithm based on that technique. 

\begin{theorem}
\label{thm:msm}
The quantum time complexity of $\MaxSubM$ is $O(n^{2}\log^2 n)$ in the $\mathsf{QRAG}$ model. 
Its quantum query complexity is $\Theta(n^2).$ 
\end{theorem}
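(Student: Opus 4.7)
The result splits into a time upper bound, a matching query upper bound, and an $\Omega(n^2)$ query lower bound. The whole idea on the algorithmic side is to precompute a $2$D prefix-sum table so that each submatrix sum becomes an $O(1)$ lookup, and then run quantum maximum finding over all $O(n^4)$ submatrices. On the lower bound side, the whole idea is to reduce from PARITY on $n^2$ bits.

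For the upper bound I would build the table $P_{jk} = \sum_{u \le j,\, v \le k} B_{uv}$ using the standard recurrence $P_{jk} = P_{(j-1)k} + P_{j(k-1)} - P_{(j-1)(k-1)} + B_{jk}$, processing cells in row-major order and storing $P$ in $\mathsf{QRAG}$ memory. This preprocessing reads each entry of $B$ exactly once and performs $O(n^2)$ writes, each costing $O(\log^2 n)$ under the assumed $\qw_{O(n^2)} = O(\log^2 n)$ regime, giving $O(n^2 \log^2 n)$ time and $O(n^2)$ queries. Afterwards, inclusion--exclusion gives $S_B((i,k),(j,\ell)) = P_{j\ell} - P_{(i-1)\ell} - P_{j(k-1)} + P_{(i-1)(k-1)}$ through four $\mathsf{QRAG}$ lookups controlled on the outer search indices, yielding an oracle of time $O(\log^2 n)$ per evaluation. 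Plug this oracle into quantum maximum finding (\cref{fact:minimum}) over the $O(n^4)$ index tuples $(i,j,k,\ell)$, assigning value $-\infty$ to tuples that violate $i \le j$ or $k \le \ell$. The search performs $O(\sqrt{n^4}) = O(n^2)$ oracle calls, adding $O(n^2 \log^2 n)$ time and no further queries to the input register. Summing the two phases yields the claimed bounds $O(n^2 \log^2 n)$ time and $O(n^2)$ queries.

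For the $\Omega(n^2)$ query lower bound I would reduce from PARITY on $N = n^2$ bits, whose quantum query complexity is $N/2$ by the polynomial method of Beals, Buhrman, Cleve, Mosca and de Wolf. Given $x \in \{0,1\}^{n^2}$, set $B_{ij} = x_{(i-1)n + j}$, so that each query to $B$ costs one query to $x$. Every entry is non-negative, so the full matrix $(1,n,1,n)$ is a maximizer and the corresponding optimum value is $|x|_1$; as is standard for optimization problems this value is returned by $\MaxSubM$ together with the argmax, and in any case it can be recovered from the argmax by rerunning the prefix-sum preprocessing of the upper bound at no asymptotic extra cost. The low-order bit of $|x|_1$ is $\mathrm{PARITY}(x)$, so a sub-$n^2$-query algorithm for $\MaxSubM$ would contradict the PARITY lower bound.

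No individual step is technically deep. The only care needed is to ensure that the four $\mathsf{QRAG}$ lookups inside the inner oracle really act coherently on the outer maximum-finding registers; this is routine in the $\mathsf{QRAG}$ model but worth stating explicitly. The two bounds then close to give $\Theta(n^2)$ quantum query complexity.
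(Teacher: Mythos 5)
Your upper bound is essentially the paper's: build the 2D prefix-sum matrix in $O(n^2 \cdot \qw_{O(n^2)})$ time, and then run quantum maximum finding over the $O(n^4)$ index tuples with an $O(1)$-lookup oracle, giving $O(n^2 \log^2 n)$ time. The only cosmetic difference is that the paper expresses the oracle via the 4-combination $F_C$ of the prefix-sum matrix $C$, and you write the equivalent inclusion--exclusion identity directly.

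Your lower bound, however, has a genuine gap. The problem as defined in the paper asks for the $\arg\max$, not the optimal value. In your reduction all entries of $B$ are non-negative, so the whole matrix $(1,n,1,n)$ is always an argmax; the returned index tuple therefore carries \emph{no information about the input} and certainly does not reveal $|x|_1$. Your fallback --- ``recover the value from the argmax by rerunning the prefix-sum preprocessing'' --- does not repair this: that recovery itself costs $\Theta(n^2)$ queries to $B$, so a hypothetical $o(n^2)$-query algorithm for $\MaxSubM$ would only yield an $o(n^2)+\Theta(n^2)=\Theta(n^2)$-query algorithm for PARITY, which is vacuous as a lower bound. The paper avoids this by reducing from Majority on $n^2$ bits and embedding the input block $B$ alongside a fixed reference block $D$ (with about $\floor{n^2/2}$ ones) inside a larger $(2n+1)\times(2n+1)$ matrix $C$, with negative barrier entries ensuring any maximizer lies entirely inside $B$ or entirely inside $D$. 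Then the \emph{location} of the argmax --- whether it lies in the $B$ block or the $D$ block --- reveals the Majority value, without ever needing the numerical optimum. To fix your argument along your original lines, you would have to either change the reduction so that the location of the argmax is informative (as the paper does), or argue separately that the value version of the problem reduces to the argmax version at no query cost, which is false for your all-non-negative instances.
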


\begin{proof}
Let $B$ be an $(n,M)$ matrix. We define the $(n+1,M+1)$ matrix $C$ by $C_{ik} = S_B((1,1),(i,k))$, for $1 \leq i,k \leq n$, and 
we set $C_{i0}=C_{0i} = 0$, for all $0 \leq i \leq n$.
Then we have $B_{ik} = C_{ik} - C_{i-1k} - C_{ik-1} + C_{i-1k-1}$.
Therefore $C_{ik} = B_{ik} + C_{i-1k} + C_{ik-1} - C_{i-1k-1}$ can be computed by dynamic programming in time $O(n^2\cdot\qw_{O(n^2)})$, as memory access and addition both take time $\qw_{O(n^2)}$ (see \cref{sec:instances}).
We have then, for $1 \leq i < j \leq n$ and $1 \leq k < \ell \leq n$, $$S_B((i,k),(j,\ell)) = F_C((i-1,k-1),(j,\ell)).$$
Therefore the maximum submatrix, over indices verifying $i < j $ and $k < \ell $, can be found in time  $O(n^2(\log  n+ \qw_{O(n^2)})) = O(n^2\log^2 n)$ by quantum maximum search (\cref{fact:minimum}). 

For the query lower bound we will make a reduction from the Boolean majority function $\Maj(x_1, \ldots, x_m)$ which is by definition $1$, if $\sum_{i=1}^m x_i > \floor{m/2}$.
It is known that the quantum query complexity of $\Maj$ on $m$ bits is $\Omega(m)$~\cite{BBCMW01}. Suppose we are given $n^2$ bits $x_{1}, \ldots x_{n^2}$ that we arrange into a matrix $B$.
We define a $(2n+1) \times (2n+1)$ matrix $C$ as follows. We put $B$ into the intersection of the first $n$ rows and first $n$ columns.
Into the intersection of the first $n$ rows and last $n$ columns we put an $n \times n$ matrix $D$ which has $\floor{n^2/2}$ 1's and 
$\ceil{n^2/2}$ 0's. We put $C_{j,n+1}= -n-1$ for $1 \leq j \leq n$, and we put -1 into the rest of $C$. Because of the negative barrier put into the column $n+1$ and the negative elements in the lower half of $C$, its maximum submatrix is either fully inside $B$ or fully inside $D$.  
Therefore $\Maj(x_1, \ldots, x_{n^2}) = 1$ if and only if the maximum submatrix in $C$ has value greater than $\floor{n^2/2}$.
\end{proof}

\begin{theorem}
\label{thm:mfc}
The quantum time complexity of \ \MaxFourC ~is ${O}(n^{3/2} \log^{5/2} n)$ in the $\mathsf{QRAM}$ model. 
\end{theorem}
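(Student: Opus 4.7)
The plan is to reduce \MaxFourC\ to an outer quantum maximum over row-pairs, where each inner subproblem is an instance of \SSST\ on a virtual array that can be queried through $B$. Observe the algebraic identity: for any fixed $1 \le i \le j \le n$, if we define the array $a^{(i,j)} \in \Z^n$ by $a^{(i,j)}_m = B_{jm} - B_{im}$, then for every $k \le \ell$ we have
\[
F_B((i,k),(j,\ell)) = B_{ik} + B_{j\ell} - B_{i\ell} - B_{jk} = a^{(i,j)}_\ell - a^{(i,j)}_k.
\]
Hence $\max_{k \le \ell} F_B((i,k),(j,\ell))$ is exactly the value returned by \SSST\ run on $a^{(i,j)}$ (taking $k=\ell$ gives $0$, so this covers the non-strict inequality), and $\max_{i \le j, k \le \ell} F_B((i,k),(j,\ell))$ is obtained by maximizing this quantity over the $O(n^2)$ pairs $(i,j)$.

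The algorithm therefore proceeds as follows. For each pair $(i,j)$, the virtual array $a^{(i,j)}$ is never materialized; any query to $a^{(i,j)}_m$ is implemented with two $\mathsf{QRAM}_{n^2}$ accesses to $B$ plus a subtraction, at total cost $O(\qr_{n^2})$. Applying \cref{thm:three} to this virtual array gives a bounded-error quantum algorithm that outputs $\max_{k \le \ell}(a^{(i,j)}_\ell - a^{(i,j)}_k)$ in time $O(\sqrt{n \log n} \cdot \lambda_2(n,\qr_{n^2}))$. With our standing assumption that $\qr_{n^2} = O(\log^2 n)$, this simplifies to $O(\sqrt{n}\, \log^{5/2} n)$, because $\lambda_2(n,\log^2 n) = O(\log^2 n)$ via the first term of the minimum.

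To combine the $O(n^2)$ inner calls, we invoke quantum maximum finding with an erroneous oracle (\cref{cor:meo}) over the index set $\{(i,j) : 1 \le i \le j \le n\}$ of size $J = O(n^2)$, where the oracle is the bounded-error \SSST\ routine above with time $\tau = O(\sqrt{n}\, \log^{5/2} n)$. The resulting running time is
\[
O\!\left(\sqrt{J}\,(\log J + \tau)\right) = O\!\left(n\,(\log n + \sqrt{n}\, \log^{5/2} n)\right) = O(n^{3/2}\, \log^{5/2} n),
\]
with overall success probability at least $9/10$. Once the optimal pair $(i^*,j^*)$ is returned, we run the inner \SSST\ routine one more time on $a^{(i^*,j^*)}$ to recover the optimal $(k^*,\ell^*)$; this costs only $O(\sqrt{n}\, \log^{5/2} n)$ and is absorbed. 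No step presents a real obstacle; the only thing to keep in mind is that the error-tolerance hypothesis of \cref{cor:meo} is met because \cref{thm:three} already provides a subroutine of success probability at least $9/10 > 7/10$, and the virtual-oracle implementation of $a^{(i,j)}$ does not change the query or time complexity orders.
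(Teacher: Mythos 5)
Your proposal is correct and takes essentially the same approach as the paper: define the virtual row-difference array $a^{(i,j)}_m = B_{jm} - B_{im}$ (the paper's $A^{ij}$), observe $F_B((i,k),(j,\ell)) = a^{(i,j)}_\ell - a^{(i,j)}_k$, solve the inner \SSST\ instance via \cref{thm:three}, and wrap it with erroneous-oracle maximum finding (\cref{cor:meo}) over the $O(n^2)$ pairs $(i,j)$. The only small point to be careful about, which the paper handles explicitly, is that \SSST\ as defined uses strict inequality $k<\ell$, so one must take $\max\{m,0\}$ to cover the $k=\ell$ case; your parenthetical remark gestures at this but should be made precise.
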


\begin{proof}
Let us be given an $n \times n$ matrix $B$. 
For every $1 \leq i  \leq j \leq n$, we define the array $A^{ij}$ by $A^{ij} = B_{j} - B_{i}$, where $B_{i}$ denotes the $i$th row of $B$.
Then
$$
B_{ik} + B_{j \ell} - B_{i \ell} - B_{jk} = A^{ij}_{\ell} - A^{ij}_k,
$$
and therefore 
$$
\max_{i \leq j, k \leq  \ell } F_B((i,k),(j,\ell)) = \max_{i \leq j} \max_{k \leq \ell } A^{ij}_{\ell} - A^{ij}_k.
$$
By \cref{thm:three}, we can determine $\max_{k < \ell } A^{ij}_{\ell} - A^{ij}_k$, in time $O(\sqrt{n\log n}\cdot\lambda_2(n,\qr_{n^2}),$ for  fixed $i \leq j$.   
Let $m$ be this value, then $\max_{k \leq \ell } A^{ij}_{\ell} - A^{ij}_k = \max\{m,0\}$, taking into account also the case
$k=\ell$.
Therefore by \cref{cor:meo} we can run quantum maximum finding on the indices $i \leq j$ for the function $f(i,j) = \max_{k \leq \ell } A^{ij}_{\ell} - A^{ij}_k$ in time O($n^{3/2}(\log n)^{1/2}\cdot \lambda_2(n,\qr_{n^2})) = O(n^{3/2}\log^{5/2}n)$.
\end{proof}

We observe that the classical fine-grained reduction of~\cite{BDT16} from $\MaxT$ to $\MaxFourC$ is also a sub-$n^{3/2}$ reduction.
Indeed, they show that if $\MaxFourC$  can be solved in (classical or quantum) time $\widetilde{O}(T(n))$ then $\MaxT$ can be solved in time $\widetilde{O}(n +T(n))$.
Therefore if we believe that there is no sub-$n^{3/2}$ quantum algorithm for $\MaxT$ then this also applies for $\MaxFourC$. However, proving an $\Omega (n^{3/2})$ query lower bound for $\MaxT$ could be a hard if not impossible task. Indeed, while the best-known quantum algorithm for finding a triangle in an unweighted graph has time complexity $O(n^{3/2} \log n)$, query algorithms of much lower complexity are known for this problem~\cite{MSS07, Bel12, LMS17, Gall14}, and the situation could be similar for $\MaxT$. 
We leave it as an open problem whether proving such a query lower bound for $\MaxFourC$ is possible.

\begin{openq}
What is the quantum query complexity of $\MaxFourC$?
\end{openq}

\subsection*{Acknowledgments}
This research is supported by the National Research Foundation, Singapore, and A*STAR under its CQT Bridging Grant and its Quantum Engineering Programme under grant NRF2021-QEP2-02-P05. AB is supported by the Latvian Quantum Initiative under European Union Recovery and Resilience Facility project no. 2.3.1.1.i.0/1/22/I/CFLA/001 and the QuantERA project QOPT.
TL is supported in part by the ARC DP grant DP200100950.  TL thanks CQT for their hospitality during a visit where part of this work was conducted.
\bibliographystyle{alpha}
\bibliography{references}

\end{document}